\documentclass[a4paper, 12pt, twoside]{article}
\usepackage[a4paper, margin=0.8in]{geometry}
\geometry{hscale=0.7,vscale=0.7,centering}
\usepackage[utf8]{inputenc}
\usepackage[T1]{fontenc}
\usepackage[english]{babel}
\usepackage{textcomp} 
\usepackage{graphicx}
\DeclareGraphicsExtensions{.jpg,.mps,.pdf,.png,.gif}
\usepackage[french]{varioref} 
\usepackage[]{amsmath,amssymb,amsfonts}

\usepackage{float}
\usepackage{amsthm}
\usepackage{bbm}
\usepackage{dsfont}
\usepackage{color}
\usepackage{natbib}
\usepackage{titlesec}

\setcounter{secnumdepth}{4}

\usepackage[hyperindex,breaklinks]{hyperref}
\usepackage{slashbox}

\newcommand{\Ben}{\begin{enumerate}}
\newcommand{\Een}{\end{enumerate}}
\newcommand{\Bit}{\begin{itemize}}
\newcommand{\Eit}{\end{itemize}}
\newcommand{\Beq}{\begin{equation}}
\newcommand{\Eeq}{\end{equation}}
\newcommand{\Ba}{\begin{align*}}
\newcommand{\Ea}{\end{align*}}
\newcommand{\Mb}{\mathbf}

\newcommand{\Bs}{\boldsymbol}
\newcommand{\Mbb}{\mathbb}
\newcommand{\Mr}{\mathrm}

\newtheorem{Th}{Theorem}

\newtheorem{Prop}{Proposition}

\newtheorem{Rq}{Remark}

\title{Stochastic derivative estimation for max-stable random fields}
\date{November 3, 2020}

\begin{document}

\author{
~~ Erwan Koch\footnote{EPFL, Institute of Mathematics, EPFL SB MATH MATH-GE, MA B1 457 (Bâtiment MA), Station 8, 1015 Lausanne,
Switzerland. \newline Email: erwan.koch@epfl.ch}
~~ Christian Y. Robert\footnote{Laboratory in Finance and Insurance - LFA CREST - Center for Research in Economics and Statistics, ENSAE, Paris, France. \newline Email: christian-yann.robert@ensae.fr}
}

\maketitle

\begin{abstract}
We consider expected performances based on max-stable random 
fields and we are interested in their derivatives with respect to the spatial dependence parameters of those fields. Max-stable fields, such as the Brown--Resnick and Smith fields, are very popular in spatial extremes. 
We focus on the
two most popular unbiased stochastic derivative estimation approaches: the
likelihood ratio method (LRM) and the infinitesimal perturbation analysis
(IPA). LRM requires the multivariate density of the
max-stable field to be explicit, and IPA necessitates the
computation of the derivative with respect to the parameters for each simulated value.
We propose convenient and tractable conditions ensuring the validity of LRM and IPA in the cases of the Brown--Resnick and Smith field, respectively. Obtaining such conditions is intricate owing to the very structure of max-stable fields. Then we focus on risk and dependence measures, which constitute one of the several frameworks where our theoretical results can be useful. We perform a simulation study which shows that both LRM and IPA perform well in various configurations, and provide a real case study that is valuable for the insurance industry.

\medskip

\noindent \textbf{Key words:} Infinitesimal
Perturbation Analysis; Likelihood Ratio Method; Max-stable random fields; Monte Carlo computation,
Risk assessment.
\end{abstract}

\section{Introduction}

Sensitivity analysis (SA) is essential in many fields and constitutes an active research area. It consists in the quantitative assessment of how
changes in a specific model parameter impact the so-called expected performance. The expected performance is the expectation of a function of an underlying stochastic model, and SA aims at computing its derivatives with respect to the model parameters. If the expected performance is excessively sensitive to some
critical parameters, it warns the decision maker not to be overly confident in the obtained value, especially if there is a huge
uncertainty on these parameters. It also informs him/her that energy should
be invested to find more reliable estimates of these parameters (e.g., by
developing new inference methods). By the delta method, the knowledge of the derivative of the expected performance with respect to a parameter of interest allows one to translate confidence bounds for this parameter into confidence bounds for the expected performance.
Finally, if the aim is to maximize the expected performance with respect to some parameter, knowing its derivative is often required as most optimization
algorithms are based on gradient methods.

To the best of our knowledge, SA was never considered for expected performances based on extreme-value models, although this is of high necessity. Indeed, such models are by essence generally fitted using a small amount of data, resulting in a large uncertainty in the parameter estimates. 
For a practical introduction about extreme-value theory, see, e.g., \cite{coles2001introduction}. Extreme-value models are valuable in many applications. Max-stable random fields \citep[e.g.,][]{haan1984spectral, de2007extreme}, which constitute an extension of univariate extreme-value theory to the infinite-dimensional (e.g., spatial) setting, are, for instance, well suited for the modelling of the pointwise maxima of variables having a spatial extent such as environmental ones. Such a modelling is useful for quantifying the impact of natural disasters, which is essential for insurance companies and civil authorities, especially in a context of climate change. However, owing to the complex structure of max-stable fields, their multivariate density is in general not
available 
for more than two or three sites, making estimation highly non-trivial. Commonly used estimators are not asymptotically efficient (in the sense of the Cram\'{e}r-Rao bound), which, combined with the uncertainty stemming from the scarcity of data mentioned above, can lead to estimates that are far from the true value.
It is therefore essential to assess the 
sensitivity of any expected performance based on max-stable fields with respect to the
parameters. The purpose of this paper is precisely to provide tractable conditions enabling such an assessment and to illustrate this in a case where the expected performance is a risk or dependence measure. 




We focus on expected performances that do not have any closed-form expression, entailing that assessment of their
derivative cannot be achieved using differentiation or finite differences
based on the analytical expression, but requires
simulation-based approaches. There are
essentially three classic stochastic derivative estimation methods. The
first one is based on finite-differences. Corresponding estimators
involve a bias-variance trade-off and require simulating at multiple
parameter values, making this methodology less powerful than the two others.
The second approach, referred to as likelihood ratio method (LRM), is based
on the derivatives of the density function associated with the simulation
model. The third one, called infinitesimal perturbation analysis (IPA),
relies on computing the derivative with respect to the parameter of each
simulated value (and then averaging them); IPA is thus also
referred to as a sample path differentiation 
method. Contrary to
IPA where the parameter is considered as purely structural, in LRM it is
viewed as a parameter of the probability measure. LRM requires the
density function to be explicit and differentiable and, on the other hand, IPA requires differentiability of the sample paths  with respect to the parameter, which is a quite strong condition. It is argued in \citet[][Section 7.4]{glasserman2013monte} that LRM estimators often have a larger variance than IPA estimators, explaining why IPA is generally
considered as the best derivative estimator. A huge literature is dedicated to LRM and IPA; excellent general references
are in particular \cite{asmussen2010}, Chapter VII, and \cite%
{glasserman2013monte}, Chapter 7. Both approaches are widely applied to many
fields, for instance to finance where many risk hedging strategies involve
computing sensitivities of option prices to the underlying assets' prices
and other parameters (the so-called Greeks). \cite{broadie1996estimating}
and \cite{chen2001efficient} use IPA for option pricing and mortgage-backed
securities, respectively. Motivated by financial applications, \cite%
{glasserman2010sensitivity} investigate LRM in the case where the relevant
densities are only known through their characteristic functions or Laplace
transforms.

To our knowledge, SA, and thus LRM\ and IPA, have not yet been considered for expected performances based on extreme-value models, a fortiori max-stable fields. Among both methods, none can be used for all max-stable fields systematically. LRM is often ruled out when the random performance involves values
of the field at more than two or three sites since the multivariate density is not explicitly available. However, there are notable exceptions such as the
Brown--Resnick random field \citep{brown1977extreme,
kabluchko2009stationary} or the extremal t random field \citep{opitz2013}.
The main technical challenge to apply LRM is to show that
interchange between derivative (with respect to the parameter) and
integration over the space of possible values of the field at the sites considered is feasible. On the other hand, IPA\ involves showing that the paths of the random field are differentiable with respect to the spatial dependence parameters, which is often arduous as max-stable fields arise as the pointwise maxima over an infinite number of latent random fields. Nevertheless, when the paths of these latent fields are sufficiently smooth, such as for the Smith random field \citep{smith1990max}, it is possible. The Smith field belongs to the class of Brown--Resnick fields, but, being a border case, its properties in terms of availability of a closed-form multivariate density and smoothness of the paths are very different than for most Brown--Resnick fields.


Our main theoretical contribution consists in proposing convenient conditions ensuring the validity of LRM and IPA for estimating the derivatives of expected performances based on the Brown--Resnick field and the Smith field, respectively. The Brown--Resnick field is very flexible and is one of the most appropriate models for environmental data among currently known max-stable models \citep[e.g.,][]{davison2012statistical}. We focus on the derivatives with respect to the spatial dependence parameters, since showing the validity of LRM or IPA for estimating the derivatives with respect
to the marginal parameters (those of the generalized extreme-value
distribution) is easy (under mild assumptions). Our conditions are as
tractable as possible; e.g., in the case of IPA for the Smith field, the condition involves the derivative of the expected performance with
respect to the field values but not to the spatial dependence parameters. Obtaining such practical conditions is difficult owing to the complex structure of max-stable fields.

Our second contribution pertains to the context of risk assessment in insurance or finance. Our results are insightful for studying the sensitivity of risk or dependence measures based, e.g., on insured losses triggered by extreme events having a spatial extent (typically such as weather events), or the sensitivity of
prices of event-linked securities such as catastrophe bonds. Indeed, many risk or dependence measures as well as prices can be written as expected performances. After providing a general discussion, we focus on a specific dependence measure for insured losses due to extreme wind speeds. This measure has analytical derivatives for both the Brown--Resnick and the Smith fields, and hence allows us to compare the LRM and IPA estimates with the true values. We implement LRM and IPA (adapting when necessary existing simulation algorithms) and perform a thorough numerical study which shows that both estimation methods are very accurate. Finally, we propose a concrete application which is valuable for actuarial practice. Using reanalysis wind speed data, we study the sensitivities of the aforementioned dependence measure in an area centred over the Ruhr region in Germany. This application highlights the importance of assessing the sensitivity of risk or dependence measures in real practice. 

The remainder of the paper is organized as follows. Section \ref
{Subsec_EstimationSensMaxstable} first recalls useful results about max-stable fields and presents the concept of expected performance. Then it introduces LRM and IPA with their associated examples of max-stable fields (Brown--Resnick and Smith, respectively) and states our conditions guaranteeing the validity of both methods in these cases. Section \ref{Sec_RiskAssessment} is devoted to risk assessment applications. After a general discussion, we introduce the aforementioned specific
dependence measure, we present the simulation study and we finally expose our real case study. Section \ref{Sec_Conclusion} briefly summarizes our contribution and presents some perspectives. Throughout the paper, we shall use the following notations. Let ${}^{\prime
} $ denote transposition. For $\mathbf{x}=(x_{1},\dots
,x_{d})^{\prime }\in \mathbb{R}^{d}$, $\left\Vert \mathbf{x}\right\Vert ^{2}=%
\mathbf{x}^{\prime }\mathbf{x}=\sum_{i=1}^{d}x_{i}^{2}$, for a
matrix $A\in \mathbb{R}^{d\times d}$, $\left\Vert A\right\Vert =\sup
\{\left\Vert A\mathbf{x}\right\Vert :\mathbf{x}\in \mathbb{R}^{d}$ such that 
$\left\Vert \mathbf{x}\right\Vert =1\}$ and, for a positive definite
symmetric matrix $\Sigma \in \mathbb{R}^{d\times d}$, $\left\Vert \mathbf{x}%
\right\Vert _{\Sigma ^{-1}}^{2}=\mathbf{x}^{\prime }\Sigma ^{-1}\mathbf{x}$. Furthermore, $\bigvee$ denotes the supremum when the latter is taken over a countable set. Finally, $\overset{d}{=}$ denotes equality in distribution. In the case of random fields, by distribution we mean the set of all finite-dimensional distributions. All proofs can be found in the appendix.

\section{Stochastic derivative estimators}

\label{Subsec_EstimationSensMaxstable}

\subsection{Max-stable random fields and associated expected performance}

A random field $X$ on $\mathbb{R}^{d}$ with non-degenerate marginals is
called max-stable if there are continuous functions $a_{n}(\mathbf{\cdot }%
)>0 $ and $b_{n}(\mathbf{\cdot })$ on $\mathbb{R}^{d}$ such that if $%
X_{1},\ldots, X_n $ are independent copies of $X$ then
\begin{equation*}
\bigvee_{i=1}^{n}\frac{X_{i}-b_{n}}{a_{n}}\overset{d}{=}X,\qquad n=1,2,\ldots,
\end{equation*}%
(pointwise maxima). A random field $Y$ on $\mathbb{R}^{d}$ with standard Fr\'{e}chet margins (i.e., $\Mbb{P}(Y(\Mb{x}) \leq y)=\exp(-1/y)$, $y>0$, $\Mb{x} \in \Mbb{R}^d$) is max-stable if 
\begin{equation*}
\bigvee_{i=1}^{n}n^{-1}Y_{i}\overset{d}{=}Y,\qquad n=1,2,\ldots,
\end{equation*}
where $Y_{1},\ldots, Y_n$ are independent copies of $Y$; such a field is said to be simple max-stable. It is well-known that
there exist continuous functions $\eta (\mathbf{\cdot })$, $\tau (\mathbf{%
\cdot })>0$ and $\xi (\mathbf{\cdot })$ on $\mathbb{R}^{d}$, called the
location, scale and shape functions such that%
\begin{equation}
X(\mathbf{x})\overset{d}{=}\left\{ 
\begin{array}{ll}
\left( \eta (\mathbf{x})-\tau (\mathbf{x})/\xi (\mathbf{x})\right) +\tau (%
\mathbf{x})Y(\mathbf{x})^{\xi (\mathbf{x})}/\xi (\mathbf{x}), & \quad \xi (%
\mathbf{x})\neq 0 \\ 
\eta (\mathbf{x})+\tau (\mathbf{x})\log (Y(\mathbf{x})), & \quad \xi (%
\mathbf{x})=0%
\end{array}%
\right. .  \label{Eq_Link_Maxstb_Simple_Maxstab1}
\end{equation}

Any simple max-stable random field $Y$ on $\mathbb{R}^{d}$ can be written %
\citep[e.g.,][]{haan1984spectral} as 
\begin{equation}
Y\overset{d}{=}\bigvee_{i=1}^{\infty }U_{i}Z_{i},
\label{Eq_Spectral_Representation_Stochastic_Processes}
\end{equation}%
where the $(U_{i})_{i\geq 1}$ are the points of a Poisson point process on $%
(0,\infty )$ with intensity function $u^{-2}\mathrm{d}u$ and the $%
Z_{i},i\geq 1$, are independent copies of a non-negative random field $Z$
such that, for all $\mathbf{x}\in \mathbb{R}^{d}$, $\mathbb{E}[Z(\mathbf{x}%
)]=1$. Conversely, any field of the form %
\eqref{Eq_Spectral_Representation_Stochastic_Processes} is simple max-stable. We assume that the distribution of $Z$ depends on a spatial dependence
parameter $\Bs{\theta }\in \mathbb{R}^{L}$, and we now denote $Z$ by $Z_{%
\Bs{\theta}}$ (as well as $Y$ by $Y_{\Bs{\theta}}$). The
dependence of $Z_{\Bs{\theta}}$ with respect to $\Bs{\theta}$ will
be explicitly described below.

We consider $M$ sites $\mathbf{x}_{1},\ldots ,\mathbf{x}_{M}\in \mathbb{%
R}^{d}$, and let $\mathbf{Y}_{\Bs{\theta}}=\left( Y_{\Bs{\theta}}(%
\mathbf{x}_{1}),\ldots ,Y_{\Bs{\theta}}(\mathbf{x}_{M})\right) ^{\prime
}$. It is known that the distribution function of $\mathbf{Y}_{\Bs{\theta}}$ is
\begin{equation*}
F_{\Bs{\theta}}\left( \mathbf{y}\right) =\exp \left( -V_{\mathbf{\theta 
}}(\mathbf{y})\right) ,\quad \mathbf{y}\in (0,\infty )^{M},
\end{equation*}%
where $V_{\Bs{\theta}}$ is the so-called exponent measure function given by
\begin{equation*}
V_{\Bs{\theta}}(\mathbf{y})=\mathbb{E}\left[ \bigvee_{i=1}^{M}\frac{Z_{%
\Bs{\theta}}(\mathbf{x}_{i})}{y_{i}}\right] ,\quad \mathbf{y}\in
(0,\infty )^{M}.
\end{equation*}%
If the $M$-th order partial derivatives of $V_{\Bs{\theta}}$ exist,
then the density function of $\mathbf{Y}_{\mathbf{\theta }}$ may be derived
by the Fa\`{a} di Bruno's formula for multivariate functions. Let $\mathcal{I}%
=\{1,\ldots ,M\}$. We denote by $\Pi $ the set of all partitions of $%
\mathcal{I}$ and, for a partition $\pi \in \Pi $, $B\in \pi $\ means that $B$ is
one of the blocks\ of the partition $\pi $. Moreover, for any set $B\subset \mathcal{I}$ and $\mathbf{y}\in
(0,\infty )^{M}$, we let $\mathbf{y}_{B}=\left( y_{j}\right) _{j\in B}$. The density
function of $\mathbf{Y}_{\Bs{\theta}}$ is then given by%
\begin{equation}
f_{\Bs{\theta}}\left( \mathbf{y}\right) =\exp \left( -V_{\mathbf{\theta 
}}(\mathbf{y})\right) \sum_{\pi \in \Pi }\left( -1\right) ^{|\pi
|}\prod\limits_{B\in \pi }\frac{\partial ^{|B|}}{\partial \mathbf{y}_{B}}V_{%
\Bs{\theta}}(\mathbf{y}), \label{Eq_Density_MaxStable}
\end{equation}%
where $|\pi |$ denotes the number of blocks of the partition $\pi $, $|B|$
the cardinality of the set $B$ and $\partial ^{|B|}/\partial \mathbf{y}_{B}$
the partial derivative with respect to $\mathbf{y}_{B}$.

The random performance (sometimes also called model output) is defined by $H\left( \mathbf{Y}_{\Bs{\theta}
}\right)$, where $H$ is a function from $\mathbb{
R}^{M}$ to $\mathbb{R}$, and the expected performance is its expectation, i.e., 
\begin{equation}
R(\Bs{\theta})=\mathbb{E}\left[ H\left( \mathbf{Y}_{\Bs{\theta}%
}\right) \right],  \label{Eq_GenQuantityToDerive}
\end{equation}%
provided that $\mathbb{E}\left[ |H\left( \mathbf{Y}_{\Bs{\theta}}\right) |%
\right] <\infty $. 
Below we give conditions ensuring the applicability of LRM and IPA to estimate $\partial
R\left( \Bs{\theta}\right) /\partial \Bs{\theta}$ at a specific $\Bs{\theta}_0 \in \Mbb{R}^L$, for $\Mb{Y}_{\Bs{\theta}}$ built from the Brown--Resnick and the Smith fields, respectively. The quantity in
\eqref{Eq_GenQuantityToDerive} is pretty general and the results developed
in this section can thus be applied for various purposes. One of them,
particularly important in finance and insurance, is to estimate the
sensitivities of risk and dependence measures; this is done in Section \ref
{Sec_RiskAssessment}.

When fitted to data, a max-stable field is not simple but written as in \eqref{Eq_Link_Maxstb_Simple_Maxstab1} with location, scale and shape functions not uniformly equal to unity. Hence, denoting by $X_{\Bs{\theta}}$ such a field and letting $\mathbf{X}_{\Bs{\theta}}=\left( X_{\Bs{
\theta }}(\mathbf{x}_{1}),\ldots ,X_{\Bs{\theta}}(\mathbf{x}
_{M})\right) ^{\prime }$, the expected performance should be written 
\Beq
\label{Eq_GenQuantityToDerive_G}
\mathbb{E}\left[ G\left( \mathbf{X}_{\Bs{\theta}
}\right) \right],
\Eeq 
where $G$ is a function from $\mathbb{R}^{M}$ to $\mathbb{R}$ such that $\mathbb{E}\left[ |G\left( \mathbf{X}_{\Bs{\theta}}\right) |%
\right] <\infty$. However, we see by \eqref{Eq_Link_Maxstb_Simple_Maxstab1} that the representation \eqref{Eq_GenQuantityToDerive} encompasses the case of \eqref{Eq_GenQuantityToDerive_G} by letting $H$ depend on the location, scale and shape parameters of $X_{\Bs{\theta}}$ at the $M$ sites. As technical issues arise for the derivatives with respect to the spatial dependence parameters only, we do not consider differentiation with respect to the marginal parameters, (although this is of practical interest), and focus on \eqref{Eq_GenQuantityToDerive} for convenience.

\subsection{Likelihood ratio method}

\label{Subsec_LRM}

\subsubsection{General methodology} 

We first introduce LRM to the context of expected performances based on max-stable fields.
Let $\mathbf{Y}_{\Bs{\theta}}$ be a simple max-stable random vector, $H$ be a function from $\Mbb{R}^M$ to $\Mbb{R}$ such that $\mathbb{E}\left[ |H\left( \mathbf{Y}_{\Bs{\theta}}\right) |
\right] <\infty$, $R\left( \Bs{\theta}\right)= \Mbb{E}[H\left( \mathbf{Y}_{\Bs{\theta}}\right)]$, and $\Bs{\theta}_0$ be a possible value of the parameter $\Bs{\theta}$.
LRM requires $\mathbf{Y}_{\Bs{\theta}}$
to have a density function $f_{\Bs{\theta}}$ that can be
differentiated with respect to $\Bs{\theta}$. In that case, the expected performance satisfies
\begin{equation*}
R\left( \Bs{\theta}\right) =\mathbb{E}\left[ H\left( \mathbf{Y}_{%
\Bs{\theta}}\right) \right] =\int_{(0,\infty )^{M}}H\left( \mathbf{y}%
\right) f_{\Bs{\theta}}\left( \mathbf{y}\right) \text{d}\mathbf{y}\text{%
.}
\end{equation*}%
We assume that there exists some non-random neighbourhood of $\Bs{\theta}%
_{0}$, $\mathcal{V}_{\Bs{\theta}_{0}}$, such that
\Bit
\item for each $\Bs{\theta}\in \mathcal{V}_{\Bs{\theta}_{0}}$, $%
\mathbb{E}\left[ |H\left( \mathbf{Y}_{\Bs{\theta}}\right) |\right]
<\infty $,
\item for almost all $\mathbf{y}\in (0,\infty )^{M}$, $\partial f_{%
\Bs{\theta}}\left( \mathbf{y}\right) /\partial \Bs{\theta}$ exists
for all $\Bs{\theta}\in \mathcal{V}_{\Bs{\theta}_{0}}$,
\item there is an integrable function $\Psi :\mathbb{R}^{M}\rightarrow 
\mathbb{R}$ such that $|H\left( \mathbf{y}\right)
|\sup_{j=1,...,M}\left\vert \partial f_{\Bs{\theta}}\left( \mathbf{y}%
\right) /\partial \theta _{j}\right\vert \leq \Psi \left( \mathbf{y}\right) $
for all $\Bs{\theta}\in \mathcal{V}_{\Bs{\theta}_{0}}$ and almost
every $\mathbf{y}\in (0,\infty )^{M}$.
\Eit
Then, by the dominated convergence theorem,
differentiation and integration can be interchanged, giving
\begin{equation}
\left. \frac{\partial R\left( \Bs{\theta}\right) }{\partial \mathbf{%
\theta }}\right\vert _{\Bs{\theta}=\Bs{\theta}_{0}}=\int_{(0,%
\infty )^{M}}H\left( \mathbf{y}\right) \frac{\partial f_{\Bs{\theta}%
}\left( \mathbf{y}\right) }{\partial \Bs{\theta}}\text{d}\mathbf{y}=%
\mathbb{E}\left[ H\left( \mathbf{Y}_{\Bs{\theta}_{0}}\right) \left. 
\frac{\partial \log f_{\Bs{\theta}}\left( \mathbf{Y}_{\Bs{\theta}%
}\right) }{\partial \Bs{\theta}}\right\vert _{\Bs{\theta}=\mathbf{%
\theta }_{0}}\right], \label{Eq_ExpressionDerivativeLRM}
\end{equation}
where, provided that, for all $B\subset \mathcal{I}$, $\partial \left( \partial
^{|B|}V_{\Bs{\theta}}(\mathbf{y})/\partial \mathbf{y}_{B}\right)
/\partial \Bs{\theta}$ exists for all $\Bs{\theta}\in \mathcal{V}_{%
\Bs{\theta}_{0}}$,
\begin{equation*}
\frac{\partial \log f_{\Bs{\theta}}\left( \mathbf{y}\right) }{\partial 
\Bs{\theta}}=-\frac{\partial V_{\Bs{\theta}}(\mathbf{y})}{\partial 
\Bs{\theta}}+\frac{\exp \left( -V_{\Bs{\theta}}(\mathbf{y})\right) 
}{f_{\Bs{\theta}}\left( \mathbf{y}\right) }\sum_{\pi \in \Pi }\left(
-1\right) ^{|\pi |}\sum_{B\in \pi }\frac{\partial }{\partial \Bs{\theta}%
}\frac{\partial ^{|B|}}{\partial \mathbf{y}_{B}}V_{\Bs{\theta}}(\mathbf{%
y})\prod\limits_{B^{\prime }\in \pi ,B^{\prime }\neq B}\frac{\partial
^{|B^{\prime }|}}{\partial \mathbf{y}_{B}}V_{\Bs{\theta}}(\mathbf{y}).
\end{equation*}
Then LRM consists in computing $\partial
R\left( \Bs{\theta}\right) /\partial \Bs{\theta} \vert_{\Bs{\theta}=\Bs{\theta}_0}$ by estimating the expectation in the right-hand side of \eqref{Eq_ExpressionDerivativeLRM} by Monte Carlo. Note that the above assumptions are quite
usual in the theory of maximum likelihood estimation where $\Bs{\theta}%
\mapsto \partial \log f_{\Bs{\theta}}\left( \mathbf{y}\right) /\partial 
\Bs{\theta}$ is called the score function.

\subsubsection{The case of the Brown--Resnick random field}

We now focus on the case of the Brown--Resnick field.
Let $W_{\Bs{\theta}}$ be a centred Gaussian random field on $\mathbb{R}%
^{d}$ with stationary increments and with semivariogram $\gamma _{\Bs{
\theta }}$, and let us define
$Z_{\Bs{\theta}}(\mathbf{x})=\exp \left( W_{\Bs{\theta}}(\mathbf{x}
)-\mathrm{Var}(W_{\Bs{\theta}}(\mathbf{x}))/2\right)$, $\Mb{x} \in \Mbb{R}^d$, where $\mathrm{Var}$ denotes the variance. Then the field $Y_{\Bs{
\theta }}$ defined by \eqref{Eq_Spectral_Representation_Stochastic_Processes}
with that $Z_{\Bs{\theta}}$ is referred to as the Brown--Resnick
random field associated with the semivariogram $\gamma _{\Bs{\theta}}$ %
\citep{brown1977extreme, kabluchko2009stationary}. It is stationary\footnote{%
Throughout the paper, stationarity refers to strict stationarity.} and its
distribution only depends on the variogram 
\citep[][Theorem 2 and
Proposition 11, respectively]{kabluchko2009stationary}. The case where $W_{%
\Bs{\theta}}$ is a fractional Brownian motion leads to the commonly
used semivariogram 
\begin{equation}
\gamma _{\Bs{\theta}}(\mathbf{x}_{1},\mathbf{x}_{2})=\left( \Vert 
\mathbf{x}_{1}-\mathbf{x}_{2}\Vert /\kappa \right) ^{\psi },\quad \mathbf{x%
}_{1},\mathbf{x}_{2}\in \mathbb{R}^{d},  \label{Eq_Power_Variogram}
\end{equation}%
where $\kappa >0$ and $\psi \in (0,2)$ are the range and the smoothness
parameters, respectively, and $\Bs{\theta}=\left( \psi ,\kappa
\right) ^{\prime }$.
Generally, $\psi=2$ is allowed in \eqref{Eq_Power_Variogram}, but we exclude that value here as it makes the multivariate density unavailable under a closed-form expression for $M > d +1$, and thus LRM inapplicable in this case.

For any semivariogram $\gamma_{\Bs{\theta}}$, let $\lambda _{\Bs{\theta}}(\mathbf{x}_{i},\mathbf{x}_{j})=\left(
\gamma _{\Bs{\theta}}(\mathbf{x}_{i},\mathbf{x}_{j})/2\right) ^{1/2}$, $\mathbf{x}_{i},\mathbf{x}_{j} \in \Mbb{R}^d$, $i, j=1, \ldots, M$. Let also $\lambda _{\Bs{\theta}}(\mathbf{x}_{i},\mathbf{x}_{-i})=\left(
\lambda _{\Bs{\theta}}(\mathbf{x}_{i},\mathbf{x}_{j})\right)' _{j\neq i}$
, $\log \left( \mathbf{y}_{-i}/y_{i}\right) =\left( \log \left(
y_{j}/y_{i}\right) \right)_{j\neq i}'$, and $\Omega _{\Bs{\theta}}^{(i)}$
be the matrix with $(j,m)$-th entry 
$2[\lambda _{\Bs{\theta}}^{2}(\mathbf{x}_{i},\mathbf{x}_{j})+\lambda _{
\Bs{\theta}}^{2}(\mathbf{x}_{i},\mathbf{x}_{m})-\lambda _{\mathbf{
\theta }}^{2}(\mathbf{x}_{j},\mathbf{x}_{m})]$, $j,m\neq i$. We denote, for $p\in \mathbb{N}_{\ast }$, by $\Phi _{p}\left( \cdot ;\Omega
\right) $ and $\varphi _{p}\left( \cdot ;\Omega \right) $ the $p$-dimensional Gaussian distribution and density functions with
covariance matrix $\Omega $, respectively. Provided that the matrices $\Omega _{\Bs{\theta}}^{(i)}$, $i=1, \ldots, M$, are positive definite, then the exponent measure function of the
Brown--Resnick random field is given by \citep[e.g.,][]{huser2013composite}
\begin{equation}
\label{Eq_ExponentMeasBR}
V_{\Bs{\theta}}(\mathbf{y})=\sum_{i=1}^{M}y_{i}^{-1}\phi _{i}\left( \mathbf{y},\Bs{\theta}\right),
\end{equation}
where 
\begin{equation*}
\phi _{i}\left( \mathbf{y},\Bs{\theta}\right) =\Phi _{M-1}\left(
\lambda _{\Bs{\theta}}(\mathbf{x}_{i},\mathbf{x}_{-i})+\log \left( 
\mathbf{y}_{-i}/y_{i}\right) ;\Omega _{\Bs{\theta}}^{(i)}\right).
\end{equation*}
Combined with \eqref{Eq_Density_MaxStable}, \eqref{Eq_ExponentMeasBR} shows the existence of a closed-form expression for the density $f_{\Bs{\theta}}$.

In the following theorem, we provide convenient conditions ensuring the applicability of LRM for estimating $\partial
R\left( \Bs{\theta}\right) /\partial \Bs{\theta} \vert_{\Bs{\theta}=\Bs{\theta}_0}$.
\begin{Th}
\label{Th_Interchange_LRM} 
For a non-random neighbourhood of $\Bs{\theta}_{0}$, $\mathcal{V}_{%
\Bs{\theta}_{0}}$, let 
\begin{equation}
B_{\mathcal{V}_{\Bs{\theta}_{0}}}=\inf_{i=1,\ldots,M}\inf_{\Bs{\theta}%
\in \mathcal{V}_{\Bs{\theta}_{0}}}\inf_{\mathbf{y}\in (0,\infty )^{M}}
\left \{ \phi _{i}\left( \mathbf{y},\Bs{\theta}\right) -\phi _{i}\left( 
\mathbf{y},\Bs{\theta}_{0}\right) \right \}.  \label{Eq_B}
\end{equation}%
Assume that\ there exist a non-random neighbourhood $\mathcal{V}_{\mathbf{%
\theta }_{0}}$ and a constant $\alpha >0$ such that 
\begin{equation}
\mathbb{E}\left[ |H\left( \mathbf{Y}_{\Bs{\theta}_{0}}\right) |\left( 1+%
\frac{1}{M}\sum_{i=1}^{M}Y_{\Bs{\theta}_{0},i}^{-1}\right) \left(
\sum_{i=1}^{M}Y_{\Bs{\theta}_{0},i}^{-\alpha }\right) \exp \left( -B_{%
\mathcal{V}_{\Bs{\theta}_{0}}}\sum_{i=1}^{M}Y_{\Bs{\theta}%
_{0},i}^{-1}\right) \left\Vert \mathbf{Y}_{\Bs{\theta}_{0}}\right\Vert
^{\alpha }\right] <\infty,   \label{Eq_Gen_Bound}
\end{equation}
where $Y_{\Bs{\theta}_{0},i}=Y_{\Bs{\theta}_{0}}\left( \mathbf{x}%
_{i}\right) $. Assume moreover that 
\begin{equation}
\sup_{k=1,\ldots,L}\sup_{1\leq i,j\leq m}\sup_{\Bs{\theta}\in \mathcal{V}_{%
\Bs{\theta}_{0}}}\left\vert \frac{\partial \lambda _{\Bs{\theta}}(%
\mathbf{x}_{i},\mathbf{x}_{j})}{\partial \Bs{\theta}_{k}}\right\vert
<\infty .  \label{Eq_bound_der_lambda}
\end{equation}
Then
\begin{equation}
\label{Eq_FinalResultTheorem}
\left. \frac{\partial R\left( \Bs{\theta}\right) }{\partial \Bs{%
\theta }}\right\vert _{\Bs{\theta}=\Bs{\theta}_{0}}=\mathbb{E}%
\left[ H\left( \mathbf{Y}_{\Bs{\theta}_{0}}\right) \left. \frac{%
\partial \log f_{\Bs{\theta}}\left( \mathbf{Y}_{\Bs{\theta}%
}\right) }{\partial \Bs{\theta}}\right\vert _{\Bs{\theta}=\Bs{
\theta }_{0}}\right] .
\end{equation}
\end{Th}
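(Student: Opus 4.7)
The plan is to check the three sufficient conditions for LRM listed in Section~2.2.1 at the parameter $\boldsymbol{\theta}_0$: (i) local integrability of $|H(\mathbf{Y}_{\boldsymbol{\theta}})|$ on $\mathcal{V}_{\boldsymbol{\theta}_0}$, (ii) existence of $\partial f_{\boldsymbol{\theta}}/\partial\boldsymbol{\theta}$ on $\mathcal{V}_{\boldsymbol{\theta}_0}$, and (iii) an integrable envelope $\Psi$ for $|H(\mathbf{y})|\,\sup_j|\partial f_{\boldsymbol{\theta}}/\partial\theta_j|$. Once these hold, formula \eqref{Eq_FinalResultTheorem} is just \eqref{Eq_ExpressionDerivativeLRM}. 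Condition (ii) is immediate from the closed-form representations \eqref{Eq_Density_MaxStable}--\eqref{Eq_ExponentMeasBR}: the density is built from the smooth functions $\phi_i(\mathbf{y},\boldsymbol{\theta})$, which depend on $\boldsymbol{\theta}$ only through the finite family $\{\lambda_{\boldsymbol{\theta}}(\mathbf{x}_i,\mathbf{x}_j)\}$, and assumption \eqref{Eq_bound_der_lambda} ensures that every $\boldsymbol{\theta}$-derivative appearing below is well defined and uniformly bounded on $\mathcal{V}_{\boldsymbol{\theta}_0}$.

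The core ingredient is the identity
\[
V_{\boldsymbol{\theta}}(\mathbf{y}) - V_{\boldsymbol{\theta}_0}(\mathbf{y})
= \sum_{i=1}^M y_i^{-1}\bigl[\phi_i(\mathbf{y},\boldsymbol{\theta}) - \phi_i(\mathbf{y},\boldsymbol{\theta}_0)\bigr],
\]
which together with definition \eqref{Eq_B} gives, uniformly in $\boldsymbol{\theta}\in\mathcal{V}_{\boldsymbol{\theta}_0}$,
\[
\exp\bigl(-V_{\boldsymbol{\theta}}(\mathbf{y})\bigr)
\;\le\; \exp\bigl(-V_{\boldsymbol{\theta}_0}(\mathbf{y})\bigr)\,\exp\Bigl(-B_{\mathcal{V}_{\boldsymbol{\theta}_0}}\sum_{i=1}^M y_i^{-1}\Bigr).
\]
This is the source of the exponential factor in \eqref{Eq_Gen_Bound} and is what allows us to replace an arbitrary $\boldsymbol{\theta}$ in $\mathcal{V}_{\boldsymbol{\theta}_0}$ by the reference $\boldsymbol{\theta}_0$ (under which the expectation condition is stated) and to absorb $f_{\boldsymbol{\theta}}$ by $f_{\boldsymbol{\theta}_0}$ times an exponentially decaying correction.

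Writing $f_{\boldsymbol{\theta}}(\mathbf{y}) = \exp(-V_{\boldsymbol{\theta}}(\mathbf{y}))\,P_{\boldsymbol{\theta}}(\mathbf{y})$, where $P_{\boldsymbol{\theta}}$ is the Fa\`a di Bruno polynomial in the mixed $\mathbf{y}$-derivatives of $V_{\boldsymbol{\theta}}$ appearing in \eqref{Eq_Density_MaxStable}, the derivative $\partial f_{\boldsymbol{\theta}}/\partial\theta_k$ is the same exponential prefactor times a finite sum of products of $\mathbf{y}$- and $\boldsymbol{\theta}$-derivatives of $V_{\boldsymbol{\theta}}$. The main obstacle is then a uniform polynomial control, on $\mathcal{V}_{\boldsymbol{\theta}_0}$, of each of these building blocks. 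Differentiating $\phi_i(\mathbf{y},\boldsymbol{\theta}) = \Phi_{M-1}(\lambda_{\boldsymbol{\theta}}(\mathbf{x}_i,\mathbf{x}_{-i}) + \log(\mathbf{y}_{-i}/y_i);\Omega_{\boldsymbol{\theta}}^{(i)})$ with respect to $\mathbf{y}$ produces, via the chain rule, factors $y_j^{-1}$ together with Gaussian densities $\varphi_{M-1}$; the latter are polynomially bounded by Gaussian tail estimates, using that the eigenvalues of $\Omega_{\boldsymbol{\theta}}^{(i)}$ are continuous in $\boldsymbol{\theta}$ and thus bounded above and below on $\mathcal{V}_{\boldsymbol{\theta}_0}$. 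Differentiating in $\theta_k$ only adds a multiplicative factor controlled by \eqref{Eq_bound_der_lambda}, and $\partial V_{\boldsymbol{\theta}}/\partial\theta_k$ contributes the factor $1+M^{-1}\sum_i y_i^{-1}$ from the $y_i^{-1}$ weights in $V_{\boldsymbol{\theta}}$.

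Combining these bounds I would obtain a constant $C$ and $\alpha>0$ such that, uniformly on $\mathcal{V}_{\boldsymbol{\theta}_0}$,
\[
\sup_{k} \left|\frac{\partial f_{\boldsymbol{\theta}}(\mathbf{y})}{\partial\theta_k}\right|
\;\le\; C\,f_{\boldsymbol{\theta}_0}(\mathbf{y})\Bigl(1+\tfrac{1}{M}\sum_{i=1}^M y_i^{-1}\Bigr)\Bigl(\sum_{i=1}^M y_i^{-\alpha}\Bigr)\exp\Bigl(-B_{\mathcal{V}_{\boldsymbol{\theta}_0}}\sum_{i=1}^M y_i^{-1}\Bigr)\|\mathbf{y}\|^{\alpha}.
\]
Taking $\Psi(\mathbf{y})$ equal to $|H(\mathbf{y})|$ times the right-hand side, its integrability against $\mathrm{d}\mathbf{y}$ is exactly assumption \eqref{Eq_Gen_Bound} (since $f_{\boldsymbol{\theta}_0}$ is the density of $\mathbf{Y}_{\boldsymbol{\theta}_0}$); an analogous, simpler bound applied to $f_{\boldsymbol{\theta}}$ itself also yields (i). Condition (iii) is therefore met, and \eqref{Eq_FinalResultTheorem} follows from \eqref{Eq_ExpressionDerivativeLRM}.
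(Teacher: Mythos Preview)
Your overall architecture matches the paper's: verify the three LRM hypotheses by producing an integrable envelope for $|H(\mathbf{y})|\,|\partial f_{\boldsymbol{\theta}}/\partial\theta_k|$, using the identity $V_{\boldsymbol{\theta}}-V_{\boldsymbol{\theta}_0}=\sum_i y_i^{-1}[\phi_i(\mathbf{y},\boldsymbol{\theta})-\phi_i(\mathbf{y},\boldsymbol{\theta}_0)]$ to produce the factor $\exp(-B_{\mathcal{V}_{\boldsymbol{\theta}_0}}\sum_i y_i^{-1})$. That part is fine and is exactly what the paper does.

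The gap is in the step ``Combining these bounds I would obtain \ldots $\le C\,f_{\boldsymbol{\theta}_0}(\mathbf{y})\,(\text{factors})$''. Write $f_{\boldsymbol{\theta}}=\exp(-V_{\boldsymbol{\theta}})P_{\boldsymbol{\theta}}$. Your exponential comparison handles $\exp(-V_{\boldsymbol{\theta}})/\exp(-V_{\boldsymbol{\theta}_0})$, but to land on $f_{\boldsymbol{\theta}_0}$ you also need a \emph{uniform} bound on the ratio $P_{\boldsymbol{\theta}}/P_{\boldsymbol{\theta}_0}$ (and similarly on $(\partial_{\theta_k}P_{\boldsymbol{\theta}})/P_{\boldsymbol{\theta}_0}$). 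Your ``Gaussian tail estimates'' on the individual building blocks give absolute polynomial bounds on $P_{\boldsymbol{\theta}}$ and its derivatives, not a bound relative to $P_{\boldsymbol{\theta}_0}$; since $P_{\boldsymbol{\theta}_0}$ is a \emph{signed} sum of products that can in principle be small, an absolute polynomial bound on the numerator does not yield a bound on the ratio. The paper handles this in two moves you do not mention: first it invokes Lemma~B.4 and Equation~(14) of \cite{dombry2017asymp} to obtain
\[
\sup_{j}\Bigl|\frac{\partial f_{\boldsymbol{\theta}}(\mathbf{y})}{\partial\theta_j}\Bigr|\le \Bigl(M+\sum_i y_i^{-1}\Bigr)\,c_\alpha\!\left(\frac{\mathbf{y}}{\|\mathbf{y}\|}\right) f_{\boldsymbol{\theta}}(\mathbf{y}),
\]
so the problem reduces cleanly to bounding $f_{\boldsymbol{\theta}}/f_{\boldsymbol{\theta}_0}$; second, it controls $P_{\boldsymbol{\theta}}/P_{\boldsymbol{\theta}_0}$ by rewriting both as polynomials of degree $M$ in $\|\mathbf{y}\|^{-1}$ with coefficients $C(\pi,\mathbf{y}/\|\mathbf{y}\|,\boldsymbol{\theta})$ that are uniformly bounded on the sphere and on $\mathcal{V}_{\boldsymbol{\theta}_0}$, and then checking that the ratio of two such polynomials is bounded (the limits as $\|\mathbf{y}\|\to 0$ and $\|\mathbf{y}\|\to\infty$ pick out the $|\pi|=M$ and $|\pi|=1$ terms respectively, both strictly positive). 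Without this structural argument --- or an equivalent one --- the displayed inequality preceding your definition of $\Psi$ is unjustified.
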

These technical but tractable conditions are pretty natural ones to guarantee the validity of the interchange between differentiation and integration.

Realizations of the term inside the expectation in \eqref{Eq_FinalResultTheorem} can be obtained since the Brown--Resnick field can be simulated at a finite number of sites and the score can be computed. The simulation methods for the Brown--Resnick field are either exact \citep[e.g.,][]{dombry2016exact, oesting2018exact} or approximate \citep[][Theorem 4]{schlather2002models}. 

Nonetheless, the computation of the density (and thus of the score) may be challenging
as the number of terms in the sum equals the $M$-th Bell number, which
grows super-exponentially in the dimension $M$. To circumvent this issue, a solution is to approximate \eqref{Eq_Density_MaxStable} by Monte Carlo
simulation; i.e., for $N\geq 1$,
\begin{equation*}
\hat{f}_{\Bs{\theta}}\left( \mathbf{y}\right) =\exp \left( -V_{\mathbf{%
\theta }}(\mathbf{y})\right) \frac{1}{N}\sum_{i=1}^{N}\prod\limits_{B\in \pi
_{i}}\left( -\frac{\partial ^{|B|}}{\partial \mathbf{y}_{B}}V_{\mathbf{%
\theta }}(\mathbf{y})\right),
\end{equation*}
where the partitions $\pi _{1},\ldots,\pi _{N}$ form an ergodic sequence with
stationary distribution given by%
\begin{equation}
g_{\Bs{\theta}}(\pi |\mathbf{y})=\frac{\prod\limits_{B\in \pi }\left( -%
\frac{\partial ^{|B|}}{\partial \mathbf{y}_{B}}V_{\Bs{\theta}}(\mathbf{y%
})\right) }{\sum_{\pi \in \Pi }\prod\limits_{B\in \pi }\left( -\frac{%
\partial ^{|B|}}{\partial \mathbf{y}_{B}}V_{\Bs{\theta}}(\mathbf{y}%
)\right) }\propto \prod\limits_{B\in \pi }\left( -\frac{\partial ^{|B|}}{%
\partial \mathbf{y}_{B}}V_{\Bs{\theta}}(\mathbf{y})\right) .
\label{Eq_Cond_Dens_Partition}
\end{equation}%
\cite{dombry2013conditional} design a Gibbs sampler to generate approximate
simulations $\pi _{1},\ldots,\pi _{N}$ without explicitly computing the
constant factor in the denominator of \eqref{Eq_Cond_Dens_Partition}. We
refer to that paper for more details about the practical implementation. Theoretically, the ergodicity of the resulting Markov
chain implies that the precision of the approximation is arbitrarily high as 
$N\rightarrow \infty $. In practice, the number of iterations of the Gibbs
sampler, $N$, is typically much smaller than the cardinality of $\Pi$, but the approximation is reasonably good even for moderate values of $N$ because generally only a few partitions $\pi \in \Pi $ are compatible
with the data; see, e.g., \cite{huser2019full}. For the Brown--Resnick field associated with the
semivariogram \eqref{Eq_Power_Variogram}, they
conclude that the Gibbs sampler converges quickly and that about $10\times M$
iterations are enough for the algorithm to converge for a large number of parameter configurations.

Using the Monte Carlo based idea, the approximation of the score function is then given by
\begin{equation*}
\frac{\partial \log \hat{f}_{\Bs{\theta}}\left( \mathbf{y}\right) }{%
\partial \Bs{\theta}}=-\frac{\partial V_{\Bs{\theta}}(\mathbf{y})}{%
\partial \Bs{\theta}}+\frac{\exp \left( -V_{\Bs{\theta}}(\mathbf{y}%
)\right) }{\hat{f}_{\Bs{\theta}}\left( \mathbf{y}\right) }\frac{1}{N}%
\sum_{i=1}^{N}\left( -1\right) ^{|\pi _{i}|}\sum_{B\in \pi _{i}}\frac{%
\partial }{\partial \Bs{\theta}}\frac{\partial ^{|B|}}{\partial \mathbf{%
y}_{B}}V_{\Bs{\theta}}(\mathbf{y})\prod\limits_{B^{\prime }\in \pi
_{i},B^{\prime }\neq B}\frac{\partial ^{|B^{\prime }|}}{\partial \mathbf{y}%
_{B}}V_{\Bs{\theta}}(\mathbf{y}).
\end{equation*}%
Analytical expressions of $\partial (\partial ^{|B|}V_{\Bs{\theta}}(%
\mathbf{y})/\partial \mathbf{y}_{B})/\partial \Bs{\theta}$ are known
for the Brown--Resnick random field \citep[e.g.,][Section B.4]{dombry2017asymp}. 

\subsection{Infinitesimal perturbation analysis}

\subsubsection{General methodology}

We first introduce IPA to the context of expected performances based on max-stable fields.
Let $\mathbf{Y}_{\Bs{\theta}}$ be a simple max-stable random vector, $H$ be a function from $\Mbb{R}^M$ to $\Mbb{R}$ such that $\mathbb{E}\left[ |H\left( \mathbf{Y}_{\Bs{\theta}}\right) |
\right] <\infty$, $R\left( \Bs{\theta}\right)= \Mbb{E}[H\left( \mathbf{Y}_{\Bs{\theta}}\right)]$, and $\Bs{\theta}_0$ be a possible value of the parameter $\Bs{\theta}$. IPA requires the random performance $H\left( \mathbf{Y}_{\Bs{\theta}}\right)$  to be differentiable with respect to $\Bs{\theta}$. However, the density $f_{
\Bs{\theta}}$ does not need to be explicit, making IPA a potential solution when LRM is invalid. When both LRM and IPA can be applied, IPA may be preferable, although the optimal choice depends on the specific form of the random performance.
For stochastic
processes and random fields, IPA is also called pathwise derivative
estimation because it uses differentiation of sample path functionals. 

The essence of IPA is to assess the derivative of interest using 
\begin{equation}
\left. \frac{\partial R\left( \Bs{\theta}\right) }{\partial \Bs{%
\theta }}\right\vert _{\Bs{\theta}=\Bs{\theta}_{0}}=\mathbb{E}%
\left[ \left. \frac{\partial H\left( \mathbf{Y}_{\Bs{\theta}}\right) }{%
\partial \Bs{\theta}}\right\vert _{\Bs{\theta}=\Bs{\theta}%
_{0}}\right] ,  \label{Eq_Interchange}
\end{equation}%
provided that the derivative and the expectation can be interchanged. The derivative can then be computed through estimation of the right-hand side of \eqref{Eq_Interchange} by Monte Carlo. A
sufficient condition for the interchange to hold in the general case is given, e.g., in \cite%
{asmussen2010}, Chapter VII, Proposition 2.3.

\begin{Prop}
\label{Prop_Interchange_Asmussen} Assume that $\Bs{\theta}\mapsto
H\left( \mathbf{Y}_{\Bs{\theta}}\right) $ is an almost surely (a.s.)
differentiable function at $\Bs{\theta}_{0}$ and that a.s. $\Bs{
\theta }\mapsto H\left( \mathbf{Y}_{\Bs{\theta}}\right) $ satisfies the
Lipschitz condition 
\begin{equation*}
\left\vert H\left( \mathbf{Y}_{\Bs{\theta}_{1}}\right) -H\left( \mathbf{%
Y}_{\Bs{\theta}_{2}}\right) \right\vert \leq \left\Vert \Bs{\theta}%
_{1}-\Bs{\theta}_{2}\right\Vert B_{\Bs{\theta}_{0}}
\end{equation*}%
for $\Bs{\theta}_{1}$, $\Bs{\theta}_{2}$ in a non-random
neighbourhood of $\Bs{\theta}_{0}$, where $\mathbb{E}\left[ B_{\mathbf{%
\theta }_{0}}\right] <\infty $. Then \eqref{Eq_Interchange} holds.
\end{Prop}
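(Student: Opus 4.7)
The plan is to reduce the multivariate statement to its componentwise analogue and then apply the dominated convergence theorem in the standard pathwise manner. Fix $k \in \{1,\ldots,L\}$ and let $\mathbf{e}_k$ denote the $k$-th canonical basis vector of $\mathbb{R}^{L}$. Take any scalar sequence $h_{n}\to 0$, $h_{n}\neq 0$, small enough that $\Bs{\theta}_{0}+h_{n}\mathbf{e}_{k}$ remains in the non-random neighbourhood of $\Bs{\theta}_{0}$ on which the Lipschitz bound is assumed. Consider the random difference quotients
\begin{equation*}
D_{n}^{(k)} \;=\; \frac{H\bigl(\mathbf{Y}_{\Bs{\theta}_{0}+h_{n}\mathbf{e}_{k}}\bigr)-H\bigl(\mathbf{Y}_{\Bs{\theta}_{0}}\bigr)}{h_{n}}.
\end{equation*}
By the assumed a.s. differentiability of $\Bs{\theta}\mapsto H(\mathbf{Y}_{\Bs{\theta}})$ at $\Bs{\theta}_{0}$, partial derivatives exist a.s. and, for almost every $\omega$, $D_{n}^{(k)}(\omega)\to \partial H(\mathbf{Y}_{\Bs{\theta}})/\partial \theta_{k}\big|_{\Bs{\theta}=\Bs{\theta}_{0}}(\omega)$ as $n\to\infty$.

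Next I would invoke the Lipschitz hypothesis with $\Bs{\theta}_{1}=\Bs{\theta}_{0}+h_{n}\mathbf{e}_{k}$ and $\Bs{\theta}_{2}=\Bs{\theta}_{0}$, which yields $|D_{n}^{(k)}|\leq B_{\Bs{\theta}_{0}}$ almost surely, since $\|h_{n}\mathbf{e}_{k}\|=|h_{n}|$. Because $\mathbb{E}[B_{\Bs{\theta}_{0}}]<\infty$, the dominating function is integrable, and the dominated convergence theorem gives
\begin{equation*}
\lim_{n\to\infty}\mathbb{E}\bigl[D_{n}^{(k)}\bigr] \;=\; \mathbb{E}\!\left[\left.\frac{\partial H(\mathbf{Y}_{\Bs{\theta}})}{\partial \theta_{k}}\right|_{\Bs{\theta}=\Bs{\theta}_{0}}\right].
\end{equation*}
On the other hand, linearity of expectation gives $\mathbb{E}[D_{n}^{(k)}]=(R(\Bs{\theta}_{0}+h_{n}\mathbf{e}_{k})-R(\Bs{\theta}_{0}))/h_{n}$, provided integrability is preserved under the Lipschitz perturbation, which is immediate from $|H(\mathbf{Y}_{\Bs{\theta}_{0}+h_{n}\mathbf{e}_{k}})|\leq |H(\mathbf{Y}_{\Bs{\theta}_{0}})|+|h_{n}|\,B_{\Bs{\theta}_{0}}$. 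Hence the partial derivative $\partial R/\partial \theta_{k}$ exists at $\Bs{\theta}_{0}$ and equals the $k$-th coordinate of the right-hand side of \eqref{Eq_Interchange}. Running this argument for every $k=1,\ldots,L$ yields the full gradient statement.

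Honestly, there is no real obstacle here: the result is essentially an application of the dominated convergence theorem made possible by the Lipschitz hypothesis, which simultaneously provides the a.s. pointwise convergence of the difference quotients (through differentiability) and an integrable envelope $B_{\Bs{\theta}_{0}}$. The only minor subtlety I would address explicitly is that the limit $\mathbb{E}[D_{n}^{(k)}]$ must be independent of the sequence $h_{n}\to 0$ used to define it. This is automatic: the a.s. pointwise limit is the fixed random variable $\partial H(\mathbf{Y}_{\Bs{\theta}})/\partial \theta_{k}\big|_{\Bs{\theta}_{0}}$ for every such sequence, so DCT delivers the same expected limit regardless of the sequence chosen, i.e., the ordinary (two-sided) partial derivative of $R$ exists and takes the claimed value.
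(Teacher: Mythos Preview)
Your argument is correct and is exactly the standard dominated-convergence proof one expects here. Note, however, that the paper does not supply its own proof of this proposition: it is quoted as Proposition~2.3 in Chapter~VII of \cite{asmussen2010}, and the appendix contains proofs only of Theorems~\ref{Th_Interchange_LRM} and~\ref{Th_MainResultIPA} and of Proposition~\ref{Prop_Validity_Assumptions}. Your write-up is essentially the argument in that reference, so there is nothing to compare against within the paper itself.
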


If $g:I\rightarrow \mathbb{R}$ is differentiable on an open set $I\subset 
\mathbb{R}^{d}$, and satisfies $\left\Vert \partial g(\mathbf{x})/\partial 
\mathbf{x}\right\Vert \leq K$ for all $\mathbf{x}$ in $I$, then $g$ is
Lipschitz continuous with Lipschitz constant at most $K$ over $I$.
Therefore, we immediately deduce that, if there exists a random variable $B_{%
\Bs{\theta}_{0}}$ satisfying $\mathbb{E}\left[ B_{\Bs{\theta}_{0}}%
\right] <\infty $ and such that a.s. 
\begin{equation*}
\sup_{\Bs{\theta}\in \mathcal{V}_{\Bs{\theta}_{0}}}\left\Vert 
\frac{\partial H\left( \mathbf{Y}_{\Bs{\theta}}\right) }{\partial 
\Bs{\theta}}\right\Vert \leq B_{\Bs{\theta}_{0}},
\end{equation*}%
where $\mathcal{V}_{\Bs{\theta}_{0}}$ is a non-random neighbourhood of $%
\Bs{\theta}_{0}$, then \eqref{Eq_Interchange} holds.

\subsubsection{The case of the Smith random field}

We illustrate IPA in the case where the random performance is based on the Smith random field \citep{smith1990max}, for which a closed-form expression for the density is only known
when $M$, the number of stations in $\mathbb{R}^{d}$, is smaller or equal
than $d+1$ \citep{genton2011likelihood}. Letting $(U_{i},\mathbf{C}_{i})_{i\geq 1}$
be the points of a Poisson point process on $(0,\infty )\times \mathbb{R}^{d}
$ with intensity function $u^{-2} \mathrm{d}u \times \mathrm{d}\mathbf{c}$, the Smith random field with covariance matrix $%
\Sigma =\left( \sigma _{ij}\right) _{ij}$  is defined
by 
\begin{equation}
Y_{\Sigma}(\mathbf{x})=\bigvee_{i=1}^{\infty }U_{i}\varphi _{M}(\mathbf{x}-\mathbf{C}%
_{i},\Sigma ),\qquad \mathbf{x}\in \mathbb{R}^{d},
\label{Eq_M3_Representation}
\end{equation}%
i.e., by taking $Z_{i}\left( \mathbf{x}\right) =\varphi _{M}(\mathbf{x}-%
\mathbf{C}_{i},\Sigma )$ in 
\eqref{Eq_Spectral_Representation_Stochastic_Processes}. It is stationary and simple max-stable, and corresponds to the Brown--Resnick field associated with the
semivariogram $\gamma (\mathbf{x})=\mathbf{x}^{\prime }\Sigma ^{-1}\mathbf{x}%
/2$, $\mathbf{x}\in \mathbb{R}^{d}$ \citep[e.g.,][]{huser2013composite}. Such a semivariogram leads to the
impossibility of characterizing the density for $M > d + 1$, as mentioned above. As was the case of $\Bs{
\theta }$ for the Brown--Resnick random field, $\Sigma $ completely
characterizes the dependence structure of the Smith field. For ease,
the vector $\Bs{\theta}$ is replaced by the positive definite
matrix $\Sigma$ in the following. The derivative of an expected performance $R\left( \Sigma \right) $
written as in \eqref{Eq_GenQuantityToDerive} with respect to $\Sigma $ at
some positive definite matrix $\Sigma _{0}$ is defined by %
\citep[e.g.,][]{dwyer1967some} 
\begin{equation}
\left. \frac{\partial R\left( \Sigma \right) }{\partial \Sigma }\right\vert
_{\Sigma =\Sigma _{0}}=\left( \left. \frac{\partial R\left( \Sigma \right) }{%
\partial \sigma _{ij}}\right\vert _{\Sigma =\Sigma _{0}}\right) _{ij}.
\end{equation}%
Note that results concerning differentiation with respect to a scalar or a
vector also hold in the case of differentiation with respect to a matrix.

Assume now that $\mathbf{y}\mapsto H\left( \mathbf{y}\right) $ is
differentiable. The differentiability of the function $\Sigma \mapsto 
\mathbf{Y}_{\Sigma }$ in a neighbourhood of $\Sigma _{0}$ will be shown in
the proof of Theorem \ref{Lem_Derivability} (Appendix \ref%
{Sec_Proof_Th_MainResultIPA}). Then the chain rule gives
\begin{equation}
\frac{\partial H\left( \mathbf{Y}_{\Sigma }\right) }{\partial \Sigma }%
=\sum_{j=1}^{M}\frac{\partial H\left( \mathbf{Y}_{\Sigma }\right) }{\partial
y_{j}}\frac{\partial Y_{\Sigma }(\mathbf{x}_{j})}{\partial \Sigma }.
\label{Eq_Chain_Rule}
\end{equation}%
We shall prove (Theorem \ref{As_M1} in Appendix \ref%
{Sec_Proof_Th_MainResultIPA}) that there exists some non-random
neighbourhood of $\Sigma _{0}$, $\mathcal{V}_{\Sigma _{0}}$, such that, for
any $q>1$, there exists a random variable $C_{\Sigma _{0}}(\mathbf{x},q)$
satisfying a.s. 
\begin{equation*}
\sup_{\Sigma \in \mathcal{V}_{\Sigma _{0}}}\left\Vert \frac{\partial \log
Y_{\Sigma }(\mathbf{x})}{\partial \Sigma }\right\Vert ^{q}\leq C_{\Sigma
_{0}}(\mathbf{x},q)\quad \mbox{and}\quad \mathbb{E}\left[ C_{\Sigma _{0}}(%
\mathbf{x},q)\right] <\infty .
\end{equation*}%
This technical outcome will allow us to derive our main result.
\begin{Th}
\label{Th_MainResultIPA} Assume that $\mathbf{y}\mapsto H\left( \mathbf{y}%
\right) $ is a differentiable function and that there exists $p>1$ such that 
\begin{equation}
\sup_{j=1,...,M}\mathbb{E}\left[ \sup_{\Sigma \in \mathcal{V}_{\Sigma
_{0}}}\left\vert Y_{\Sigma }(\mathbf{x}_{j})\frac{\partial H\left( \mathbf{Y}%
_{\Sigma }\right) }{\partial y_{j}}\right\vert ^{p}\right] <\infty ,
\label{Eq_Condition_Main_Theorem}
\end{equation}%
where $\mathcal{V}_{\Sigma _{0}}$ is a non-random neighbourhood of $\Sigma
_{0}$. Then
\Beq
\label{Eq_MainResultIPASmith}
\left. \frac{\partial R\left( \Sigma \right) }{\partial \Sigma }\right\vert
_{\Sigma =\Sigma _{0}}=\mathbb{E}
\left[ \left. \frac{\partial H\left( \mathbf{Y}_{\Sigma}\right) }{
\partial \Sigma}\right\vert_{\Sigma=\Sigma
_{0}} \right].
\Eeq
\end{Th}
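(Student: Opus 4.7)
The plan is to invoke Proposition \ref{Prop_Interchange_Asmussen} via the sufficient condition stated immediately after it: it is enough to exhibit an integrable random variable $B_{\Sigma_0}$ such that a.s.\
$\sup_{\Sigma \in \mathcal{V}_{\Sigma_0}} \| \partial H(\mathbf{Y}_\Sigma)/\partial \Sigma \| \leq B_{\Sigma_0}$.
The almost sure differentiability of $\Sigma \mapsto H(\mathbf{Y}_\Sigma)$ at $\Sigma_0$ will come from the chain rule \eqref{Eq_Chain_Rule}, combining the assumed differentiability of $\mathbf{y} \mapsto H(\mathbf{y})$ with the differentiability of $\Sigma \mapsto \mathbf{Y}_\Sigma$ established in the proof of Theorem \ref{Lem_Derivability}. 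So the only real task here is to bound the norm of \eqref{Eq_Chain_Rule} uniformly over $\mathcal{V}_{\Sigma_0}$ by an integrable quantity.

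\textbf{Factorization.} I would first rewrite \eqref{Eq_Chain_Rule} so that a factor of $Y_\Sigma(\mathbf{x}_j)$ is absorbed into the $H$ part and the remaining $\Sigma$-derivative of $Y_\Sigma$ becomes a log-derivative. Since $Y_\Sigma$ has standard Fr\'echet margins, $Y_\Sigma(\mathbf{x}_j) > 0$ a.s.,\ and
$\partial Y_\Sigma(\mathbf{x}_j)/\partial \Sigma = Y_\Sigma(\mathbf{x}_j)\, \partial \log Y_\Sigma(\mathbf{x}_j)/\partial \Sigma$, so
\[
\frac{\partial H(\mathbf{Y}_\Sigma)}{\partial \Sigma} = \sum_{j=1}^M \Bigl( Y_\Sigma(\mathbf{x}_j)\, \frac{\partial H(\mathbf{Y}_\Sigma)}{\partial y_j} \Bigr)\, \frac{\partial \log Y_\Sigma(\mathbf{x}_j)}{\partial \Sigma}.
\]
Taking the matrix norm, using the triangle inequality, and then the supremum over $\Sigma \in \mathcal{V}_{\Sigma_0}$, one obtains
\[
\sup_{\Sigma \in \mathcal{V}_{\Sigma_0}} \Bigl\| \frac{\partial H(\mathbf{Y}_\Sigma)}{\partial \Sigma} \Bigr\| \;\leq\; \sum_{j=1}^M A_j\, D_j,
\]
where $A_j := \sup_{\Sigma \in \mathcal{V}_{\Sigma_0}} |Y_\Sigma(\mathbf{x}_j)\, \partial H(\mathbf{Y}_\Sigma)/\partial y_j|$ and $D_j := \sup_{\Sigma \in \mathcal{V}_{\Sigma_0}} \| \partial \log Y_\Sigma(\mathbf{x}_j)/\partial \Sigma \|$. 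The point of this factorization is that the paper's moment hypotheses control $A_j$ and $D_j$ separately.

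\textbf{H\"older and conclusion.} Let $q := p/(p-1)$ be the conjugate exponent of the $p>1$ from assumption \eqref{Eq_Condition_Main_Theorem}. That assumption gives $\mathbb{E}[A_j^p] < \infty$ for each $j$, while the auxiliary moment bound stated before the theorem (proved as Theorem \ref{As_M1} in the appendix) yields $\mathbb{E}[D_j^q] \leq \mathbb{E}[C_{\Sigma_0}(\mathbf{x}_j, q)] < \infty$. H\"older's inequality then gives $\mathbb{E}[A_j D_j] \leq (\mathbb{E}[A_j^p])^{1/p} (\mathbb{E}[D_j^q])^{1/q} < \infty$, so $B_{\Sigma_0} := \sum_{j=1}^M A_j D_j$ is integrable and furnishes the required Lipschitz bound. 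Proposition \ref{Prop_Interchange_Asmussen} then delivers \eqref{Eq_MainResultIPASmith}.

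\textbf{Main obstacle.} The heavy lifting is not in this theorem itself but in the preceding auxiliary result Theorem \ref{As_M1}: since $Y_\Sigma$ is the pointwise supremum over a Poisson point process of rescaled multivariate Gaussian densities indexed by $\Sigma$, producing a uniform $L^q$ bound on the log-derivative of the sample path requires identifying the a.s.\ unique argmax atom and controlling its $\Sigma$-derivative, together with uniform estimates in a neighbourhood of $\Sigma_0$. Once Theorem \ref{As_M1} is in hand, the present theorem reduces cleanly, through the $\log$ factorization above and a single application of H\"older, to the Lipschitz version of IPA.
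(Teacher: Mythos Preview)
Your proposal is correct and follows essentially the same route as the paper's proof: factor the chain rule through the log-derivative, bound the norm termwise, take suprema over $\mathcal{V}_{\Sigma_0}$, and apply H\"older with the conjugate exponent $q$ to pair assumption \eqref{Eq_Condition_Main_Theorem} with Theorem \ref{As_M1}, then invoke Proposition \ref{Prop_Interchange_Asmussen}. The paper's argument is identical in structure and detail.
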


This non-trivial theorem provides a sufficient condition to use IPA to
compute the derivatives of $R\left( \Sigma \right) $. This condition is much
more tractable and easier to check than that in Proposition \ref%
{Prop_Interchange_Asmussen}. The simplification stems from the fact that we
take care in Theorems \ref{Lem_Derivability} and \ref{As_M1} of the
intricate term of \eqref{Eq_Chain_Rule}, $\partial Y_{\Sigma }(\mathbf{x}%
_{j})/\partial \Sigma $, which involves the sample path properties with
respect to differentiation of the Smith random field. Theorems \ref%
{Lem_Derivability} and \ref{As_M1} are delicate to establish precisely due
to the inherent structure of max-stable fields.

In practice we have to simulate realizations of the random matrix inside the expectation in \eqref{Eq_MainResultIPASmith}, which can be done by simulating the Smith field and using \eqref{Eq_Chain_Rule}.   
The expression of $\partial Y_{\Sigma }(\mathbf{x}%
_{j})/\partial \Sigma $ appearing in \eqref{Eq_Chain_Rule} is given in %
\eqref{Eq_Deriv_Realis_Sig} (Appendix \ref{Sec_Proof_Th_MainResultIPA}). As the Brown--Resnick field, the Smith field can be simulated exactly \citep[e.g.,][]{oesting2018exact, dombry2016exact} or approximately \citep[][Theorem 4]{schlather2002models}; the latter approach is very accurate.


\section{Application to risk assessment}
\label{Sec_RiskAssessment}

We now focus on one framework (among several others) where the results of
Section \ref{Subsec_EstimationSensMaxstable} are useful, which is the
context of risk and dependence measures. After detailing the link with %
\eqref{Eq_GenQuantityToDerive}, we consider a dependence measure which is
particularly suited to insurance of damage triggered by extreme wind speeds.
We show that the conditions of Theorems \ref{Th_Interchange_LRM} and \ref%
{Th_MainResultIPA} hold in that case and confirm through a simulation study
that both LRM and IPA perform very well. Finally, we consider concrete data
and show that the sensitivity of our dependence measure can be very high,
highlighting the practical importance of studying the sensitivity of
functions of max-stable fields in the context of risk assessment.

\subsection{Risk measures based on max-stable fields}

Here we show that the quantity defined in \eqref{Eq_GenQuantityToDerive}
encompasses many risk and dependence measures, and our focus is mainly on
actuarial applications. A univariate risk measure is a mapping from a set of
random variables to the real numbers. A dependence measure summarises the
strength of dependence between several elements of such a set of random
variables. In finance, these random variables often represent portfolio
returns, and in an insurance context, they might be the claims associated
with insurance policies. When claims are triggered by environmental events,
a possible model for the insured cost field is 
\citep[][Section
2.3]{koch2017spatial} 
\begin{equation}
C(\mathbf{x})=E(\mathbf{x}) D_{\mathbf{x}}(X(\mathbf{x})),\qquad \mathbf{x}%
\in \mathbb{R}^{d},  \label{Eq_Cost_Field}
\end{equation}%
where $E$ is the (deterministic) insured exposure (i.e., insured value) field, $D_{\mathbf{x}}$ the damage
function at site $\mathbf{x}$ and $X$ the random field of the
environmental variable generating risk. The application of the damage
function $D_{\mathbf{x}}$ to $X(\mathbf{x})$ yields the insured cost ratio (i.e., the insured cost divided by the insured value) at site $\mathbf{x}$, which, multiplied by the insured exposure, gives the
corresponding insured cost.

Among the risk measures that can be written as in %
\eqref{Eq_GenQuantityToDerive}, one can find many sophisticated examples,
e.g., in insurance/reinsurance pricing or regulation. For $j=1,\dots ,M$,
let $C_{j}$ denote the claim of an insurance company at $%
\mathbf{x}_{j}$ and assume that $C_{j}$ can be written as a function of the
max-stable field (e.g., as in the right-hand side of \eqref{Eq_Cost_Field}).
Premium loadings that are proportional to specific moments of the sum of the 
$C_{j}$ at two or more sites constitute excellent examples in insurance
pricing. In reinsurance, the premium is sometimes based on order statistics
of the claims, as in the case of the ``excédent du coût moyen relatif'' (ECOMOR) or largest claims reinsurance (LCR) treaties. Let us
consider, e.g., $M\geq 3$ sites and assume that each of those is associated
with an insurance policy whose corresponding claim is $C_{j}$ as above. We
consider the ordered values of those claims, $C^{(1:M)}\geq C^{(2:M)}\geq
\cdots \geq C^{(M:M)}$. For instance, the risk measure $\mathbb{E}%
[(C^{(1:M)}-C^{(3:M)})+(C^{(2:M)}-C^{(3:M)})]$ would be involved in the
pricing of an ECOMOR reinsurance treaty having the third largest claim as
priority. These quantities can be written under the form \eqref{Eq_GenQuantityToDerive}
but behave in a non-linear way and do not have any analytical expression. A valuable example of dependence measure, which will be considered until the end, is presented in the next section.

\subsubsection{A specific dependence measure for wind speed}

\label{Subsec_Depend_Wind}

We present in this section a dependence measure for the costs due to high
wind speeds that can be written as in \eqref{Eq_GenQuantityToDerive}.
We
model the cost by \eqref{Eq_Cost_Field} in the case $d=2$, where the field of wind
speed extremes, $X_{\Bs{\theta}}$, is assumed to be max-stable. Moreover, for any site $\mathbf{x}\in \mathbb{R}^{2}$, we choose as
damage function $D_{\mathbf{x}}(x)=(x/u)^{\beta (\mathbf{x})}$, $x \leq u$, where $u>0$ and $\beta (%
\mathbf{x})\in \mathbb{N}_{\ast }$, which is utterly appropriate in the case
of wind. Indeed, as the force exerted by the wind and the corresponding rate
of work are proportional to the second and third powers of wind speed,
respectively, the total cost for a specific structure is expected to
increase as the square or the cube of the maximum wind speed. For studies
supporting the use of the square, see, e.g., 
\citet[][Equations (4.7.1),
(8.1.1) and (8.1.8)]{simiu1996wind}, and for the cube, see 
\citet[][Chapter
2, p.7]{lamb1991historic}, \cite{emanuel2005increasing} and \cite%
{kantha2008tropical}. However, in the case of insured costs, several authors
have recently found power-laws with much higher exponents; e.g., \cite%
{prahl2012applying} obtained exponents spanning from $8$ to $12$ for insured
losses on residential buildings in Germany. Indeed, the presence of a
deductible in the insurance contract increases the exponent from $2$ or $3$
to a larger value depending on the deductible %
\citep[e.g.,][]{prahl2015comparison, koch2018spatialpowers}. The level $u$ corresponds to that value of wind speed which triggers an insured cost ratio equal to unity, and is thus assumed to be larger than the wind speeds observed in practice. For a more
detailed review of wind damage functions, see, e.g., \cite%
{koch2018spatialpowers}.

We consider two sites ($M=2$), $\mathbf{x}_{1},\mathbf{x}_{2}\in \mathbb{R}%
^{2}$. Recall that the link between $X_{\Bs{\theta}}$ and the associated simple max-stable field $Y_{\Bs{\theta}}$ is given by \eqref{Eq_Link_Maxstb_Simple_Maxstab1}, and let $\eta _{i}=\eta (\mathbf{x}_{i}),\tau _{i}=\tau (\mathbf{x}%
_{i}),\xi _{i}=\xi (\mathbf{x}_{i})$ and $\beta _{i}=\beta (\mathbf{x}_{i})$, $i=1,2$, such that $\beta _{i}\xi _{i}<1/2$. The shape parameter for wind speed maxima is often slightly negative, implying that the event $X_{\Bs{\theta}}(\mathbf{x}_{i}) < 0$ occurs with probably not exactly $0$ (although extremely close owing to the values of the location and scale parameters). Anyway, this is not a problem as $\beta_{i}\in \mathbb{N}_{\ast }$. 
Moreover, we take $Y_{\Bs{\theta}}$ to be the Brown--Resnick or the Smith random field and we consider as dependence measure the correlation between the costs due to extreme winds at the two sites, i.e., 
\begin{equation}
R(\Bs{\theta})=\mathrm{Corr}\left( C(\Mb{x}_1), C(\Mb{x}_2) \right)=\mathrm{Corr}\left( X_{\Bs{\theta}}^{\beta _{1}}(%
\mathbf{x}_{1}),X_{\Bs{\theta}}^{\beta _{2}}(\mathbf{x}_{2})\right),
\label{Eq_Ex_Relevant_Dependence_Measure}
\end{equation}%
where $\Bs{\theta}$ must be replaced by $\Sigma$ when the Smith field is considered. The condition $\beta_{i}\xi _{i}<1/2$, $i=1,2$, ensures the existence of
the correlation in \eqref{Eq_Ex_Relevant_Dependence_Measure}. Correlation is
used a lot by practitioners in the finance/insurance industry and the
measure $R(\Bs{\theta})$ is of practical interest for any
insurance/reinsurance company handling the risk of damage caused by extreme
wind speeds; among others, it provides insight about potential spatial diversification.
For details, see \cite{koch2018spatialpowers} where %
\eqref{Eq_Ex_Relevant_Dependence_Measure} is thoroughly studied.

We now explain why the measure in \eqref{Eq_Ex_Relevant_Dependence_Measure}
is of the form \eqref{Eq_GenQuantityToDerive} with $M=2$. It is well-known
that, if $Y$ is a random variable following the standard Fr\'{e}chet
distribution, then $%
\Mbb{E}[Y^{\tilde{\beta}}]=\Gamma (1-\tilde{\beta})$ for any $\tilde{\beta}<1$, where $\Gamma $
denotes the gamma function. We now also assume that $\xi _{i}\neq 0$ for $%
i=1,2$. Consequently, using \eqref{Eq_Link_Maxstb_Simple_Maxstab1} and the
binomial theorem, we obtain, for $i=1,2$, 
\begin{equation}
\Mbb{E}\left[ X_{\Bs{\theta}}^{\beta _{i}}(\mathbf{x}_{i})\right]
=C_{\beta _{i},\eta _{i},\tau _{i},\xi _{i}},  \label{Eq_Exp_Zbeta}
\end{equation}%
where 
\begin{equation*}
C_{\beta _{i},\eta _{i},\tau _{i},\xi _{i}}=\sum_{k=0}^{\beta _{i}}{\binom{%
\beta _{i}}{k}}\left( \eta _{i}-\frac{\tau _{i}}{\xi _{i}}\right) ^{k}\left( 
\frac{\tau _{i}}{\xi _{i}}\right) ^{\beta _{i}-k}\Gamma (1-[\beta _{i}-k]\xi
_{i}).
\end{equation*}%
Moreover, Corollary 1 in \cite{koch2018spatialpowers} gives 
\begin{equation}
\Mr{Var}\left[ X_{\Bs{\theta}}^{\beta _{i}}(\mathbf{x}_{i})\right]
=D_{\beta _{i},\eta _{i},\tau _{i},\xi _{i}},  \label{Eq_Variance_Zbeta}
\end{equation}%
where 
\begin{equation}
D_{\beta _{i},\eta _{i},\tau _{i},\xi _{i}}=\sum_{k_{1}=0}^{\beta
_{i}}\sum_{k_{2}=0}^{\beta _{i}}B_{k_{1},k_{2},\beta _{i},\eta _{i},\tau
_{i},\xi _{i}}\left\{ \Gamma (1-\xi _{i}[2\beta _{i}-k_{1}-k_{2}])-\Gamma
(1-[\beta _{i}-k_{1}]\xi _{i})\Gamma (1-[\beta _{i}-k_{2}]\xi _{i})\right\} ,
\label{Eq_DefCoeffD}
\end{equation}%
with 
\begin{equation*}
B_{k_{1},k_{2},\beta _{i},\eta _{i},\tau _{i},\xi _{i}}={\binom{\beta _{i}}{%
k_{1}}}{\binom{\beta _{i}}{k_{2}}}\left( \eta _{i}-\frac{\tau _{i}}{\xi _{i}}%
\right) ^{k_{1}+k_{2}}\left( \frac{\tau _{i}}{\xi _{i}}\right) ^{2\beta
_{i}-(k_{1}+k_{2})}.
\end{equation*}%
Therefore, \eqref{Eq_Exp_Zbeta} and \eqref{Eq_Variance_Zbeta} yield 
\begin{equation*}
R(\Bs{\theta})=\mathbb{E}\left[ H\left( \mathbf{Y}_{\Bs{\theta}%
}\right) \right] ,
\end{equation*}%
with%
\begin{equation}
H\left( y_{1},y_{2}\right) =\frac{x_{1}^{\beta
_{1}}\left( y_{1}\right) x_{2}^{\beta _{2}}\left( y_{2}\right) -C_{\beta _{1},\eta _{1},\tau
_{1},\xi _{1}}C_{\beta _{2},\eta _{2},\tau _{2},\xi _{2}}}{\sqrt{D_{\beta
_{1},\eta _{1},\tau _{1},\xi _{1}}D_{\beta _{2},\eta _{2},\tau _{2},\xi _{2}}%
}}  \label{Eq_Exp_R_CorrWind_H}
\end{equation}%
where $x_{i}\left( y_{i}\right) =\left( \eta _{i}-\tau _{i}/\xi _{i}\right)
+\tau _{i}y_{i}^{\xi _{i}}/\xi _{i}$ since $\xi _{i}\neq 0$, $i=1,2$.

On top of being useful for actuarial applications, the measure %
\eqref{Eq_Ex_Relevant_Dependence_Measure} has a closed-form derivative with
respect to the dependence parameters of the max-stable field, allowing us to
compare the values of $\partial R(\Bs{\theta})/\partial \Bs{\theta}%
|_{\Bs{\theta}=\Bs{\theta}_{0}}$ obtained using LRM or IPA with
their true values. The analytical formulas of these derivatives are given in
Appendix \ref{Sec_Anal_Deriv}.

\subsubsection{Validity of the assumptions}

We now show that the assumptions of Theorems \ref{Th_Interchange_LRM} and 
\ref{Th_MainResultIPA} are valid in the case of
\eqref{Eq_Ex_Relevant_Dependence_Measure}, for $Y$ being
the Brown--Resnick and the Smith random field, respectively. Consequently, LRM and IPA can be used to assess the derivative of \eqref{Eq_Ex_Relevant_Dependence_Measure} with respect to $\Bs{\theta}$ and $\Sigma$, respectively.

\begin{Prop}
\label{Prop_Validity_Assumptions}Let $H$ be defined as in %
\eqref{Eq_Exp_R_CorrWind_H}.

i)\ Assume that $Y_{\Bs{\theta}}$ is the Brown--Resnick random field
with semivariogram given by \eqref{Eq_Power_Variogram}. There exist a
non-random neighbourhood $\mathcal{V}_{\Bs{\theta}_{0}}$ of $\Bs{
\theta }_{0}$, $\mathcal{V}_{\Bs{\theta}_{0}}$, and a constant $\alpha
>0$ such that \eqref{Eq_Gen_Bound} and \eqref{Eq_bound_der_lambda} hold true.

ii) Assume that $Y_{\Sigma}$ is the Smith random field. Then there
exist $p>1$ and a non-random neighbourhood of $\Sigma _{0}$, $\mathcal{V}%
_{\Sigma _{0}}$, such that \eqref{Eq_Condition_Main_Theorem} holds true.
\end{Prop}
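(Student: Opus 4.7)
For part (i), I would first dispatch the boundedness condition \eqref{Eq_bound_der_lambda} by computing $\lambda_{\Bs{\theta}}(\mathbf{x}_i,\mathbf{x}_j) = 2^{-1/2}(\|\mathbf{x}_i-\mathbf{x}_j\|/\kappa)^{\psi/2}$ explicitly: the partial derivatives with respect to $\psi$ and $\kappa$ are continuous functions of $(\psi,\kappa)$ on $(0,2)\times(0,\infty)$ at fixed distinct sites, hence bounded on any closed ball $\mathcal{V}_{\Bs{\theta}_0}$ inside this region. For \eqref{Eq_Gen_Bound} my strategy has two steps. First, I would show that $|B_{\mathcal{V}_{\Bs{\theta}_0}}|$ can be made arbitrarily small by shrinking $\mathcal{V}_{\Bs{\theta}_0}$: since each $\phi_i(\mathbf{y},\cdot)$ is a multivariate Gaussian CDF whose gradient in $\Bs{\theta}$, via Plackett's identity, combines uniformly bounded Gaussian densities with the (now controlled) derivatives of $\lambda_{\Bs{\theta}}$ and $\Omega_{\Bs{\theta}}^{(i)}$, I expect a Lipschitz constant $L$ uniform in $\mathbf{y}\in(0,\infty)^M$, giving $|B_{\mathcal{V}_{\Bs{\theta}_0}}|\leq L\,\mathrm{diam}(\mathcal{V}_{\Bs{\theta}_0})$. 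Second, with $|B|$ small, I would bound the integrand in \eqref{Eq_Gen_Bound} by a linear combination of monomials in $Y_{\Bs{\theta}_0,i}^{\xi_i}$, $Y_{\Bs{\theta}_0,i}^{-1}$ and $Y_{\Bs{\theta}_0,i}^{-\alpha}$ times $\exp(|B|(Y_1^{-1}+Y_2^{-1}))$, and combine the Cauchy--Schwarz bound $\mathbb{E}[\exp(t(Y_1^{-1}+Y_2^{-1}))]\leq (1-2t)^{-1}$ (for $t<1/2$, valid because each $Y_i^{-1}$ is standard exponential marginally) with H\"older and the assumption $\beta_i\xi_i<1/2$ to conclude integrability for $\alpha$ small enough.

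For part (ii), I would start by writing $y_j\,\partial H/\partial y_j$ explicitly from \eqref{Eq_Exp_R_CorrWind_H} as a finite linear combination of monomials $y_1^{\xi_1\ell_1}y_2^{\xi_2\ell_2}$ with $\ell_i\leq \beta_i$. The key step is then uniform-in-$\Sigma$ envelope control of $Y_\Sigma(\mathbf{x})$: since the eigenvalues and determinant of $\Sigma$ stay in compact positive intervals on a compact neighborhood $\mathcal{V}_{\Sigma_0}$, there exist constants $0<c_-\leq c_+$ and $0<C_-\leq C_+$ with $c_-\exp(-C_+\|\mathbf{x}-\mathbf{c}\|^2)\leq \varphi_M(\mathbf{x}-\mathbf{c},\Sigma)\leq c_+\exp(-C_-\|\mathbf{x}-\mathbf{c}\|^2)$ uniformly in $\Sigma\in\mathcal{V}_{\Sigma_0}$. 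Plugging into \eqref{Eq_M3_Representation} produces envelope simple max-stable fields $\underline Y(\mathbf{x})\leq Y_\Sigma(\mathbf{x})\leq \overline Y(\mathbf{x})$ of the same spectral form, each with Fr\'echet-type margins (so $\mathbb{E}[\overline Y(\mathbf{x})^a]<\infty$ for $a<1$ and $\mathbb{E}[\underline Y(\mathbf{x})^{-a}]<\infty$ for every $a>0$). Splitting the monomials by the sign of $\xi_i$ then bounds $\sup_{\Sigma\in\mathcal{V}_{\Sigma_0}}|Y_\Sigma(\mathbf{x}_j)\,\partial H(\mathbf{Y}_\Sigma)/\partial y_j|$ by a polynomial in $\overline Y(\mathbf{x}_i)^{\xi_i\ell_i}$ (when $\xi_i>0$) and $\underline Y(\mathbf{x}_i)^{\xi_i\ell_i}$ (when $\xi_i<0$); applying Cauchy--Schwarz coordinatewise and choosing $p>1$ with $2p\beta_i\xi_i<1$ whenever $\xi_i>0$ (possible because $\beta_i\xi_i<1/2$) yields \eqref{Eq_Condition_Main_Theorem}.

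The hardest step will be, in (i), obtaining a Lipschitz constant for $\phi_i(\mathbf{y},\cdot)$ that is uniform in $\mathbf{y}\in(0,\infty)^M$ --- necessary because $B_{\mathcal{V}_{\Bs{\theta}_0}}$ is an infimum over this unbounded region and $\phi_i$ depends on $\mathbf{y}$ through the unbounded argument $\log(\mathbf{y}_{-i}/y_i)$ --- and, in (ii), the two-sided envelope construction together with its sign-dependent coordinatewise moment bookkeeping, which is what lets us handle both positive and negative shape parameters $\xi_i$ simultaneously.
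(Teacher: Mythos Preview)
Your proposal for part~(i) is correct and follows essentially the paper's own route: expand $H$ as a polynomial in $y_i^{\xi_i}$, bound the integrand in \eqref{Eq_Gen_Bound} by a finite sum of products $Y_{\Bs{\theta}_0,1}^{\delta_1}\exp(-B_{\mathcal{V}_{\Bs{\theta}_0}}Y_{\Bs{\theta}_0,1}^{-1})\cdot Y_{\Bs{\theta}_0,2}^{\delta_2}\exp(-B_{\mathcal{V}_{\Bs{\theta}_0}}Y_{\Bs{\theta}_0,2}^{-1})$, decouple by Cauchy--Schwarz, and use that for a standard Fr\'echet variable $Y$ the quantity $\mathbb{E}[Y^{2\delta}\exp(cY^{-1})]$ is finite whenever $2\delta<1$ and $c<1$; then pick $\alpha$ small (so $2(\xi_i\beta_i+\alpha)<1$) and $\mathcal{V}_{\Bs{\theta}_0}$ small (so $-2B_{\mathcal{V}_{\Bs{\theta}_0}}<1$). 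The paper simply asserts that the neighbourhood can be shrunk to make $B_{\mathcal{V}_{\Bs{\theta}_0}}$ close to zero; your Lipschitz control of $\phi_i(\mathbf{y},\cdot)$ via Plackett's identity makes this step explicit, which is a useful refinement rather than a different strategy.

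For part~(ii) the two arguments genuinely diverge. The paper avoids any lower envelope: it proves the one-line bound $Y_\Sigma(\mathbf{x})\leq (2\pi)^{-d/2}\det(\Sigma)^{-1/2}\bigvee_{i\geq 1}U_i$ (a multiple of a single standard Fr\'echet variable, uniformly in $\Sigma$ on a compact neighbourhood), couples it with the pointwise inequality $|y_j\,\partial H/\partial y_j|\leq C\,(y_1\vee 1)^{\xi_1\beta_1}(y_2\vee 1)^{\xi_2\beta_2}$, and concludes via H\"older with $2p\xi_i\beta_i<1$. Your two-sided sandwich $\underline Y\leq Y_\Sigma\leq\overline Y$ by uniform eigenvalue bounds on $\Sigma$ is more work but buys something: when $\xi_i<0$ the monomials $y_i^{\xi_i\ell_i}$ blow up as $y_i\downarrow 0$, and it is the lower envelope $\underline Y$ (whose reciprocal has all moments) that controls this regime. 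The paper's $\vee 1$ device is very economical when $\xi_i>0$, but is delicate precisely at that lower endpoint; your sign-splitting with the two-sided envelope makes the $\xi_i<0$ case fully transparent.
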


\subsection{Simulation study}

\label{Subsec_Numerical_Study}

We numerically assess the accuracy of the LRM and IPA for computing the
sensitivities of the dependence measure $R\left( \Bs{\theta}\right) $
in \eqref{Eq_Ex_Relevant_Dependence_Measure}, where $Y$
is the Brown--Resnick field associated with the semi-variogram \eqref{Eq_Power_Variogram}, and the Smith field, respectively. The number of simulations used to approach the expectations in \eqref{Eq_FinalResultTheorem} and \eqref{Eq_MainResultIPASmith} is denoted by $S$. We display the
boxplots of the relative errors of $100$ estimates in different
configurations. We take $\beta _{1}=\beta _{2}=\beta $ with $\beta =2,3$, $%
S=10^{4},10^{5}$, and we look at different combinations of sites $\mathbf{x}%
_{1},\mathbf{x}_{2}$. We recall that the relative errors can be calculated
since the sensitivities of $R\left( \Bs{\theta}\right) $ can be
obtained analytically with an integral form. The
integrals in \eqref{Eq_DetailedExprDepenMeasWind} and
\eqref{Eq_Deriv_Risk_Measure} were computed using adaptive quadrature with a
relative tolerance of $10^{-7}$ to allow an accurate approximation.

\subsubsection{LRM for the Brown--Resnick field}

\label{Subsubsec_SimLMRBR}

We examine three combinations of sites: we set $\mathbf{x}_{1}=(0,0)^{\prime
}$ and $\mathbf{x}_{2}=(1,1)^{\prime },(3,2)^{\prime },(9,9)^{\prime }$.
These have been chosen in order to cover a wide range of sensitivities and
relative sensitivities. Moreover, we take $\Bs{\theta}_{0}=\left(
\psi_{0},\kappa _{0}\right) ^{\prime }=(3.05,0.86)^{\prime }$, $\eta
_{1}=\eta _{2}=26.11$, $\tau _{1}=\tau _{2}=2.90$ and $\xi _{1}=\xi
_{2}=-0.11$, which are the estimates obtained on the data used in the
application below (Section \ref{Subsec_Application}). We simulated the
Brown--Resnick field using the \texttt{rmaxstab} function of the
SpatialExtremes R package by \cite{PackageSpatialExtremes}.

Table \ref{Table_Sensitivity_BR} shows that, for $\kappa$ and $%
\psi$, the theoretical values of $R\left( \Bs{\theta}_{0}\right) 
$, its sensitivity and its relative sensitivity do not evolve much when
increasing $\beta $ from $2$ to $3$; the absolute values of the relative
sensitivities slightly decrease whereas $R\left( \Bs{\theta}_{0}\right) 
$ weakly increases. The absolute values of the sensitivities and of their
relative counterparts are the highest for $\mathbf{x}_{2}=(9,9)^{\prime }$;
in that case, the relative sensitivities are very high (about $30\%$ for $%
\kappa$ and more than $150\%$ for $\psi$), probably owing to the
fairly low value of $R\left( \Bs{\theta}_{0}\right) $. Such large
values typically explain the practical importance of properly assessing
sensitivities, e.g., in an insurance context.

\begin{table}[H]
\center
\begin{tabular}{c||cc|cc||cc|cc||cc}
& \multicolumn{2}{c}{$\kappa$} & \multicolumn{2}{c||}{$\psi$} & 
\multicolumn{2}{c}{$\kappa$} & \multicolumn{2}{c||}{$\psi$} & 
\multicolumn{2}{c}{$R\left( \Bs{\theta}_{0}\right) $} \\ \hline
\backslashbox{$\Mb{x}_2$}{$\beta$} & 2 & 3 & 2 & 3 & 2 & 3 & 2 & 3 & 2 & 3
\\ \hline
$(1,1)^{\prime }$ & 0.048 & 0.046 & 0.131 & 0.126 & 0.061 & 0.058 & 0.167 & 
0.158 & 0.784 & 0.797 \\ 
$(3,2)^{\prime }$ & 0.074 & 0.074 & -0.044 & -0.044 & 0.122 & 0.117 & -0.072
& -0.070 & 0.610 & 0.626 \\ 
$(9,9)^{\prime }$ & 0.087 & 0.089 & -0.439 & -0.452 & 0.306 & 0.302 & -1.552
& -1.529 & 0.283 & 0.296 \\  
\end{tabular}
\newline
\caption{The left panel gives the values of $\partial R(\mathbf{\Bs
\theta })/\partial \protect\kappa |_{\Bs{\protect\theta }=\mathbf{%
\protect\theta }_{0}}$ and $\partial R(\Bs{\protect\theta })/\partial 
\protect\psi |_{\Bs{\protect\theta }=\Bs{\protect\theta }_{0}}$ in
the various configurations. The middle one displays the previous values
normalized by $R(\Bs{\protect\theta }_{0})$. The right one gives $R(%
\mathbf{\Bs\theta }_{0})$. All values are the theoretical ones.}
\label{Table_Sensitivity_BR}
\end{table}

As $M=2$, the density of the Brown--Resnick field is explicitly known and thus the
score function has a closed-form expression (available upon request) with a small number of terms; it is
therefore needless to apply the MCMC methodology through Gibbs sampler
described above.

It is important to remark that $\partial \log f_{\Bs{\theta}%
}(y_{1},y_{2})/\partial \psi $ and $\partial \log f_{\Bs{\theta}%
}(y_{1},y_{2})/\partial \kappa $ are proportional (Appendix \ref%
{Sect_Proport_BR}), which entails that the LRM estimates of $\partial R(\Bs{
\theta })/\partial \psi $ and $\partial R(\Bs{\theta})/\partial
\kappa $ are also proportional. Moreover, by \eqref{Eq_Deriv_Risk_Measure} and \eqref{Eq_Deriv_gs_Numerical_study}, the true values of those derivatives are proportional with the same factor, implying that the relative errors of the
estimates of $\partial R(\Bs{\theta})/\partial \psi $ and $\partial R(%
\Bs{\theta})/\partial \kappa $ are equal. Consequently, we only
display the results for $\kappa $.

Figure \ref{Fig_Res_LRM_Kappa} shows that the LRM estimator is unbiased, as
expected. Moreover its variability substantially decreases when increasing $%
S $ from $10^{4}$ to $10^{5}$ and becomes very small (the relative error of
most estimates is less than $5\%$). This variability is the lowest for $%
\mathbf{x}_{2}=(9,9)^{\prime }$, perhaps due to the fact that this
configuration corresponds by far to the largest sensitivities and relative
sensitivities. To some extent, we may expect the accuracy of the estimation
to be an increasing function of the absolute value of the relative
sensitivity. Nevertheless, this seems to be more complex: the variability is
lower for $\mathbf{x}_{2}=(3,2)^{\prime }$ than for $\mathbf{x}%
_{2}=(1,1)^{\prime }$ 
and, for all site combinations, for $\beta =3$ than for $\beta =2$. Overall,
the LRM estimator is very satisfying in all configurations, whether the
sensitivity (or relative sensitivity) is low or large.

\begin{figure}[H]
\center
\includegraphics[scale=0.8]{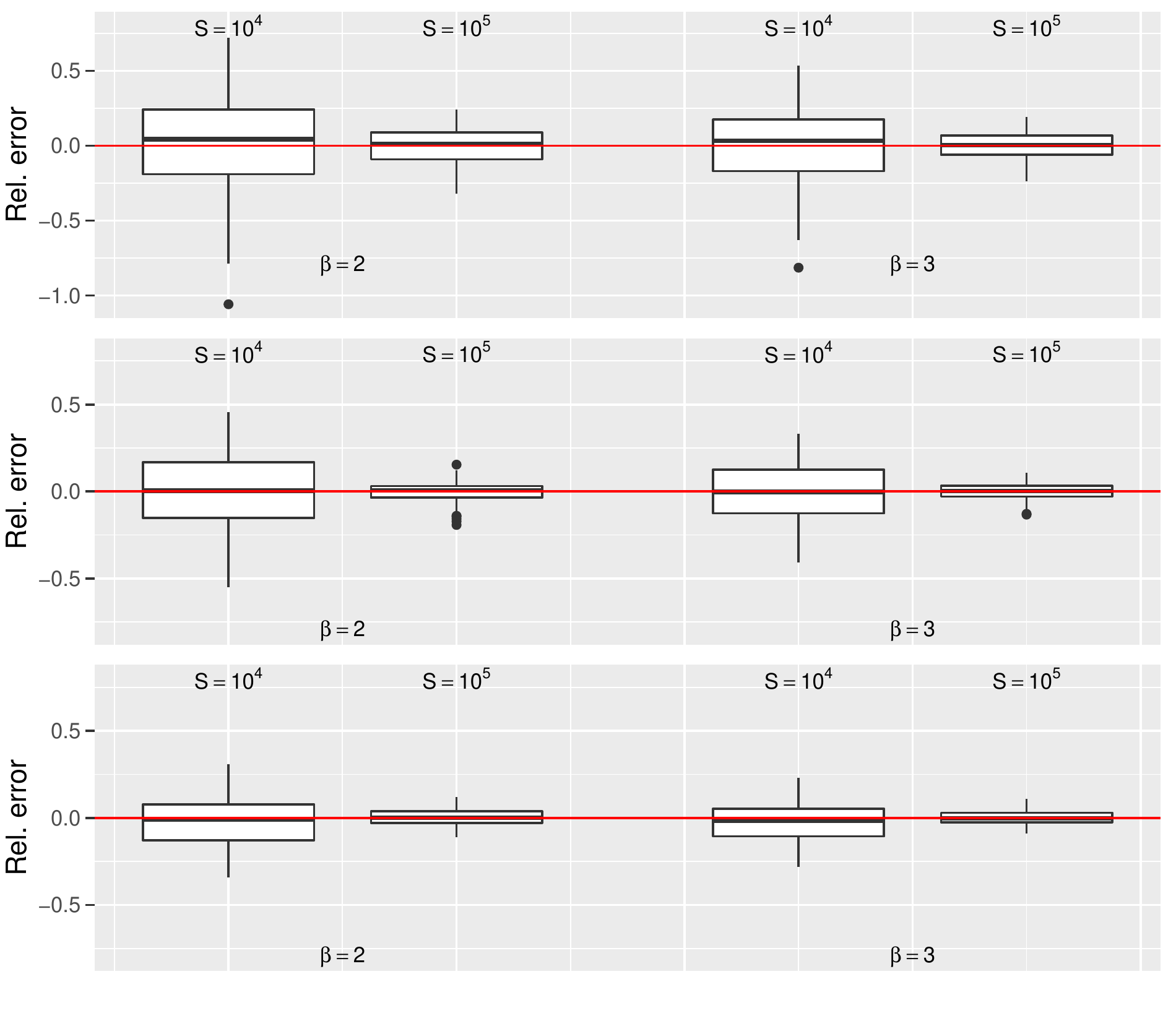}
\caption{Boxplots of the relative errors of each estimate for the derivative
with respect to $\protect\kappa$. Each row corresponds to a combination of
sites: from top to bottom, $\mathbf{x}_2=(1,1)^{\prime }$, $\mathbf{x}%
_2=(3,2)^{\prime }$ and $\mathbf{x}_2=(9,9)^{\prime }$.}
\label{Fig_Res_LRM_Kappa}
\end{figure}

\subsubsection{IPA for the Smith field}

We consider two combinations of sites: we set $\mathbf{x}%
_{1}=(0,0)^{\prime }$ and $\mathbf{x}_{2}=(1,1)^{\prime },(3,2)^{\prime }$.
Additionally, we take 
\begin{equation*}
\Sigma _{0}=%
\begin{pmatrix}
0.88\  & 0.07 \\ 
0.07\  & 2.43%
\end{pmatrix}%
,
\end{equation*}%
$\eta_1=\eta_2 =26.12$, $\tau_1 = \tau_2 =2.92$ and $\xi_1=\xi_2 =-0.10$, which are the estimates
obtained in Section \ref{Subsec_Application}. Unlike in Section \ref%
{Subsubsec_SimLMRBR}, we do not consider the case $\mathbf{x}%
_{2}=(9,9)^{\prime }$ as $R(\Sigma _{0})$ is approximately equal to $0$ in
that configuration. Note that we do not display the results for $\sigma _{21}
$ as they are exactly the same as those for $\sigma _{12}$, consistently
with the theory. 
The analytical computation of the terms $\partial
Y_{\Sigma }(\mathbf{x}_{j})/\partial \Sigma $, $j=1,2$, which is necessary
to implement IPA (see \eqref{Eq_Chain_Rule}), requires the coordinates of
the centers of the \textquotedblleft storms\textquotedblright\ 
\citep[see][for the
interpretation of the Smith field in terms of storms]{smith1990max}
realizing the maxima at the sites $\mathbf{x}_{j}$ (see %
\eqref{Eq_Deriv_Realis_Sig} in Section \ref{Sec_Proof_Th_MainResultIPA}). To
the best of our knowledge, these quantities cannot be obtained from the
simulation algorithms available on the Web (e.g., in R packages like
SpatialExtremes by \cite{PackageSpatialExtremes} or in the code by \cite%
{dombry2016exact} available on the Biometrika website). To overcome this
impediment and for other technical reasons, we programmed the simulation
algorithm of the Smith random field ourselves by adapting the approach of 
\citet[][Theorem
4]{schlather2002models}. Regarding the quantity $r$ appearing in that
approach, we took the value $r=15$ in order to ensure an accurate
simulation. The corresponding code will be available.

Table~\ref{TableValuesSensCorr_Smith} shows that, when increasing $\beta $
from $2$ to $3$, the sensitivities and relative sensitivities slightly
decrease whereas $R(\Sigma _{0})$ weakly increases. The relative
sensitivities are the highest for $\mathbf{x}_{2}=(3,2)^{\prime }$; they
take very high values for $\sigma _{11}$ and $\sigma _{12}$ (about $36\%$
for $\sigma _{12}$ and more than $160\%$ for $\sigma _{11}$), probably due
to the low value of $R(\Sigma _{0})$.

\begin{table}[H]
\center
\resizebox{\textwidth}{!}{
\begin{tabular}{c||cc|cc|cc||cc|cc|cc||cc}
 & \multicolumn{2}{c}{$\sigma_{11}$} & \multicolumn{2}{c}{$\sigma_{12}$} & \multicolumn{2}{c||}{$\sigma_{22}$} & \multicolumn{2}{c}{$\sigma_{11}$} & \multicolumn{2}{c}{$\sigma_{12}$} & \multicolumn{2}{c||}{$\sigma_{22}$} & \multicolumn{2}{c}{$R(\Sigma_0)$} \\ 
\hline 
\backslashbox{$\Mb{x}_2$}{$\beta$} & 2 & 3 & 2 & 3 & 2 & 3 & 2 & 3 & 2 & 3 & 2 & 3 & 2 & 3 \\ 
\hline 
$(1,
1)^{\prime}$ & 0.174 & 0.170 & 0.06 & 0.058 & 0.020 & 0.020 & 0.242 & 0.232 & 0.083 & 0.080 & 0.029 & 0.027 & 0.717 & 0.732 \\ 
$(3,
2)^{\prime}$ & 0.233 & 0.243 & 0.05 & 0.053 & 0.011 & 0.011 & 1.669 & 1.655 & 0.362 & 0.359 & 0.078 & 0.078 & 0.139 & 0.147  \\ 
\end{tabular}
}
\newline
\caption{Same as in Table \protect\ref{Table_Sensitivity_BR}, but for $%
\partial R(\Sigma )/\partial \protect\sigma _{11}|_{\Sigma =\Sigma _{0}}$, $%
\partial R(\Sigma )/\partial \protect\sigma _{12}|_{\Sigma =\Sigma _{0}}$
and $\partial R(\Sigma )/\partial \protect\sigma _{22}|_{\Sigma =\Sigma
_{0}} $.}
\label{TableValuesSensCorr_Smith}
\end{table}

Figures~\ref{Fig_Res_IPA_11}--\ref{Fig_Res_IPA_22} indicate that the IPA
estimator is unbiased (at least for $S$ large enough) and, as expected, the
variability decreases when increasing $S$ from $10^{4}$ to $10^{5}$. In most
cases, it reaches a low level (with a relative error of most estimates less than $5\%$). The variability is slightly lower in the case of $\mathbf{x%
}_{2}=(3,2)^{\prime }$, which can be explained by the fact that the relative
sensitivity is much larger in that case. It is however the converse for $%
\sigma _{22}$, coefficient for which the increase of relative sensitivity
compared to the case $\mathbf{x}_{2}=(1,1)^{\prime }$ is lower than for the
other coefficients. Moreover, the variability is systematically higher for $%
\sigma _{22}$ than for the other coefficients, and, whatever the coefficient
and the combination of sites, it tends to be lower for $\beta =3$ than for $%
\beta =2$. We intuitively expect the method to perform the best for high
relative sensitivities, but we see that it seems more complex.
Overall, IPA performs very well on this example, whether the sensitivity (or
relative sensitivity) is low or high.

\begin{figure}[H]
\center
\includegraphics[scale=0.8]{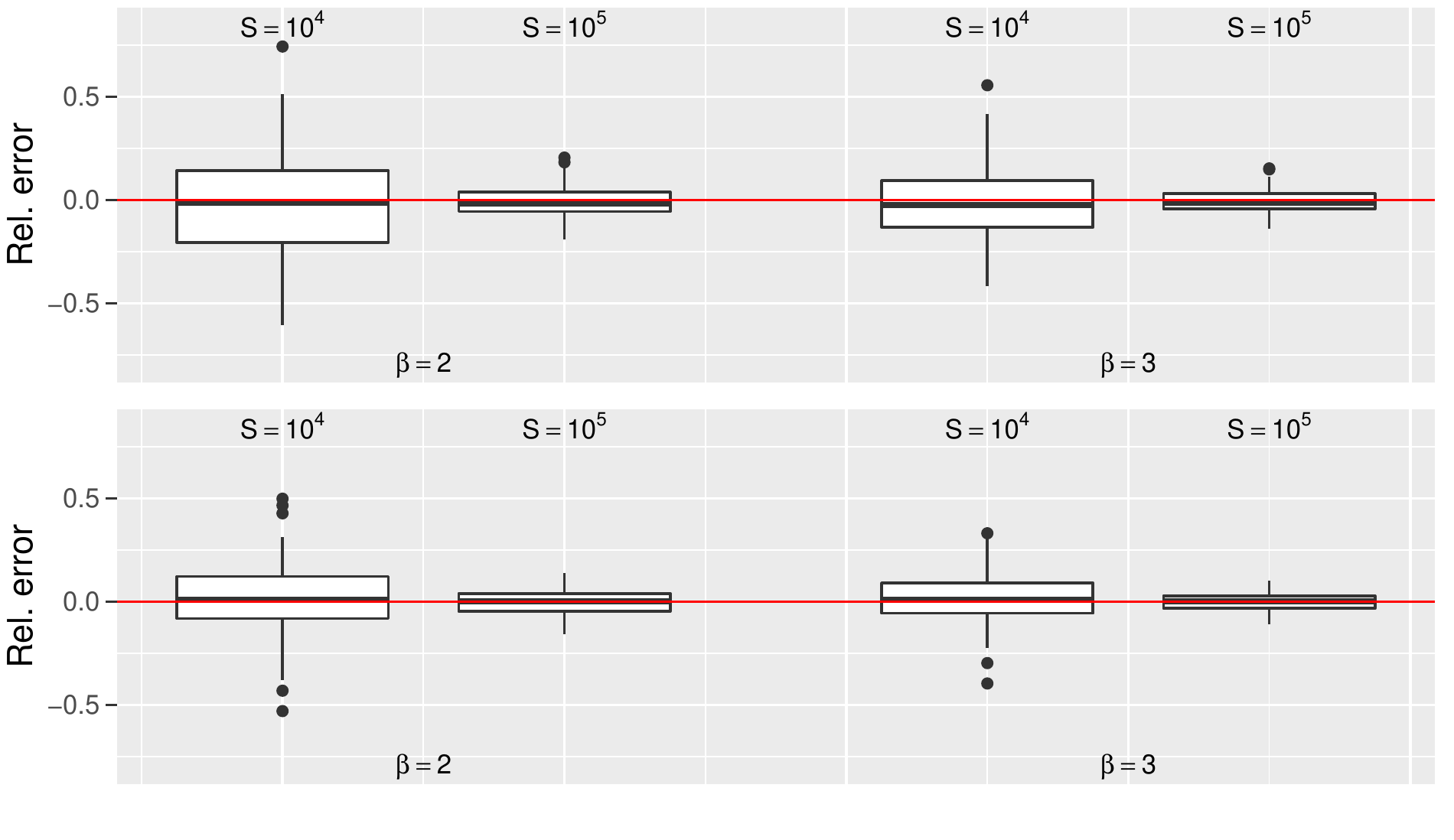}
\caption{Boxplots of the relative errors of each estimate for the derivative
with respect to $\protect\sigma _{11}$. Each row corresponds to a
combination of sites: from top to bottom, $\mathbf{x}_{2}=(1,1)^{\prime }$
and $\mathbf{x}_{2}=(3,2)^{\prime }$.}
\label{Fig_Res_IPA_11}
\end{figure}

\begin{figure}[H]
\center
\includegraphics[scale=0.8]{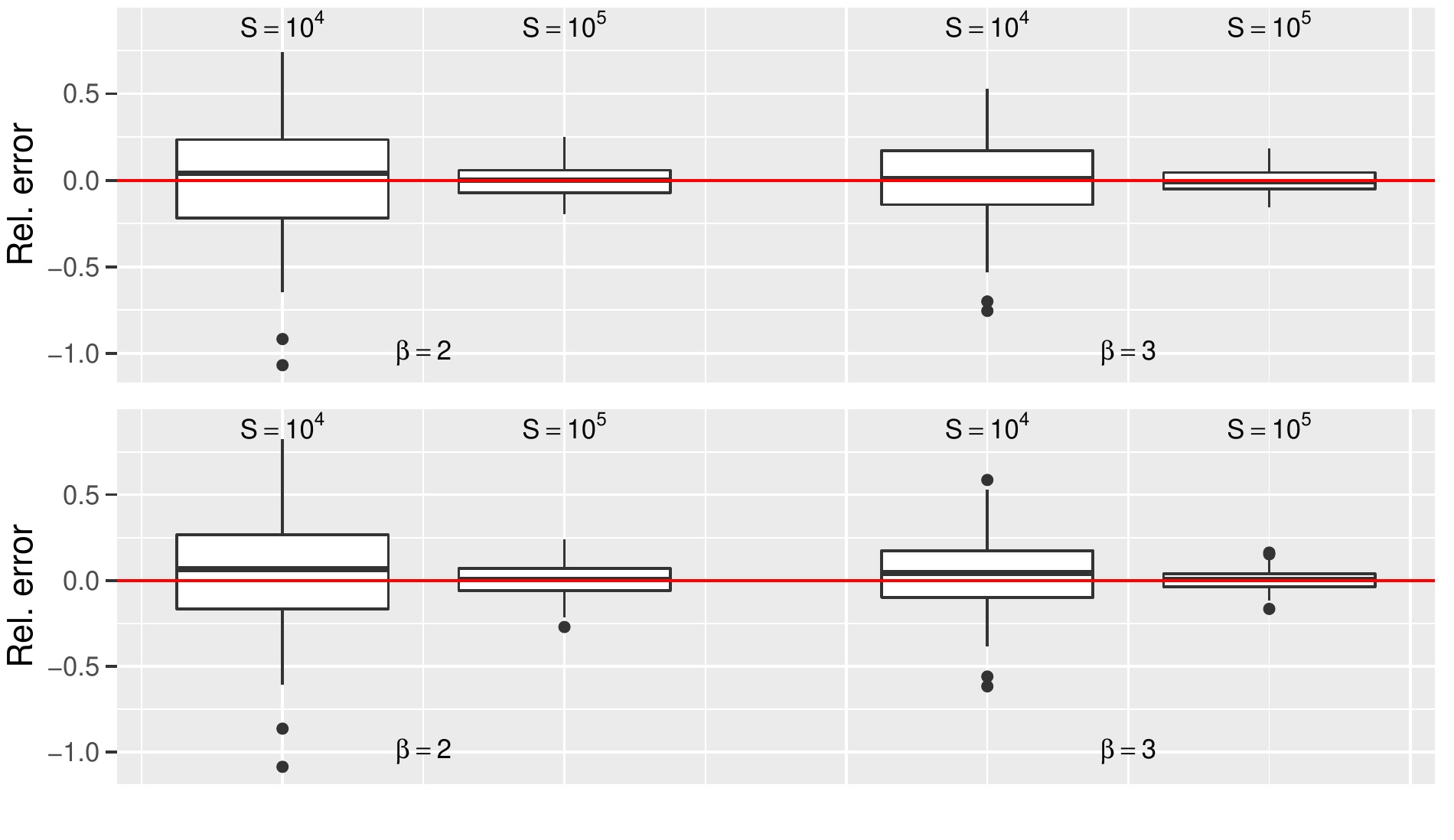}
\caption{Same as Figure \protect\ref{Fig_Res_IPA_11} for $\protect\sigma %
_{12}$.}
\label{Fig_Res_IPA_12}
\end{figure}

\begin{figure}[H]
\center
\includegraphics[scale=0.8]{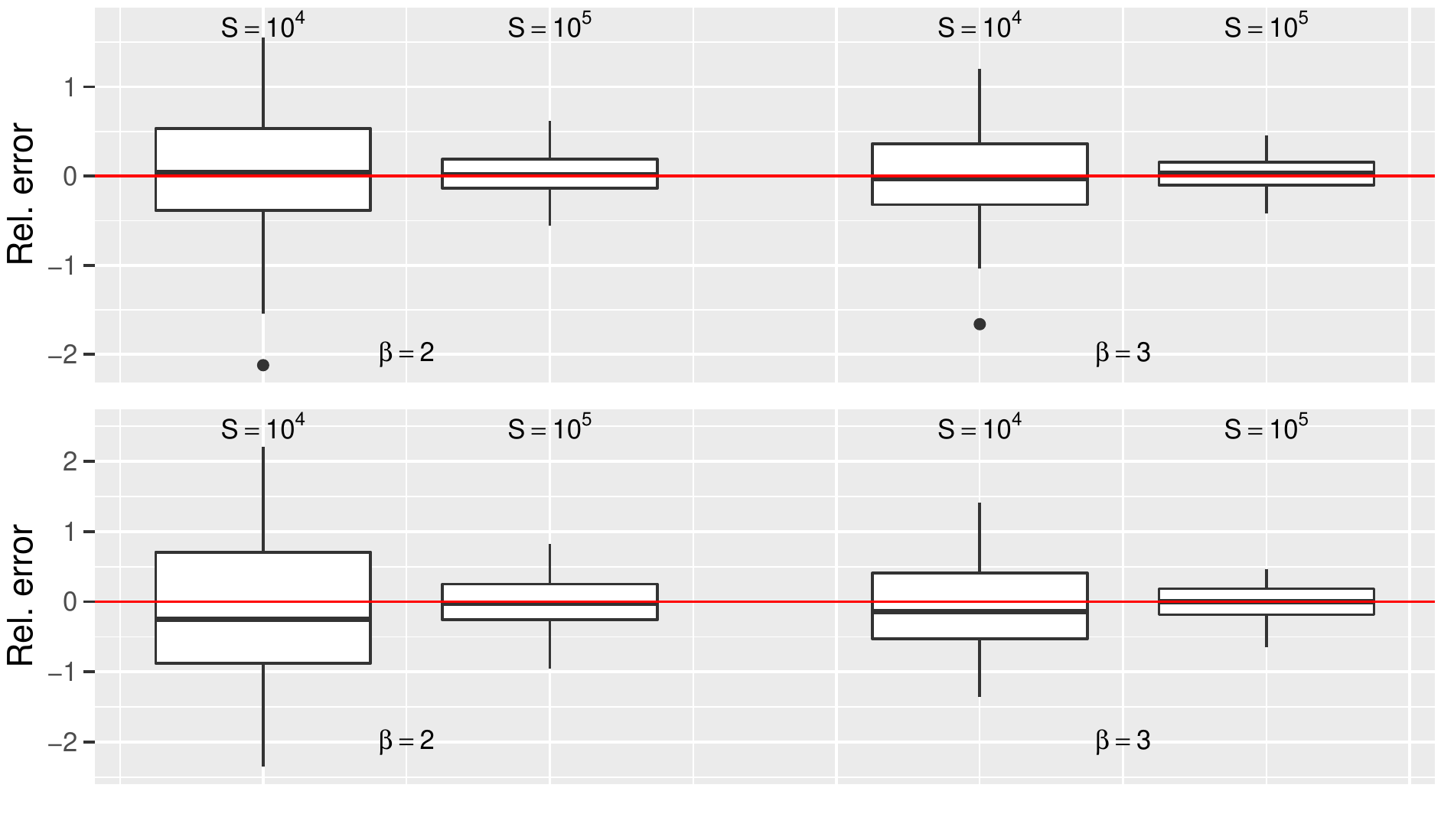}
\caption{Same as Figure \protect\ref{Fig_Res_IPA_11} for $\protect\sigma %
_{22}$.}
\label{Fig_Res_IPA_22}
\end{figure}

\subsection{Application}

\label{Subsec_Application}

This section is devoted to the computation of sensitivities of the measure $%
R(\Bs{\theta})$ (see \eqref{Eq_Ex_Relevant_Dependence_Measure}) in a
real case study that is valuable for the insurance/reinsurance industry. We
consider concrete wind speed data on a region where insurance/reinsurance
plays a significant role and for which power-law type damage functions for
losses stemming from extreme wind speeds are clearly documented in the
literature.

\subsubsection{Data}

We consider publicly available data from the European Centre for Medium
Range Weather Forecasting (ECMWF), more precisely a subset of the ERA5
(ECMWF Reanalysis 5th Generation) data set. More specifically the data we
study consist in hourly $10$ m wind gust time series from 1 January 1979 at
07:00 to 1 June 2019 at 00:00. The region we consider is a rectangle from $%
6^{\circ }$ to $9.75^{\circ }$ longitude and $49.75^{\circ }$ to $%
52.25^{\circ }$ latitude, and the resolution is $0.25^{\circ }$ longitude
and $0.25^{\circ }$ latitude; see Figure \ref{Fig_Plot_Region}. Thus, the
rectangle contains 176 grid points and is basically centred over the Ruhr
region in Germany. This area exhibits a high total residential insured value
per unit of surface, and, according to \citet[][Figure 2]{prahl2012applying}%
, a damage function of power-law type with an exponent around 8 seems
appropriate for insured losses (due to high wind speeds) on residential
buildings in that region. At each grid point, we compute the $41$ seasonal
(from October to March) maxima, leading to the dataset we use to fit
different max-stable models for $Y_{\Bs{\theta}}$. For the first
season, the maximum is computed over January--March. Considering the period
October--March allows us to remove seasonal non-stationarity in the wind
speed time series and to focus mainly on winter storms (and not on highly
localized extreme winds occurring during intense summer thunderstorms). 
\begin{figure}[H]
\center
\includegraphics[scale=1]{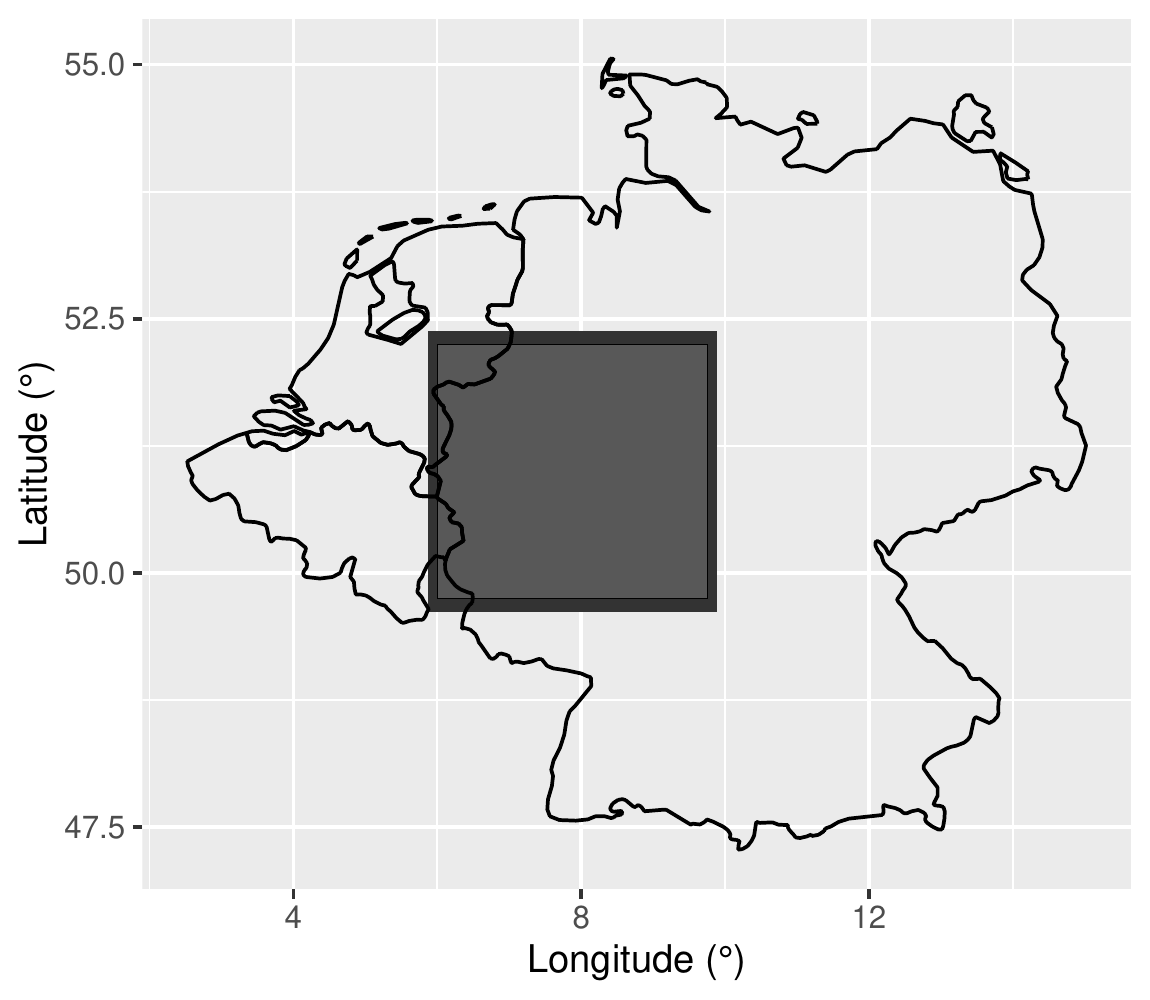}
\caption{Considered region (indicated by the shaded rectangle).}
\label{Fig_Plot_Region}
\end{figure}

\subsubsection{Results}

The max-stable models we consider are both the Brown--Resnick random field
with semi-variogram specified in \eqref{Eq_Power_Variogram} and the Smith
field. Regarding the location, scale and shape parameters, it is reasonable
to consider them as constant over the region (shown in a more detailed
analysis of the same data that will be soon available in a subsequent
paper). Modelling these parameters using trend surfaces (instead of
considering specific parameters at each grid point) is common as it reduces
the estimation time and allows prediction at sites where no observations are
available.

Both models are fitted using maximum pairwise likelihood estimation %
\citep[e.g.,][]{padoan2010likelihood, davison2012statistical} implemented in
the \texttt{fitmaxstab} function of SpatialExtremes \citep{PackageSpatialExtremes}.
The marginal and dependence parameters are estimated in a single step. Then
we perform model selection by minimization of the composite likelihood
information criterion (CLIC); see \cite{varin2005note}.

Table \ref{Table_CLIC_Param_Application} shows that, according to the CLIC,
the Brown--Resnick random field is more compatible with our data than the
Smith random field is. We also fitted the Schlather random field %
\citep{schlather2002models} with various correlation functions
(Whittle-Matérn, powered exponential and Cauchy) and the Brown--Resnick
random field is more appropriate according to the CLIC. This is in
agreement, e.g., with the results obtained by \cite{davison2012statistical}
in the case of rainfall.

\begin{table}[H]
\center
\resizebox{\textwidth}{!}{
\begin{tabular}{c|c|c|c|c|c|c|c}
\textbf{Brown--Resnick} & CLIC &  & $\kappa $ & $\psi $ & $\eta $ & $\tau $
& $\xi $ \\ 
& 6020169 &  & $3.05\ (0.98)$ & $0.86\ (0.06)$ & $26.11\ (0.41)$ & $2.90\
(0.24)$ & $-0.11\ (0.03)$ \\ \hline
\textbf{Smith} & CLIC & $\sigma _{11}$ & $\sigma _{12}$ & $\sigma _{22}$ & $%
\eta $ & $\tau $ & $\xi $ \\ 
& 6084169 & $0.88\ (0.17)$ & $0.07\ (0.03)$ & $2.43\ (0.47)$ & $26.12\
(0.38) $ & $2.92\ (0.22)$ & $-0.10\ (0.01)$
\end{tabular}
}
\newline
\caption{Values of the CLIC and fitted parameters of the Brown--Resnick and
Smith max-stable random fields. The values inside the brackets are the
standard errors.}
\label{Table_CLIC_Param_Application}
\end{table}

Figure \ref{Fig_GoodnessFitBR} shows that the theoretical extremal
coefficient function \citep[e.g.,][]{schlather2003dependence} of the fitted
Brown--Resnick model agrees reasonably well with the empirical extremal
coefficients, though it is slightly above their binned estimates. This low
underestimation of the spatial dependence may come from the choice of very
parsimonious trend surfaces for the location, scale and shape
parameters. The comparison done in Figure \ref{Fig_GoodnessFitBR}
constitutes a classical graphical goodness-of-fit diagnostic for max-stable
models and it shows here that the proposed model fits the data sufficiently
well for the purpose of this application.

\begin{figure}[H]
\center
\includegraphics[scale=0.7]{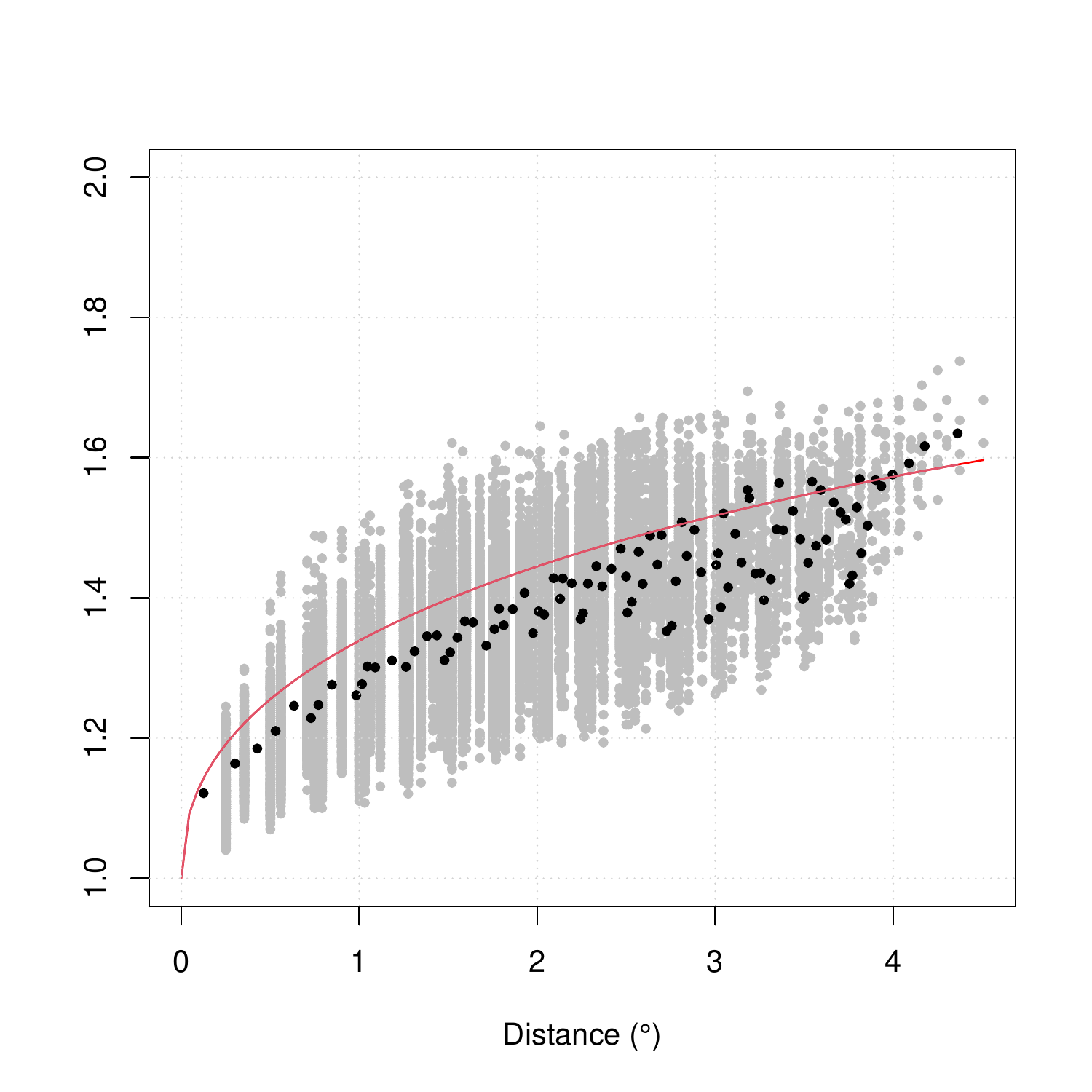}
\caption{Theoretical extremal coefficient function from the fitted model
(red line) and empirical extremal coefficients (points). The grey and black
points are pairwise and binned (with $1000$ bins) estimates, respectively.}
\label{Fig_GoodnessFitBR}
\end{figure}

The final dependence measure is 
\begin{equation}
R(\Bs{\theta})=\mbox{Corr}\left( X^{8}_{\Bs{\theta}}(\mathbf{x}%
_{1}),X^{8}_{\Bs{\theta}}(\mathbf{x}_{2}) \right) ,\quad \mathbf{x}%
_{1},\mathbf{x}_{2}\in \Mbb{R}^{2},  \label{Eq_FinalDepMeasure}
\end{equation}%
where $X_{\Bs{\theta}}$ is a general Brown--Resnick field having as
marginal parameters those in Table \ref{Table_CLIC_Param_Application}, and
we aim at providing its derivative at $\Bs{\theta}_{0}=(3.05,0.86)^{\prime }$. As the Brown--Resnick random field appears as a suitable model for wind
speed maxima and the power-law with $8$ as exponent is an appropriate damage
function, this dependence measure is well suited in practice for
(re)insurance companies on the region considered. The values of the
sensitivities obtained by LRM and their relative counterparts are given in
Table \ref{Table_FinalSensitivities} for the same combinations of sites as
in Section \ref{Subsubsec_SimLMRBR}. The combinations $\mathbf{x}%
_{1}=(0,0)^{\prime },\mathbf{x}_{2}=(1,1)^{\prime }$ and $\mathbf{x}%
_{1}=(0,0)^{\prime },\mathbf{x}_{2}=(3,2)^{\prime }$ correspond to pairs of
sites within our region of interest if we put the origin $(0,0)^{\prime }$
on the left part and in the lower-left corner of the rectangle,
respectively; by stationarity of the field, the origin can be chosen
arbitrarily. The site $\mathbf{x}_{2}=(9,9)^{\prime }$ is far to the East of
our region even we take as origin the lower-left corner of the rectangle,
but we chose it to emphasize that sensitivities can be very high for sites
that are highly distant. The relative sensitivities are slightly lower in
absolute value than in the cases $\beta =2$ and $\beta =3$ (see Table \ref%
{Fig_Res_LRM_Kappa}). However they can be quite substantial (e.g., for $%
\psi $ in the case $\mathbf{x}_{2}=(1,1)^{\prime }$ and $\kappa $ in the
case $\mathbf{x}_{2}=(3,2)^{\prime }$) or very high (for both $\kappa$
and $\psi$ in the case $\mathbf{x}_{2}=(9,9)^{\prime }$), which
highlights that looking at sensitivities is strongly recommended in concrete
risk assessment studies. 

\begin{table}[H]
\center
\begin{tabular}{c||c|c||c|c||c}
$\mathbf{x}_{2}$ & $\kappa$ & $\psi$ & $\kappa$ & $\psi$ & $R(\Bs{\theta}_{0})$ \\ \hline
$(1,1)^{\prime }$ & 0.039 & 0.106 & 0.046 & 0.126 & 0.840 \\ 
$(3,2)^{\prime }$ & 0.068 & -0.041 & 0.100 & -0.059 & 0.685 \\ 
$(9,9)^{\prime }$ & 0.096 & -0.486 & 0.278 & -1.409 & 0.345 
\end{tabular}
\newline
\caption{The left panel gives the values of $\partial R(\Bs{\protect \theta})/\partial \protect\kappa |_{\Bs{\protect \theta} =\Bs{%
\protect\theta }_{0}}$ and $\partial R(\Bs{\protect\theta })/\partial 
\protect\psi |_{\Bs{\protect\theta }=\Bs{\protect\theta }_{0}}$
for the different combinations of sites. The middle one displays the
previous values normalized by $R(\Bs{\protect\theta }_{0})$. The right
one gives $R(\Bs{\protect\theta }_{0})$. The sensitivities and
normalized sensitivities have been computed using the LRM method with $%
S=10^{6}$.}
\label{Table_FinalSensitivities}
\end{table}

\section{Discussion}

\label{Sec_Conclusion}

Max-stable random fields are
particularly suitable to model extreme spatial (e.g., environmental) events and, if applicable, LRM as well as IPA are powerful techniques
to estimate the derivatives of an expected performance with respect to
the model parameters. In this paper,
we introduce these two methods to extreme-value theory by applying them to expected performances based on the Brown--Resnick and the Smith max-stable random fields.
In a first part, we give convenient and tractable conditions on these fields that ensure the validity of LRM and IPA. Obtaining such tractable conditions is
non-trivial due to the complex structure of max-stable fields. Then, we focus on one of the several frameworks where our theoretical results can be valuable, more precisely the context of risk assessment.
We apply LRM and IPA to a specific example of dependence measure which is suited to assess the dependence between damage due to extreme wind speeds at two stations, and is thus worthwhile for actuarial practice. We show through a simulation study that both methods perform very well in various configurations. We finally present a concrete application involving reanalysis wind speed data; it emphasizes the importance of considering sensitivities in a risk assessment context when using max-stable fields.

It would be interesting to see how the LRM and IPA estimators behave in
other configurations.
In some situations (e.g., where the derivatives are very close to $0$),
appropriate improvements of the classical
Monte Carlo method (such as importance sampling) might be considered to
reduce the variability of the estimators. Future appealing work would also
consist in extending our conditions to other max-stable random fields and
possibly even proposing other methods for max-stable fields \citep[such as, e.g., the one by][]{peng2018new} for which LRM and IPA are not applicable. 

\section*{Acknowledgements}

Both authors gratefully acknowledge Anthony C. Davison and anonymous referees for insightful comments. Erwan Koch would like to thank the Swiss National Science Foundation (project number $200021\_178824$) for financial support.

\newpage 
\appendix

\section{Proofs of the main results}

\label{Sec_Proofs}

\subsection{Proof of Theorem \protect\ref{Th_Interchange_LRM}}

\begin{proof}
From Section 4.4 in \cite{dombry2017asymp}, we know that $\mathbf{Y}_{%
\Bs{\theta}}$ has an H\"{u}sler-Reiss distribution with a strictly
conditionally negative definite matrix given by $\Lambda _{\Bs{\theta}%
}=\left( \lambda _{\Bs{\theta}}(\mathbf{x}_{i},\mathbf{x}_{j})\right)
_{1\leq i,j\leq m}$. By Lemma B.4. in \cite{dombry2017asymp}, we deduce
that, for any neighbourhood $\mathcal{V}_{\Bs{\theta}_{0}}$ of $\Bs{%
\theta }_{0}$ and any $\alpha >0$, there exists a constant $A_{\mathcal{V}_{%
\Bs{\theta}_{0}}}>0$ such that, for all $\Bs{\theta}\in \mathcal{V}%
_{\Bs{\theta}_{0}}$, $\mathbf{y}\in (0,\infty )^{M}$ with $\left\Vert 
\mathbf{y}\right\Vert =1$ and $B\subset \mathcal{I}$, 
\begin{equation*}
\sup_{j=1,...,L}\left\vert \frac{\partial \log V_{\Bs{\theta}}\left( 
\mathbf{y}\right) }{\partial \theta _{j}}\right\vert \leq c_{\alpha }\left( 
\mathbf{y}\right) \qquad \text{and}\qquad \sup_{j=1,...,L}\left\vert \frac{%
\partial }{\partial \theta _{j}}\frac{\partial ^{|B|}}{\partial \mathbf{y}%
_{B}}\log V_{\Bs{\theta}}\left( \mathbf{y}\right) \right\vert \leq
c_{\alpha }\left( \mathbf{y}\right),
\end{equation*}
where
\begin{equation*}
c_{\alpha }\left( \mathbf{y}\right) =A_{\mathcal{V}_{\Bs{\theta}%
_{0}}}\sum_{i=1}^{M}y_{i}^{-\alpha }\text{.}
\end{equation*}%
Equation (14) in \cite{dombry2017asymp} then gives, for all $\mathbf{y}\in
(0,\infty )^{M}$,%
\begin{equation}
\sup_{j=1,...,L}\left\vert \frac{\partial f_{\Bs{\theta}}\left( \mathbf{%
y}\right) }{\partial \theta _{j}}\right\vert \leq \left(
M+\sum_{i=1}^{M}y_{i}^{-1}\right) c_{\alpha }\left( \frac{\mathbf{y}}{%
\left\Vert \mathbf{y}\right\Vert }\right) f_{\Bs{\theta}}\left( \mathbf{%
y}\right) . \label{Eq_SupDerivPart}
\end{equation}%
Now let us consider the ratio%
\begin{equation*}
\frac{f_{\Bs{\theta}}\left( \mathbf{y}\right) }{f_{\Bs{\theta}%
_{0}}\left( \mathbf{y}\right) }=\exp \left( -(V_{\Bs{\theta}}(\mathbf{y}%
)-V_{\Bs{\theta}_{0}}(\mathbf{y}))\right) \frac{\sum_{\pi \in \Pi
}\left( -1\right) ^{|\pi |}\prod\limits_{B\in \pi }\frac{\partial ^{|B|}}{%
\partial \mathbf{y}_{B}}V_{\Bs{\theta}}(\mathbf{y})}{\sum_{\pi \in \Pi
}\left( -1\right) ^{|\pi |}\prod\limits_{B\in \pi }\frac{\partial ^{|B|}}{%
\partial \mathbf{y}_{B}}V_{\Bs{\theta}_{0}}(\mathbf{y})}.
\end{equation*}%
By \eqref{Eq_B}, we have
\begin{equation*}
V_{\Bs{\theta}}(\mathbf{y})-V_{\Bs{\theta}_{0}}(\mathbf{y}%
)=\sum_{i=1}^{M}\frac{1}{y_{i}}\left[ \phi _{i}\left( \mathbf{y},\Bs{%
\theta }\right) -\phi _{i}\left( \mathbf{y},\Bs{\theta}_{0}\right) %
\right] \geq B_{\mathcal{V}_{\Bs{\theta}_{0}}}\sum_{i=1}^{M}\frac{1}{%
y_{i}},
\end{equation*}%
where $B_{\mathcal{V}_{\Bs{\theta}_{0}}}$ is given in \eqref{Eq_B}. 
From Equation (28) in \cite{asadi2015extremes}, we know that for all $B\subset 
\mathcal{I}$ and $\mathbf{y}\in (0,\infty )^{M}$ 
\begin{equation*}
\frac{\partial ^{|B|}}{\partial \mathbf{y}_{B}}V_{\Bs{\theta}}(\mathbf{y%
})=\frac{1}{\prod_{i\in B}y_{i}}\left( \frac{1}{|B|}\sum_{i\in
B}y_{i}^{-1}\varphi _{|B|-1}\left( \mathbf{\tilde{y}}_{B,i,\Bs{\theta}%
};R_{B,\Bs{\theta}}^{(i)}\right) \Phi _{M-|B|}\left( \mathbf{\tilde{y}}%
_{B^{c},i,\Bs{\theta}}-\mu _{B,\Bs{\theta}}^{(i)};P_{B,\Bs{%
\theta }}^{(i)}\right) \right),
\end{equation*}%
with%
\begin{eqnarray*}
\mathbf{\tilde{y}}_{B,i,\Bs{\theta}} &=&\left( \log \frac{y_{j}}{y_{i}}%
+2\lambda _{\Bs{\theta}}^{2}(\mathbf{x}_{i},\mathbf{x}_{j})\right)
_{j\in B,j\neq i} \\
R_{B,\Bs{\theta}}^{(i)} &=&\left( 2(\lambda _{\Bs{\theta}}^{2}(%
\mathbf{x}_{i},\mathbf{x}_{j})+\lambda _{\Bs{\theta}}^{2}(\mathbf{x}%
_{i},\mathbf{x}_{m})-\lambda _{\Bs{\theta}}^{2}(\mathbf{x}_{j},\mathbf{x%
}_{m}))\right) _{j,m\in B,j,m\neq i} \\
R_{B^{c},\Bs{\theta}} &=&\left( 2(\lambda _{\Bs{\theta}}^{2}(%
\mathbf{x}_{i},\mathbf{x}_{j})+\lambda _{\Bs{\theta}}^{2}(\mathbf{x}%
_{i},\mathbf{x}_{m})-\lambda _{\Bs{\theta}}^{2}(\mathbf{x}_{j},\mathbf{x%
}_{m}))\right) _{j,m\in B^{c}} \\
R_{B^{c},B,\Bs{\theta}}^{(i)} &=&\left( 2(\lambda _{\Bs{\theta}%
}^{2}(\mathbf{x}_{i},\mathbf{x}_{j})+\lambda _{\Bs{\theta}}^{2}(\mathbf{%
x}_{i},\mathbf{x}_{m})-\lambda _{\Bs{\theta}}^{2}(\mathbf{x}_{j},%
\mathbf{x}_{m}))\right) _{j\in B^{c},m\in B,m\neq i} \\
\mathbf{\tilde{y}}_{B^{c},i,\Bs{\theta}} &=&\left( \log \frac{y_{j}}{%
y_{i}}+2\lambda _{\Bs{\theta}}^{2}(\mathbf{x}_{i},\mathbf{x}%
_{j})\right) _{j\in B^{c},j\neq i} \\
\mu _{B,\Bs{\theta}}^{(i)} &=&R_{B^{c},B,\Bs{\theta}}^{(i)}\left(
R_{B,\Bs{\theta}}^{(i)}\right) ^{-1}\mathbf{\tilde{y}}_{B,i,\Bs{%
\theta }} \\
P_{B,\Bs{\theta}}^{(i)} &=&R_{B^{c},\Bs{\theta}}-R_{B^{c},B,%
\Bs{\theta}}^{(i)}\left( R_{B,\Bs{\theta}}^{(i)}\right)
^{-1}\left( R_{B^{c},B,\Bs{\theta}}^{(i)}\right) ^{\prime }.
\end{eqnarray*}%
Let $\Bs{\Theta}^{\left( \mathbf{y}\right) }=\mathbf{y}/\left\Vert 
\mathbf{y}\right\Vert $. It follows that, for $\pi \in \Pi$,%
\begin{eqnarray*}
&&\prod\limits_{B\in \pi }\frac{\partial ^{|B|}}{\partial \mathbf{y}_{B}}V_{%
\Bs{\theta}}(\mathbf{y}) \\
&=&\frac{1}{\prod_{i=1}^{M}y_{i}}\left\Vert \mathbf{y}\right\Vert ^{-|\pi
|}\prod\limits_{B\in \pi }\left( \frac{1}{|B|}\sum_{i\in B}\left( \Theta
_{i}^{\left( \mathbf{y}\right) }\right) ^{-1}\varphi _{|B|-1}\left( \mathbf{%
\tilde{\Theta}}_{B,i,\Bs{\theta}}^{\left( \mathbf{y}\right) };R_{B,%
\Bs{\theta}}^{(i)}\right) \Phi _{M-|B|}\left( \mathbf{\tilde{\Theta}}%
_{B^{c},i,\Bs{\theta}}^{\left( \mathbf{y}\right) }-\mu _{B,\Bs{%
\theta }}^{(i)};P_{B,\Bs{\theta}}^{(i)}\right) \right),
\end{eqnarray*}%
with%
\begin{eqnarray*}
\mathbf{\tilde{\Theta}}_{B,i,\Bs{\theta}}^{\left( \mathbf{y}\right) }
&=&\left( \log \left( \frac{\Theta _{j}^{\left( \mathbf{y}\right) }}{\Theta
_{i}^{\left( \mathbf{y}\right) }}\right) +2\lambda _{\Bs{\theta}}^{2}(%
\mathbf{x}_{i},\mathbf{x}_{j})\right) _{j\in B,j\neq i} \\
\mathbf{\tilde{\Theta}}_{B^{c},i,\Bs{\theta}}^{\left( \mathbf{y}\right)
} &=&\left( \log \left( \frac{\Theta _{j}^{\left( \mathbf{y}\right) }}{%
\Theta _{i}^{\left( \mathbf{y}\right) }}\right) +2\lambda _{\Bs{\theta}%
}^{2}(\mathbf{x}_{i},\mathbf{x}_{j})\right) _{j\in B^{c},j\neq i}.
\end{eqnarray*}%
Let%
\begin{equation*}
C\left( \pi ,\Bs{\Theta}^{\left( \mathbf{y}\right) },\Bs{\theta}%
\right) =\prod\limits_{B\in \pi }\left( \frac{1}{|B|}\sum_{i\in B}\left(
\Theta _{i}^{\left( \mathbf{y}\right) }\right) ^{-1}\varphi _{|B|-1}\left( 
\mathbf{\tilde{\Theta}}_{B,i,\Bs{\theta}}^{\left( \mathbf{y}\right)
};R_{B,\Bs{\theta}}^{(i)}\right) \Phi _{M-|B|}\left( \mathbf{\tilde{%
\Theta}}_{B^{c},i,\Bs{\theta}}^{\left( \mathbf{y}\right) }-\mu _{B,%
\Bs{\theta}}^{(i)};P_{B,\Bs{\theta}}^{(i)}\right) \right).
\end{equation*}%
We can note that $C\left( \pi ,\Bs{\Theta}^{\left( \mathbf{y}\right) },%
\Bs{\theta}\right) $ are positive and uniformly bounded for any
partition $\pi $, any $\Bs{\Theta}^{\left( \mathbf{y}\right) }\in 
\mathcal{S}_{+}=\{\mathbf{y}\in (0,\infty )^{M}:\left\Vert \mathbf{y}%
\right\Vert =1\}$, and any $\Bs{\theta}\in \mathcal{V}_{\Bs{\theta}%
_{0}}$. Then%
\begin{equation*}
\frac{\sum_{\pi \in \Pi }\left( -1\right) ^{|\pi |}\prod\limits_{B\in \pi }%
\frac{\partial ^{|B|}}{\partial \mathbf{y}_{B}}V_{\Bs{\theta}}(\mathbf{y%
})}{\sum_{\pi \in \Pi }\left( -1\right) ^{|\pi |}\prod\limits_{B\in \pi }%
\frac{\partial ^{|B|}}{\partial \mathbf{y}_{B}}V_{\Bs{\theta}_{0}}(%
\mathbf{y})}=\frac{\sum_{\pi \in \Pi }\left\Vert \mathbf{y}\right\Vert
^{-|\pi |}\left( -1\right) ^{|\pi |}C\left( \pi ,\Bs{\Theta}^{\left( 
\mathbf{y}\right) },\Bs{\theta}\right) }{\sum_{\pi \in \Pi }\left\Vert 
\mathbf{y}\right\Vert ^{-|\pi |}\left( -1\right) ^{|\pi |}C\left( \pi ,%
\Bs{\Theta}^{\left( \mathbf{y}\right) },\Bs{\theta}_{0}\right) }.
\end{equation*}%
The numerator and the denominator are polynomials of order $M$ in $%
\left\Vert \mathbf{y}\right\Vert^{-1} $. If $|\pi |=1$, then $\pi $ $%
=\{\{1,...,M\}\}$ and 
\begin{equation*}
C\left( \pi ,\Bs{\Theta}^{\left( \mathbf{y}\right) },\Bs{\theta}%
\right) =V_{\Bs{\theta}}(\Bs{\Theta}^{\left( \mathbf{y}\right) }).
\end{equation*}%
If $|\pi |=M$, then $\pi $ $=\{\{1\},...,\{M\}\}$ and%
\begin{equation*}
C\left( \pi ,\Bs{\Theta}^{\left( \mathbf{y}\right) },\Bs{\theta}%
\right) =\prod_{i=1}^{M}\left( \left( \Theta _{i}^{\left( \mathbf{y}\right)
}\right) ^{-1}\Phi _{M-1}\left( \mathbf{\tilde{\Theta}}_{\{1,...,M\}%
\backslash \{i\},i,\Bs{\theta}}^{\left( \mathbf{y}\right) }-\mu _{\{i\},%
\Bs{\theta}}^{(i)};P_{\{i\},\Bs{\theta}}^{(i)}\right) \right) .
\end{equation*}%
The ratio of both polynomials is bounded over any compact set included in $(0,\infty )$ and has limits as $\left\Vert \mathbf{y}\right\Vert$ tends to $0$ or $\infty$. Moreover these limits are also bounded for any $\Bs{\Theta}^{\left( \mathbf{y}\right) }\in  \mathcal{S}_{+}$. We can deduce that there exists a constant $C_{\mathcal{V}_{\Bs{\theta}%
_{0}}}$ such that%
\begin{equation*}
\sup_{\Bs{\theta}\in \mathcal{V}_{\Bs{\theta}_{0}}}\sup_{\mathbf{y}%
\in (0,\infty )^{M}}\left\vert \frac{\sum_{\pi \in \Pi }\left( -1\right)
^{|\pi |}\prod\limits_{B\in \pi }\frac{\partial ^{|B|}}{\partial \mathbf{y}%
_{B}}V_{\Bs{\theta}}(\mathbf{y})}{\sum_{\pi \in \Pi }\left( -1\right)
^{|\pi |}\prod\limits_{B\in \pi }\frac{\partial ^{|B|}}{\partial \mathbf{y}%
_{B}}V_{\Bs{\theta}_{0}}(\mathbf{y})}\right\vert <C_{\mathcal{V}_{%
\Bs{\theta}_{0}}}.
\end{equation*}%
It follows from \eqref{Eq_SupDerivPart} that%
\begin{equation*}
\sup_{j=1,...,L}\left\vert \frac{\partial f_{\Bs{\theta}}\left( \mathbf{%
y}\right) }{\partial \theta _{j}}\right\vert \leq A_{\mathcal{V}_{\Bs{%
\theta }_{0}}}C_{\mathcal{V}_{\Bs{\theta}_{0}}}\left( 1+\frac{1}{M}%
\sum_{i=1}^{M}y_{i}^{-1}\right) \left( \sum_{i=1}^{M}y_{i}^{-\alpha }\right)
\left\Vert \mathbf{y}\right\Vert ^{\alpha }\exp \left( \sum_{i=1}^{M}\frac{1%
}{y_{i}}B_{\mathcal{V}_{\Bs{\theta}_{0}}}\right) f_{\Bs{\theta}%
_{0}}\left( \mathbf{y}\right).
\end{equation*}%
Let us choose 
\begin{equation*}
\Psi \left( \mathbf{y}\right) =A_{\mathcal{V}_{\Bs{\theta}_{0}}}C_{%
\mathcal{V}_{\Bs{\theta}_{0}}}|H\left( \mathbf{y}\right) |\left( 1+%
\frac{1}{M}\sum_{i=1}^{M}y_{i}^{-1}\right) \left(
\sum_{i=1}^{M}y_{i}^{-\alpha }\right) \left\Vert \mathbf{y}\right\Vert
^{\alpha }\exp \left( -\sum_{i=1}^{M}\frac{1}{y_{i}}B_{\mathcal{V}_{\Bs{%
\theta }_{0}}}\right) f_{\Bs{\theta}_{0}}\left( \mathbf{y}\right) 
\end{equation*}%
to get%
\begin{equation*}
|H\left( \mathbf{y}\right) |\sup_{j=1,...,M}\left\vert \partial f_{\Bs{%
\theta }}\left( \mathbf{y}\right) /\partial \theta _{j}\right\vert \leq \Psi
\left( \mathbf{y}\right) 
\end{equation*}%
for all $\Bs{\theta}\in \mathcal{V}_{\Bs{\theta}_{0}}$ and almost
every $\mathbf{y}\in (0,\infty )^{M}$. The result follows by the dominated
convergence theorem. 

\end{proof}

\subsection{Proof of Theorem \protect\ref{Th_MainResultIPA}}

\label{Sec_Proof_Th_MainResultIPA}

Recall that $(U_{i},\mathbf{C}_{i})_{i\geq 1}$ are the points of a Poisson
point process on $(0,\infty )\times \mathbb{R}^{d}$ with intensity function $%
u^{-2} \mathrm{d}u \times \mathrm{d}\mathbf{c}$. For $i\geq 1$,
let $\varphi _{i,\Sigma }$ be the function from $\mathbb{R}^{d}$ to $\mathbb{%
R}$ defined by $\varphi _{i,\Sigma }\left( \mathbf{x}\right) =U_{i}\varphi
_{M}(\mathbf{x}-\mathbf{C}_{i},\Sigma )$.

Let us begin by noting that for each $j=1,\ldots,M$, the supremum $Y_{\Sigma }(%
\mathbf{x}_{j})=\bigvee_{i=1}^{\infty }U_{i}\varphi _{M}(\mathbf{x}_{j}-%
\mathbf{C}_{i},\Sigma )$ is a.s. attained by a unique function $\varphi
_{i,\Sigma }$ at $\mathbf{x}_{j}$.

\begin{Prop}
\label{Lem_Simplification_Smith} Let $\mathbf{x}_{1},\dots ,\mathbf{x}%
_{M}\in \mathbb{R}^{d}$, and define for $\mathbf{x}\in \mathbb{R}^{d}$, 
\begin{equation*}
\mathcal{I}_{\mathbf{x}}=\left\{ k:\bigvee_{i=1}^{\infty }U_{i}\varphi
_{M}(\mathbf{x}-\mathbf{C}_{i},\Sigma )=\varphi _{k,\Sigma }\left( \mathbf{%
x}\right) \right\} .
\end{equation*}%
Then a.s., we have $ |\mathcal{I}_{\mathbf{x}_{j}}|=1$ for all $j\geq 1$%
, where we recall that $|.|$ stands for the cardinality of a set.
\end{Prop}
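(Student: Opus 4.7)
The plan is to reduce to a single arbitrary site $\mathbf{x}\in\Mbb{R}^d$ and to conclude by a union bound over $j=1,\ldots,M$. The main tool will be the mapping theorem for Poisson point processes applied to $\psi_{\mathbf{x}}:(u,\mathbf{c})\mapsto u\varphi_M(\mathbf{x}-\mathbf{c},\Sigma)$: I would argue that the image point process $\{V_i\}_{i\geq 1}=\{\psi_{\mathbf{x}}(U_i,\mathbf{C}_i)\}_{i\geq 1}$ is a Poisson random measure on $(0,\infty)$ with intensity equal to the pushforward $(\psi_{\mathbf{x}})_*\bigl(u^{-2}du\,d\mathbf{c}\bigr)$.

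A short computation using Tonelli's theorem gives, for every $t>0$,
\begin{equation*}
\int_{\Mbb{R}^d}\int_0^{\infty}\Mb{1}\bigl\{u\varphi_M(\mathbf{x}-\mathbf{c},\Sigma)>t\bigr\}u^{-2}\,du\,d\mathbf{c}=\frac{1}{t}\int_{\Mbb{R}^d}\varphi_M(\mathbf{x}-\mathbf{c},\Sigma)\,d\mathbf{c}=\frac{1}{t},
\end{equation*}
since $\varphi_M(\cdot,\Sigma)$ integrates to one. Hence the pushforward intensity is $t^{-2}dt$ on $(0,\infty)$, which is diffuse and $\sigma$-finite (and, incidentally, recovers the fact that $Y_{\Sigma}(\mathbf{x})$ is standard Fr\'echet distributed). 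Because this pushforward measure is atomless, the image Poisson random measure is a.s.\ simple, so all the $V_i$ are pairwise distinct almost surely.

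From this I would extract the two ingredients needed to obtain $|\Mc{I}_{\mathbf{x}}|=1$ a.s. First, \emph{attainment}: since the mean number of $V_i$ exceeding any $\epsilon>0$ equals $1/\epsilon<\infty$, only finitely many $V_i$ lie above any positive level, and because $Y_{\Sigma}(\mathbf{x})>0$ is a.s.\ finite (Fr\'echet), the supremum $Y_{\Sigma}(\mathbf{x})=\bigvee_i V_i$ is attained as the maximum of a finite set (choose any $\epsilon\in(0,Y_{\Sigma}(\mathbf{x}))$). Second, \emph{uniqueness}: pairwise distinctness of the $V_i$ then forces this attainment to occur at exactly one index. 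A union bound over $j=1,\ldots,M$ finally yields $\Mbb{P}\bigl(\exists j:|\Mc{I}_{\mathbf{x}_j}|>1\bigr)\leq \sum_{j=1}^{M}\Mbb{P}(|\Mc{I}_{\mathbf{x}_j}|>1)=0$.

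The only mildly delicate point is the application of the mapping theorem when $\psi_{\mathbf{x}}$ is \emph{not} injective: distinct $(u_1,\mathbf{c}_1)\neq(u_2,\mathbf{c}_2)$ can produce identical values $\psi_{\mathbf{x}}(u_1,\mathbf{c}_1)=\psi_{\mathbf{x}}(u_2,\mathbf{c}_2)$. What rules this out at the level of the image PPP is precisely the atomlessness of the pushforward intensity, which guarantees that the resulting Poisson random measure is a.s.\ a simple point process. Once this step is recorded carefully, the remainder is a direct calculation plus the trivial union bound over the $M$ sites.
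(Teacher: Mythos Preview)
Your proposal is correct and follows the same approach as the paper: map the marked Poisson process through $\psi_{\mathbf{x}}$ to obtain a Poisson random measure on $(0,\infty)$, and use that its intensity is atomless to conclude there are a.s.\ no ties. The paper's proof is two sentences long and simply asserts that $\{\varphi_{i,\Sigma}(\mathbf{x})\}_{i\ge 1}$ is a Poisson random measure on $(0,\infty)$ with no atoms; your version fills in the computation of the pushforward intensity, the attainment argument, and the union bound, all of which the paper leaves implicit.
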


\begin{proof}
Note that for every $\mathbf{x}\in \mathbb{R}^{d}$, $\left\{ \varphi
_{i,\Sigma }\left( \mathbf{x}\right) ,i=1,2,\ldots \right\} $ is a Poisson
random measure on $(0,\infty )$, which has no atoms. Therefore, the maxima
will have no ties with probability one.

\end{proof}

\begin{Rq}
The proof is also a direct consequence of Proposition 2.5 in \cite%
{Dombry2013}.
\end{Rq}

By Proposition \ref{Lem_Simplification_Smith}, we can define, for $j=1,\ldots,M$, the a.s. unique indexes $i_{\mathbf{x}_{j},\Sigma }$ satisfying 
\begin{equation*}
Y_{\Sigma }(\mathbf{x}_{j})=\varphi _{i_{\mathbf{x}_{j},\Sigma }}\left( 
\mathbf{x}_{j}\right) .
\end{equation*}

We now fix $\omega \in \Omega $ and consider one realization of the point
process $(U_{i},\mathbf{C}_{i})_{i\geq 1}$, denoted by $(U_{i}\left( \omega
\right) ,\mathbf{C}_{i}\left( \omega \right) )_{i\geq 1}$. For ease of
exposition, we do not mention $\omega $ further below. Let us consider the
set 
\begin{equation*}
\mathcal{B}=\left\{ \mathbf{x}\in \mathbb{R}^{d}:\exists j,k\geq 1,j\neq
k,\bigvee_{i=1}^{\infty }U_{i}\varphi _{M}(\mathbf{x}-\mathbf{C}%
_{i},\Sigma )=\varphi _{j,\Sigma }\left( \mathbf{x}\right) =\varphi
_{k,\Sigma }\left( \mathbf{x}\right) \right\} ,
\end{equation*}
i.e., the set of sites $\mathbf{x}\in \mathbb{R}^{d}$ for which the maximum $%
\bigvee_{i=1}^{\infty }U_{i}\varphi _{M}(\mathbf{x}-\mathbf{C}_{i},\Sigma
)$ is attained by at least two distinct functions $\varphi _{j,\Sigma }$
and $\varphi _{k,\Sigma }$.

\begin{Rq}
If $\Sigma =Id_{d}$ (identity matrix with dimension $d \times d$), the set $%
\mathcal{B}$ is the set of boundaries of the cells of a Poisson Laguerre
Tessellation \citep[e.g.,][]{Dombry2018}.
\end{Rq}

A key result is that $\mathcal{B}$ has a null Lebesgue measure.

\begin{Prop}
The set $\mathcal{B}$ has zero Lebesgue measure with probability one.
\end{Prop}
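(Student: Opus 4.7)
The plan is to show that, almost surely, $\Mc{B}$ is contained in a countable union of affine hyperplanes of $\Mbb{R}^d$, and is therefore Lebesgue-null. The key structural observation is that for any pair of distinct indices $j \neq k$, the coincidence equation $\varphi_{j,\Sigma}(\Mb{x}) = \varphi_{k,\Sigma}(\Mb{x})$ reduces, after taking logarithms of the two Gaussian densities, to a linear equation in $\Mb{x}$: the quadratic term $-\tfrac{1}{2}\Mb{x}'\Sigma^{-1}\Mb{x}$ produced by both sides cancels, leaving an affine relation in $\Mb{x}$ whose coefficients depend only on $(U_j,U_k,\Mb{C}_j,\Mb{C}_k,\Sigma)$. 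Whenever the centres $\Mb{C}_j$ and $\Mb{C}_k$ differ, this linear equation defines a hyperplane of dimension $d-1$ in $\Mbb{R}^d$, which has zero $d$-dimensional Lebesgue measure.

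I would then verify that almost surely $\Mb{C}_j \neq \Mb{C}_k$ for every $j \neq k$, and that the underlying Poisson process has at most countably many atoms. The first point follows from the Lebesgue-diffuseness of the spatial marginal of the Poisson intensity via a standard Campbell/Mecke-type computation (the diagonal carries zero mass under the product intensity); the second is automatic from the very definition of a Poisson point process as a sum of Dirac masses indexed by a countable set. Combining these with the previous step, $\Mc{B}$ is contained in the countable union, over pairs of distinct atoms, of the corresponding coincidence hyperplanes, hence has zero Lebesgue measure with probability one.

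An alternative and perhaps cleaner route would proceed by Fubini: writing $\Mbb{E}[\mathrm{Leb}(\Mc{B})] = \int_{\Mbb{R}^d} \Mbb{P}(\Mb{x} \in \Mc{B}) \, \Mr{d}\Mb{x}$ and invoking the pointwise no-ties argument used in the proof of Proposition \ref{Lem_Simplification_Smith} (valid for any fixed $\Mb{x} \in \Mbb{R}^d$) would force the integrand to vanish identically, whence $\mathrm{Leb}(\Mc{B}) = 0$ a.s. The main obstacle in either approach is accommodating the non-local-finiteness of the Poisson intensity $u^{-2}\,\Mr{d}u \times \Mr{d}\Mb{c}$ on $(0,\infty)\times \Mbb{R}^d$: one must rely on $\sigma$-finiteness to guarantee that the standard almost-sure facts invoked (countability of atoms, no coincidence of spatial coordinates, and joint measurability of the indicator of $\Mc{B}$ for the Fubini variant) remain valid despite the intensity being infinite both near $u=0$ and at spatial infinity.
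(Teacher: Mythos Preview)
Your first approach coincides with the paper's core idea: both reduce the coincidence condition $\varphi_{j,\Sigma}(\Mb{x}) = \varphi_{k,\Sigma}(\Mb{x})$ to the affine equation $(\Mb{C}_j - \Mb{C}_k)'\Sigma^{-1}\Mb{x} = \text{const}$, so that each pairwise coincidence locus is a hyperplane. Your execution is in fact tighter than the paper's, which speaks somewhat loosely of ``segments'' and ``one direction'' before concluding via an auxiliary thickening argument; you instead secure the a.s.\ distinctness of the centres explicitly and pass directly by countable subadditivity. Your Fubini alternative is a genuinely different route that the paper does not take: it bypasses the geometry of $\Mc{B}$ entirely by integrating the pointwise no-ties result of Proposition~\ref{Lem_Simplification_Smith} over $\Mbb{R}^d$, at the modest cost of verifying joint measurability of the indicator of $\{(\omega,\Mb{x}):\Mb{x}\in\Mc{B}(\omega)\}$. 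Both of your arguments are correct.
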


\begin{proof}
A point $\mathbf{x}\in \mathcal{B}$ is characterized as follows: 
\begin{align}
& \left. \mathbf{x}\in \mathcal{B}\right.  \notag \\
& \Leftrightarrow \exists j,k\geq 1,j\neq k:U_{k}\varphi _{M}(\mathbf{x}-%
\mathbf{C}_{k},\Sigma )=U_{j}\varphi _{M}(\mathbf{x}-\mathbf{C}_{j},\Sigma )
\notag \\
& \Leftrightarrow \exists j,k\geq 1,j\neq k:(\mathbf{x}-\mathbf{C}%
_{k})^{^{\prime }}\Sigma ^{-1}(\mathbf{x}-\mathbf{C}_{k})-2\log (U_{k})=(%
\mathbf{x}-\mathbf{C}_{j})^{^{\prime }}\Sigma ^{-1}(\mathbf{x}-\mathbf{C}%
_{j})-2\log (U_{j}).  \label{Eq_Equality_Storms}
\end{align}%
Let $\mathbf{h}\in \mathbb{R}^{d}$ such that $\mathbf{x}+\mathbf{h}$ is in a
neighbourhood of $\mathbf{x}$ but still belongs to $\mathcal{B}$, i.e., such
that 
\begin{equation*}
\left\Vert \mathbf{x}+\mathbf{h}-\mathbf{C}_{k}\right\Vert _{\Sigma
^{-1}}^{2}-2\log (U_{k})=\left\Vert \mathbf{x}+\mathbf{h}-\mathbf{C}%
_{j}\right\Vert _{\Sigma^{-1}}^{2}-2\log (U_{j}).
\end{equation*}%
Using \eqref{Eq_Equality_Storms}, we obtain that the previous equality is
equivalent to 
\begin{align*}
& \left. (\mathbf{x}+\mathbf{h}-\mathbf{C}_{k})^{^{\prime }}\Sigma ^{-1}(%
\mathbf{x}+\mathbf{h}-\mathbf{C}_{k})-2\log (U_{k})=(\mathbf{x}+\mathbf{h}-%
\mathbf{C}_{j})^{^{\prime }}\Sigma ^{-1}(\mathbf{x}+\mathbf{h}-\mathbf{C}%
_{j})-2\log (U_{j})\right. \\
& \Leftrightarrow (\mathbf{x}-\mathbf{C}_{k})^{^{\prime }}\Sigma ^{-1}(%
\mathbf{x}-\mathbf{C}_{k})+2(\mathbf{x}-\mathbf{C}_{k})^{^{\prime }}\Sigma
^{-1}\mathbf{h}+\mathbf{h^{^{\prime }}}\Sigma ^{-1}\mathbf{h}-2\log (U_{k})=(%
\mathbf{x}-\mathbf{C}_{j})^{^{\prime }}\Sigma ^{-1}(\mathbf{x}-\mathbf{C}%
_{j}) \\
& \ \ \ \ \ \ \ +2(\mathbf{x}-\mathbf{C}_{j})^{^{\prime }}\Sigma ^{-1}%
\mathbf{h}+\mathbf{h}^{^{\prime }}\Sigma ^{-1}\mathbf{h}-2\log (U_{j}) \\
& \Leftrightarrow (\mathbf{x}-\mathbf{C}_{k})^{^{\prime }}\Sigma ^{-1}%
\mathbf{h}=(\mathbf{x}-\mathbf{C}_{j})^{^{\prime }}\Sigma ^{-1}\mathbf{h} \\
& \Leftrightarrow (\mathbf{C}_{j}-\mathbf{C}_{k})^{^{\prime }}\Sigma ^{-1}%
\mathbf{h}=0.
\end{align*}%
Therefore, $\mathbf{x}+\mathbf{h}\in \mathcal{B}$ in a neighbourhood of $%
\mathbf{x}$ implies that $\mathbf{h}$ is orthogonal to the vector $(\mathbf{C%
}_{j}-\mathbf{C}_{k})$ for the inner product induced by $\Sigma ^{-1}$.
Thus, only one direction is suitable for $\mathbf{h}$. $\mathcal{B}$ is an
union of segments in $\mathbb{R}^{d}$. Moreover there is no ball around $%
\mathbf{x}$ belonging to $\mathcal{B}$, which implies that the interior of $%
\mathcal{B}$ is empty.

Let%
\begin{equation*}
\mathcal{B}_{j,k}=\left\{ \mathbf{y}\in \mathbb{R}^{d}:\inf_{\mathbf{x}\in 
\mathcal{B}}\left\Vert \mathbf{y}-\mathbf{x}\right\Vert <\frac{1}{j}\right\}
\cap \left\{ \mathbf{y}\in \mathbb{R}^{d}:\left\Vert \mathbf{y}\right\Vert
<k\right\} .
\end{equation*}%
It is easily seen that
\begin{equation*}
\lim_{j\rightarrow \infty }\mathcal{B}_{j,k}=\mathcal{B}\cap \left\{ \mathbf{%
y}\in \mathbb{R}^{d}:\left\Vert \mathbf{y}\right\Vert <k\right\}
\end{equation*}%
and hence that $\lim_{j\rightarrow \infty }\nu (\mathcal{B}_{j,k})=0$, where $\nu$ denotes the Lebesgue measure in $\Mbb{R}^d$. Therefore,
\begin{equation*}
\nu (\mathcal{B})=\lim_{k\rightarrow \infty }\nu \left(
\bigcup_{j=1}^{k}\left( \mathcal{B}\cap \left\{ \mathbf{y}\in \mathbb{R}%
^{d}:\left\Vert \mathbf{y}\right\Vert <j\right\} \right) \right) =0.\text{ }%
\end{equation*}
\end{proof}

Let $\mathbf{x}\in \mathbb{R}^{d}/\mathcal{B}$. We are now interested in the
existence of a neighbourhood of $\Sigma _{0}$ over which $i_{\mathbf{x}%
,\Sigma }$ is constant and thus $\Sigma \mapsto Z_{\Sigma }(\mathbf{x})$
becomes differentiable with respect to $\Sigma $.

\begin{Th}
\label{Lem_Derivability} Let $\mathbf{x}\in \mathbb{R}^{d}/\mathcal{B}$.
There exists a neighbourhood of $\Sigma _{0}$, $\mathcal{W}_{\Sigma _{0}}$,
such that $i_{\mathbf{x},\Sigma }$ is constant over this neighbourhood.
Moreover, the function $\Sigma \mapsto \log Y_{\Sigma }(\mathbf{x})$ is
differentiable over $\mathcal{W}_{\Sigma _{0}}$ and 
\begin{equation}
\left. \frac{\partial \log Y_{\Sigma }(\mathbf{x})}{\partial \Sigma }%
\right\vert _{\Sigma =\Sigma _{0}}=-\frac{1}{2}\left( \Sigma
_{0}^{-1}-\Sigma _{0}^{-1}(\mathbf{x}-\mathbf{C}_{i_{\mathbf{x}, \Sigma_{0}}})(\mathbf{x}-\mathbf{C}_{i_{\mathbf{x},\Sigma _{0}}})^{^{\prime
}}\Sigma _{0}^{-1}\right) .  \label{Eq_Deriv_Realis_Sig}
\end{equation}
\end{Th}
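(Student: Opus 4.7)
The plan is to first show that the argmax index $i_{\mathbf{x}, \Sigma}$ is locally constant in a neighborhood of $\Sigma_0$, and then read off differentiability and the explicit derivative from the smoothness of a single Gaussian density in $\Sigma$. Since $\mathbf{x} \notin \mathcal{B}$, Proposition~\ref{Lem_Simplification_Smith} yields a unique index $i^{\ast} := i_{\mathbf{x}, \Sigma_0}$ attaining the supremum at $\Sigma_0$. Moreover $\{\varphi_{i, \Sigma_0}(\mathbf{x})\}_{i\geq 1}$ is a Poisson point process on $(0,\infty)$ with intensity $y^{-2}\,\mathrm{d}y$, so only finitely many of its atoms exceed any positive threshold; in particular the supremum over $i \neq i^{\ast}$ is itself attained, and
\[
\delta := Y_{\Sigma_0}(\mathbf{x}) - \sup_{i \neq i^{\ast}} \varphi_{i, \Sigma_0}(\mathbf{x}) > 0.
\]

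The main technical obstacle is to promote this pointwise gap at $\Sigma_0$ into a uniform statement over a neighborhood of $\Sigma_0$, despite the infinite family of competing storms. I would fix a compact neighborhood $\overline{\mathcal{W}}$ of $\Sigma_0$ consisting of symmetric positive definite matrices whose spectra are uniformly bounded away from $0$ and $\infty$. On such a set the Gaussian density admits a uniform bound of the form $\varphi_M(\mathbf{x} - \mathbf{c}, \Sigma) \leq c_1 \exp(-c_2 \|\mathbf{x} - \mathbf{c}\|^2)$ with constants $c_1, c_2 > 0$ depending only on $\overline{\mathcal{W}}$, so
\[
\sup_{\Sigma \in \overline{\mathcal{W}}} U_i \varphi_M(\mathbf{x} - \mathbf{C}_i, \Sigma) \leq c_1 U_i \exp\!\bigl(-c_2 \|\mathbf{x} - \mathbf{C}_i\|^2\bigr).
\]
A Campbell-type computation on the Poisson process with intensity $u^{-2}\,\mathrm{d}u\,\mathrm{d}\mathbf{c}$ shows that the expected, hence almost-sure, number of indices $i$ for which the right-hand side exceeds any $\varepsilon > 0$ is finite. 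Picking $\varepsilon < \delta/4$ isolates a finite index set $F \supset \{i^{\ast}\}$ outside of which every competitor is uniformly dominated by $Y_{\Sigma_0}(\mathbf{x}) - \delta/2$ over the whole of $\overline{\mathcal{W}}$.

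On the finite set $F$ the maps $\Sigma \mapsto \varphi_{i, \Sigma}(\mathbf{x})$ are continuous, so shrinking to a smaller neighborhood $\mathcal{W}_{\Sigma_0} \subset \overline{\mathcal{W}}$ I can arrange that $|\varphi_{i, \Sigma}(\mathbf{x}) - \varphi_{i, \Sigma_0}(\mathbf{x})| < \delta/4$ for every $i \in F$ and every $\Sigma \in \mathcal{W}_{\Sigma_0}$; combined with the previous estimate this forces $i_{\mathbf{x}, \Sigma} \equiv i^{\ast}$ on $\mathcal{W}_{\Sigma_0}$. Consequently, on this neighborhood,
\[
\log Y_\Sigma(\mathbf{x}) = \log U_{i^{\ast}} - \tfrac{d}{2}\log(2\pi) - \tfrac{1}{2}\log \det \Sigma - \tfrac{1}{2}(\mathbf{x} - \mathbf{C}_{i^{\ast}})^{\prime} \Sigma^{-1}(\mathbf{x} - \mathbf{C}_{i^{\ast}}),
\]
which is a smooth function of $\Sigma$; the standard matrix-calculus identities $\partial \log \det \Sigma / \partial \Sigma = \Sigma^{-1}$ and $\partial(\mathbf{z}^{\prime} \Sigma^{-1} \mathbf{z})/\partial \Sigma = -\Sigma^{-1}\mathbf{z}\mathbf{z}^{\prime}\Sigma^{-1}$, applied with $\mathbf{z} = \mathbf{x} - \mathbf{C}_{i^{\ast}}$ and evaluated at $\Sigma_0$, deliver \eqref{Eq_Deriv_Realis_Sig}. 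The genuine difficulty lies in the second paragraph: without the uniform Gaussian tail bound coupled to a Poisson moment computation, one cannot replace the infinite supremum over storms by a finite one, and the local constancy of the argmax would break down.
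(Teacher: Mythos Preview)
Your proof is correct and reaches the same conclusion as the paper, but by a genuinely different route in the step that controls the infinite family of storms. The paper works in log-space with the inequalities $\|\mathbf{x}-\mathbf{C}_j\|_{\Sigma^{-1}}^2 - \|\mathbf{x}-\mathbf{C}_{i^\ast}\|_{\Sigma^{-1}}^2 > 2\log(U_j/U_{i^\ast})$ and partitions $\{j \neq i^\ast\}$ into three sets $\mathcal{I}_1, \mathcal{I}_2, \mathcal{I}_3$ according to whether the distance gap is below a threshold $\zeta$, or above $\zeta$ with $U_j > U_{i^\ast}$, or above $\zeta$ with $U_j < U_{i^\ast}$. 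The first two sets are finite and are handled by continuity; for the infinite third set the paper uses a norm-equivalence estimate $\|\mathbf{x}-\mathbf{C}_j\|_{\Theta^{-1}}^2 = \|\mathbf{x}-\mathbf{C}_j\|_{\Sigma^{-1}}^2(1+a_j)$ with $\sup_j |a_j| \leq D\|\Sigma^{-1}-\Theta^{-1}\|$, which preserves the strict inequality under small perturbations of $\Sigma$ without any appeal to the Poisson structure. Your argument instead places a uniform Gaussian envelope $c_1 U_i \exp(-c_2\|\mathbf{x}-\mathbf{C}_i\|^2)$ over a compact set of covariance matrices and computes the Poisson intensity of the event that this envelope exceeds $\varepsilon$, which is $\int_{\mathbb{R}^d} (c_1/\varepsilon)\exp(-c_2\|\mathbf{x}-\mathbf{c}\|^2)\,\mathrm{d}\mathbf{c} < \infty$; this yields a finite exceptional set $F$ in one stroke and makes the role of the Poisson structure explicit, whereas the paper's approach stays with elementary inequalities and avoids any intensity computation. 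The derivative computation at the end is identical in both proofs.
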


\begin{proof}
It follows from Proposition \ref{Lem_Simplification_Smith} that 
\begin{equation*}
Y_{\Sigma }(\mathbf{x})=\bigvee_{i=1}^{\infty }U_{i}\varphi _{M}(\mathbf{x}-%
\mathbf{C}_{i},\Sigma )=\varphi _{i_{\mathbf{x},\Sigma }}\left( \mathbf{x}%
\right)
\end{equation*}%
and, for $\mathbf{x}\in \mathbb{R}^{d}/\mathcal{B}$, we have, for all $j\neq
i_{\mathbf{x},\Sigma }$, 
\begin{equation*}
U_{i_{\mathbf{x},\Sigma }}\varphi _{M}(\mathbf{x}-\mathbf{C}_{i_{\mathbf{x}%
,\Sigma }},\Sigma )>U_{j}\varphi _{M}(\mathbf{x}-\mathbf{C}_{j},\Sigma ),
\end{equation*}%
or equivalently 
\begin{equation*}
2\log \left( U_{i_{\mathbf{x},\Sigma }}\right) -\left\Vert \mathbf{x}-%
\mathbf{C}_{i_{\mathbf{x},\Sigma }}\right\Vert _{\Sigma ^{-1}}^{2}>2\log
\left( U_{j}\right) -\left\Vert \mathbf{x}-\mathbf{C}_{j}\right\Vert
_{\Sigma ^{-1}}^{2},
\end{equation*}%
i.e., 
\begin{equation}
\left\Vert \mathbf{x}-\mathbf{C}_{j}\right\Vert _{\Sigma
^{-1}}^{2}-\left\Vert \mathbf{x}-\mathbf{C}_{i_{\mathbf{x},\Sigma
}}\right\Vert _{\Sigma ^{-1}}^{2}>2\log \left( U_{j}/U_{i_{\mathbf{x},\Sigma
}}\right) .  \label{Eq_Diff_Norms_Useful}
\end{equation}%
Let $\zeta >0$ and define 
\begin{eqnarray*}
\mathcal{I}_{1} &=&\left\{ j\neq i_{\mathbf{x},\Sigma }:\left\Vert \mathbf{x}%
-\mathbf{C}_{j}\right\Vert _{\Sigma ^{-1}}^{2}-\left\Vert \mathbf{x}-\mathbf{%
C}_{i_{\mathbf{x},\Sigma }}\right\Vert _{\Sigma ^{-1}}^{2}<\zeta \right\} ,
\\
\mathcal{I}_{2} &=&\left\{ j\neq i_{\mathbf{x},\Sigma }:\left\Vert \mathbf{x}%
-\mathbf{C}_{j}\right\Vert _{\Sigma ^{-1}}^{2}-\left\Vert \mathbf{x}-\mathbf{%
C}_{i_{\mathbf{x},\Sigma }}\right\Vert _{\Sigma ^{-1}}^{2}>\zeta ,2\log
(U_{j}/U_{i_{\mathbf{x},\Sigma }})>0\right\} , \\
\mathcal{I}_{3} &=&\left\{ j\neq i_{\mathbf{x},\Sigma }:\left\Vert \mathbf{x}%
-\mathbf{C}_{j}\right\Vert _{\Sigma ^{-1}}^{2}-\left\Vert \mathbf{x}-\mathbf{%
C}_{i_{\mathbf{x},\Sigma }}\right\Vert _{\Sigma ^{-1}}^{2}>\zeta >0>2\log
(U_{j}/U_{i_{\mathbf{x},\Sigma }})\right\} ,
\end{eqnarray*}%
such that $\mathcal{I}_{1}\cup \mathcal{I}_{2}\cup \mathcal{I}_{3}=\{j\geq
1,j\neq i_{\mathbf{x},\Sigma }\}$. Moreover it is easy to see, using the
definition of $i_{\mathbf{x},\Sigma }$ and the form of the intensity
function of the point process $(U_{i},\mathbf{C}_{i})_{i\geq 1}$, that $|\mathcal{I}_{1}|$ is finite, $|\mathcal{I}_{2}|$ is finite, but $|\mathcal{I}_{3}|$ is infinite.

Let $\Theta $ be a positive-definite matrix of size $d\times d$. We have,
for any $j\geq 1$, 
\begin{eqnarray*}
\left\Vert \mathbf{x}-\mathbf{C}_{j}\right\Vert _{\Sigma ^{-1}}^{2} &=&(%
\mathbf{x}-\mathbf{C}_{j})^{\prime }\Sigma ^{-1}(\mathbf{x}-\mathbf{C}_{j})
\\
&=&(\mathbf{x}-\mathbf{C}_{j})^{\prime }\left( \Sigma ^{-1}-\Theta
^{-1}\right) (\mathbf{x}-\mathbf{C}_{j})+\left\Vert \mathbf{x}-\mathbf{C}%
_{j}\right\Vert _{\Theta ^{-1}}^{2}.
\end{eqnarray*}%
In addition, 
\begin{eqnarray*}
|(\mathbf{x}-\mathbf{C}_{j})^{\prime }\left( \Sigma ^{-1}-\Theta
^{-1}\right) (\mathbf{x}-\mathbf{C}_{j})| &\leq &\left\Vert \mathbf{x}-%
\mathbf{C}_{j}\right\Vert \left\Vert \left( \Sigma ^{-1}-\Theta ^{-1}\right)
\left( \mathbf{x}-\mathbf{C}_{j}\right) \right\Vert \\
&\leq &\left\Vert \mathbf{x}-\mathbf{C}_{j}\right\Vert ^{2}\left\Vert \Sigma
^{-1}-\Theta ^{-1}\right\Vert .
\end{eqnarray*}%
Since $\left\Vert \cdot \right\Vert $ and $\left\Vert \mathbf{\cdot }%
\right\Vert _{\Sigma ^{-1}}$ are equivalent norms, there exists a positive
constant $D$ such that 
\begin{equation*}
|(\mathbf{x}-\mathbf{C}_{j})^{\prime }\left( \Sigma ^{-1}-\Theta
^{-1}\right) (\mathbf{x}-\mathbf{C}_{j})|\leq D\left\Vert \mathbf{x}-\mathbf{%
\ C}_{j}\right\Vert _{\Sigma ^{-1}}^{2}\left\Vert \Sigma ^{-1}-\Theta
^{-1}\right\Vert ,\quad \mathbf{x}\in \mathbb{R}^{d}.
\end{equation*}%
It follows that, for any $j\geq 1$, there exists a function $\mathbf{x}%
\mapsto a_{j}\left( \mathbf{x},\Sigma ,\Theta \right) $ such that 
\begin{equation}
\left\Vert \mathbf{x}-\mathbf{C}_{j}\right\Vert _{\Theta
^{-1}}^{2}=\left\Vert \mathbf{x}-\mathbf{C}_{j}\right\Vert _{\Sigma
^{-1}}^{2}\left( 1+a_{j}\left( \mathbf{x}-\mathbf{C}_{j},\Sigma ,\Theta
\right) \right)  \label{Eq_Def_aj}
\end{equation}%
and 
\begin{equation}
\sup_{\mathbf{x},\mathbf{C}_{j},j\geq 1}\left\vert a_{j}\left( \mathbf{x}-%
\mathbf{C}_{j},\Sigma ,\Theta \right) \right\vert \leq D\left\Vert \Sigma
^{-1}-\Theta ^{-1}\right\Vert .  \label{Eq_uniform_bound}
\end{equation}

Using \eqref{Eq_Def_aj}, we obtain 
\begin{eqnarray}
&&\left\Vert \mathbf{x}-\mathbf{C}_{j}\right\Vert _{\Theta
^{-1}}^{2}-\left\Vert \mathbf{x}-\mathbf{C}_{i_{\mathbf{x},\Sigma
}}\right\Vert _{\Theta ^{-1}}^{2}  \notag \\
&=&\left\Vert \mathbf{x}-\mathbf{C}_{j}\right\Vert _{\Sigma ^{-1}}^{2}\left(
1+a_{j}\left( \mathbf{x}-\mathbf{C}_{j},\Sigma ,\Theta \right) \right)
-\left\Vert \mathbf{x}-\mathbf{C}_{i_{\mathbf{x},\Sigma }}\right\Vert
_{\Sigma ^{-1}}^{2}\left( 1+a_{i_{\mathbf{x},\Sigma }}\left( \mathbf{x}-%
\mathbf{C}_{i_{\mathbf{x},\Sigma }},\Sigma ,\Theta \right) \right)  \notag \\
&=&\left\Vert \mathbf{x}-\mathbf{C}_{j}\right\Vert _{\Sigma
^{-1}}^{2}-\left\Vert \mathbf{x}-\mathbf{C}_{i_{\mathbf{x},\Sigma
}}\right\Vert _{\Sigma ^{-1}}^{2}+\left\Vert \mathbf{x}-\mathbf{C}%
_{j}\right\Vert _{\Sigma ^{-1}}^{2}a_{j}\left( \mathbf{x}-\mathbf{C}%
_{j},\Sigma ,\Theta \right)  \notag \\
&&-\left\Vert \mathbf{x}-\mathbf{C}_{i_{\mathbf{x},\Sigma }}\right\Vert
_{\Sigma ^{-1}}^{2}a_{i_{\mathbf{x},\Sigma }}\left( \mathbf{x}-\mathbf{C}%
_{i_{\mathbf{x},\Sigma }},\Sigma ,\Theta \right)  \notag \\
&=&\left( \left\Vert \mathbf{x}-\mathbf{C}_{j}\right\Vert _{\Sigma
^{-1}}^{2}-\left\Vert \mathbf{x}-\mathbf{C}_{i_{\mathbf{x},\Sigma
}}\right\Vert _{\Sigma ^{-1}}^{2}\right) \left( 1+a_{j}\left( \mathbf{x}-%
\mathbf{C}_{j},\Sigma ,\Theta \right) \right)  \notag \\
&&-\left\Vert \mathbf{x}-\mathbf{C}_{i_{\mathbf{x},\Sigma }}\right\Vert
_{\Sigma ^{-1}}^{2}\left( a_{i_{\mathbf{x},\Sigma }}\left( \mathbf{x}-%
\mathbf{C}_{i_{\mathbf{x},\Sigma }},\Sigma ,\Theta \right) -a_{j}\left( 
\mathbf{x}-\mathbf{C}_{j},\Sigma ,\Theta \right) \right) .
\label{Eq_Diff_Norms}
\end{eqnarray}%
Using \eqref{Eq_uniform_bound}, we see that there exists $\kappa >0$ such
that, for $\left\Vert \Sigma ^{-1}-\Theta ^{-1}\right\Vert <\kappa $, we
have, for all $j\geq 1$ such that $j\neq i_{\mathbf{x},\Sigma }$, that 
\begin{equation*}
\left\Vert \mathbf{x}-\mathbf{C}_{i_{\mathbf{x},\Sigma }}\right\Vert
_{\Sigma ^{-1}}^{2}|a_{i_{\mathbf{x},\Sigma }}\left( \mathbf{x}-\mathbf{C}%
_{i_{\mathbf{x},\Sigma }},\Sigma ,\Theta \right) -a_{j}\left( \mathbf{x}-%
\mathbf{C}_{j},\Sigma ,\Theta \right) |<\zeta /2,
\end{equation*}%
and, for all $j\in \mathcal{I}_{3}$, 
\begin{equation*}
\left( \left\Vert \mathbf{x}-\mathbf{C}_{j}\right\Vert _{\Sigma
^{-1}}^{2}-\left\Vert \mathbf{x}-\mathbf{C}_{i_{\mathbf{x},\Sigma
}}\right\Vert _{\Sigma ^{-1}}^{2}\right) \left( 1+a_{j}\left( \mathbf{x}-%
\mathbf{C}_{j},\Sigma ,\Theta \right) \right) >\zeta /2.
\end{equation*}%
Hence, using \eqref{Eq_Diff_Norms}, we obtain, for all $j\in \mathcal{I}_{3}$%
, 
\begin{equation}
\left\Vert \mathbf{x}-\mathbf{C}_{j}\right\Vert _{\Theta
^{-1}}^{2}-\left\Vert \mathbf{x}-\mathbf{C}_{i_{\mathbf{x},\Sigma
}}\right\Vert _{\Theta ^{-1}}^{2}>0>2\log (U_{j}/U_{i_{\mathbf{x},\Sigma }}).
\label{Eq_Before_Conc_1}
\end{equation}

Now, using \eqref{Eq_Diff_Norms_Useful}, the continuity of $\Sigma
^{-1}\mapsto \left\Vert \mathbf{\cdot }\right\Vert _{\Sigma ^{-1}}$ and the
fact that $|\mathcal{I}_{1}|$ and $|\mathcal{I}_{2}|$ are finite,
there exists $\kappa ^{\prime }>0$ such that, for $\left\Vert \Sigma
^{-1}-\Theta ^{-1}\right\Vert <\kappa ^{\prime }$, we have, for all $j\in $ $%
\mathcal{I}_{1}\cup \mathcal{I}_{2}$, that 
\begin{equation}
\left\Vert \mathbf{x}-\mathbf{C}_{j}\right\Vert _{\Theta
^{-1}}^{2}-\left\Vert \mathbf{x}-\mathbf{C}_{i_{\mathbf{x},\Sigma
}}\right\Vert _{\Theta ^{-1}}^{2}>2\log (U_{j}/U_{i_{\mathbf{x},\Sigma }}).
\label{Eq_Before_Conc_2}
\end{equation}

Combining \eqref{Eq_Before_Conc_1} and \eqref{Eq_Before_Conc_2}, we obtain
that, for all positive definite matrix $\Theta $ satisfying $\left\Vert
\Sigma ^{-1}-\Theta ^{-1}\right\Vert <\min \{\kappa ,\kappa ^{\prime }\}$,
we have, for all $j\geq 1$ such that $j\neq i_{\mathbf{x},\Sigma }$, 
\begin{equation*}
\left\Vert \mathbf{x}-\mathbf{C}_{j}\right\Vert _{\Theta
^{-1}}^{2}-\left\Vert \mathbf{x}-\mathbf{C}_{i_{\mathbf{x},\Sigma
}}\right\Vert _{\Theta ^{-1}}^{2}>2\log (U_{j}/U_{i_{\mathbf{x},\Sigma }}).
\end{equation*}%
Accordingly, $i_{\mathbf{x},\Theta }=i_{\mathbf{x},\Sigma }$. Hence we can
choose the neighbourhood of $\Sigma _{0}$ 
\begin{equation*}
\mathcal{W}_{\Sigma _{0}}=\left\{ \Theta \text{ positive definite}%
:\left\Vert \Sigma _{0}^{-1}-\Theta ^{-1}\right\Vert <\min \{\kappa ,\kappa
^{\prime }\}\right\} ,
\end{equation*}%
to define the derivative of $\Sigma \mapsto \log Y_{\Sigma }(\mathbf{x})$ at 
$\Sigma _{0}$.

We now compute the corresponding derivative. Using Proposition \ref%
{Lem_Simplification_Smith}, we have 
\begin{equation*}
\log Y_{\Sigma }(\mathbf{x})=\log \left( U_{i_{\mathbf{x},\Sigma }}\right) -%
\frac{d}{2}\log (2\pi )-\frac{1}{2}\log (\det (\Sigma ))-\frac{1}{2}(\mathbf{%
x}-\mathbf{C}_{i_{\mathbf{x},\Sigma }})^{^{\prime }}\Sigma ^{-1}(\mathbf{x}-%
\mathbf{C}_{i_{\mathbf{x},\Sigma }}),
\end{equation*}%
and hence 
\begin{equation*}
\frac{\partial \log Y_{\Sigma }(\mathbf{x})}{\partial \Sigma }=-\frac{1}{2}%
\left( \frac{\partial \log (\det (\Sigma ))}{\partial \Sigma }+\frac{%
\partial (\mathbf{x}-\mathbf{C}_{i_{\mathbf{x},\Sigma }})^{^{\prime }}\Sigma
^{-1}(\mathbf{x}-\mathbf{C}_{i_{\mathbf{x},\Sigma }})}{\partial \Sigma }%
\right) .
\end{equation*}%
Formula (11.7) in \cite{dwyer1967some} gives, for any symmetric matrix $%
\Sigma $, that 
\begin{equation}
\dfrac{\partial \log (\det (\Sigma ))}{\partial \Sigma }=\Sigma ^{-1}.
\label{Eq_Dwyer_1}
\end{equation}%
Moreover, since $i_{\mathbf{x},\Sigma }$ is constant over $\mathcal{W}%
_{\Sigma _{0}}$, Equation (11.8) in \cite{dwyer1967some} provides, for any
symmetric matrix $\Sigma $, 
\begin{equation}
\frac{\partial (\mathbf{x}-\mathbf{C}_{i_{\mathbf{x},\Sigma }})^{^{\prime
}}\Sigma ^{-1}(\mathbf{x}-\mathbf{C}_{i_{\mathbf{x},\Sigma }})}{\partial
\Sigma }=-\Sigma ^{-1}(\mathbf{x}-\mathbf{C}_{i_{\mathbf{x},\Sigma }})(%
\mathbf{x}-\mathbf{C}_{i_{\mathbf{x},\Sigma }})^{^{\prime }}\Sigma ^{-1}.
\label{Eq_Dwyer_2}
\end{equation}%
Combining \eqref{Eq_Dwyer_1} and \eqref{Eq_Dwyer_2}, we finally obtain 
\begin{equation*}
\left. \frac{\partial \log Y_{\Sigma }(\mathbf{x})}{\partial \Sigma }%
\right\vert _{\Sigma =\Sigma _{0}}=-\frac{1}{2}\left( \Sigma
_{0}^{-1}-\Sigma _{0}^{-1}(\mathbf{x}-\mathbf{C}_{i_{\mathbf{x}},_{\Sigma
_{0}}})(\mathbf{x}-\mathbf{C}_{i_{\mathbf{x}},_{\Sigma _{0}}})^{^{\prime
}}\Sigma _{0}^{-1}\right) .
\end{equation*}%

\end{proof}

We now prove that the derivative of $\Sigma \mapsto \log Y_{\Sigma }(\mathbf{x})$
can be uniformly bounded by an integrable random variable over a
neighbourhood of $\Sigma _{0}$.

\begin{Th}
\label{As_M1} There exists a non-random neighbourhood of $\Sigma _{0}$, $%
\mathcal{V}_{\Sigma _{0}}$, such that, for any $q>1$, there exists a random
variable $C_{\Sigma _{0}}(\mathbf{x},q)$ satisfying a.s. 
\begin{equation*}
\sup_{\Sigma \in \mathcal{V}_{\Sigma _{0}}}\left\Vert \frac{\partial \log
Y_{\Sigma }(\mathbf{x})}{\partial \Sigma }\right\Vert ^{q}\leq C_{\Sigma
_{0}}(\mathbf{x},q)
\end{equation*}%
and $\mathbb{E}\left[ C_{\Sigma _{0}}(\mathbf{x},q)\right] <\infty $.
\end{Th}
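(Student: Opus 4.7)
My plan is to reduce the claim to a sub-Gaussian tail bound on the \emph{maximising storm distance} $D := \sup_{\Sigma \in \mathcal{V}_{\Sigma_0}} \|\mathbf{x} - \mathbf{C}_{i_{\mathbf{x},\Sigma}}\|$, by combining the explicit derivative formula \eqref{Eq_Deriv_Realis_Sig} with a near--far decomposition of the underlying Poisson point process. The delicate point throughout will be the uniformity in $\Sigma$: the maximising index $i_{\mathbf{x},\Sigma}$ can jump as $\Sigma$ varies, so one cannot just substitute $i_{\mathbf{x},\Sigma_0}$ and take a supremum.

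I take $\mathcal{V}_{\Sigma_0}$ to be any compact neighbourhood of $\Sigma_0$ inside the positive definite cone, so that $K := \sup_{\Sigma \in \mathcal{V}_{\Sigma_0}} \|\Sigma^{-1}\| < \infty$. Applying Theorem \ref{Lem_Derivability} at each $\Sigma \in \mathcal{V}_{\Sigma_0}$ (the statement applies around any reference point, with $i_{\mathbf{x},\Sigma_0}$ replaced by $i_{\mathbf{x},\Sigma}$) together with the bound $\|\Sigma^{-1}\mathbf{v}\mathbf{v}^{\prime}\Sigma^{-1}\| \leq \|\Sigma^{-1}\|^2 \|\mathbf{v}\|^2$ yields
\begin{equation*}
\left\|\frac{\partial \log Y_\Sigma(\mathbf{x})}{\partial \Sigma}\right\| \leq \tfrac{1}{2}\|\Sigma^{-1}\| + \tfrac{1}{2}\|\Sigma^{-1}\|^2 \|\mathbf{x} - \mathbf{C}_{i_{\mathbf{x},\Sigma}}\|^2 \leq \tfrac{K}{2} + \tfrac{K^2}{2} D^2.
\end{equation*}
Hence it suffices to show $\mathbb{E}[D^{2q}] < \infty$ for every $q > 1$; I will in fact establish the stronger fact that $D$ has a sub-Gaussian tail.

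To bound $\mathbb{P}(D > R)$, observe that $\{D > R\}$ requires some storm $i$ with $\|\mathbf{x} - \mathbf{C}_i\| > R$ and $U_i \varphi_M(\mathbf{x} - \mathbf{C}_i, \Sigma) \geq \inf_{\Sigma^{\prime} \in \mathcal{V}_{\Sigma_0}} Y_{\Sigma^{\prime}}(\mathbf{x})$ for some $\Sigma \in \mathcal{V}_{\Sigma_0}$. A uniform lower bound on this infimum is obtained by fixing a ball $B := B(\mathbf{x},\rho)$ and setting $U^* := \max\{U_j : \mathbf{C}_j \in B\}$: joint positivity and continuity of $(\mathbf{z},\Sigma) \mapsto \varphi_M(\mathbf{z},\Sigma)$ on the compact set $\overline{B - \mathbf{x}} \times \mathcal{V}_{\Sigma_0}$ give a deterministic $c_B > 0$ with $\inf_{\Sigma} Y_\Sigma(\mathbf{x}) \geq c_B U^*$, while Campbell's formula yields $\mathbb{P}(U^* \leq u) = \exp(-|B|/u)$, which vanishes as $u \to 0^+$ faster than any polynomial. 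Combined with the uniform Gaussian upper bound $\overline{\varphi}(\mathbf{x},\mathbf{c}) := \sup_{\Sigma \in \mathcal{V}_{\Sigma_0}} \varphi_M(\mathbf{x} - \mathbf{c},\Sigma) \leq C_1 e^{-c_1 \|\mathbf{x} - \mathbf{c}\|^2}$ (valid on the compact $\mathcal{V}_{\Sigma_0}$ because the smallest eigenvalue of $\Sigma^{-1}$ is bounded away from $0$), a union bound and Mecke's formula give, for $R > \rho$ and any $a > 0$,
\begin{equation*}
\mathbb{P}(D > R) \leq \mathbb{P}(c_B U^* < a) + \frac{1}{a} \int_{\|\mathbf{c} - \mathbf{x}\| > R} \overline{\varphi}(\mathbf{x},\mathbf{c}) \, \mathrm{d}\mathbf{c} \leq e^{-c_B |B|/a} + \frac{C_1^{\prime}}{a} R^{d-2} e^{-c_1 R^2}.
\end{equation*}
Choosing $a = e^{-c_1 R^2/2}$ produces $\mathbb{P}(D > R) \leq C^{\prime\prime} R^{d-2} e^{-c_1 R^2/2}$ for $R$ large, which is the announced sub-Gaussian bound and gives finite moments of $D$ of every order.

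The main difficulty is the decoupling between the ``near'' lower bound on $\inf_\Sigma Y_\Sigma(\mathbf{x})$ and the ``far'' count of candidate maximisers: once the universal lower bound $c_B U^* \geq a$ is enforced, the storms capable of being $i_{\mathbf{x},\Sigma}$ for some $\Sigma$ in the region $\{\|\mathbf{c}-\mathbf{x}\|>R\}$ must have $U_i$ exponentially large in $R^2$, and a Campbell--Mecke computation shows that the expected count of such storms decays faster than any polynomial. The uniform Gaussian-type bounds on $\varphi_M$ over $\mathcal{V}_{\Sigma_0}$ and the Fr\'echet-like distribution of $U^*$ make both sides of the decomposition tractable; beyond these, the balance of the argument is an optimisation in the auxiliary threshold $a$.
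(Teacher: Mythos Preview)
Your proof is correct and arrives at the same conclusion as the paper, but by a genuinely different route. Both arguments begin identically---bounding $\|\partial \log Y_\Sigma(\mathbf{x})/\partial\Sigma\|$ via \eqref{Eq_Deriv_Realis_Sig} and reducing to moment control of $\sup_{\Sigma\in\mathcal V_{\Sigma_0}}\|\mathbf{x}-\mathbf{C}_{i_{\mathbf{x},\Sigma}}\|$---but diverge on the tail estimate. The paper exploits the optimality inequality $\|\mathbf{x}-\mathbf{C}_{i_{\mathbf{x},\Sigma}}\|_{\Sigma^{-1}}^2 \leq 2\log(U_{i_{\mathbf{x},\Sigma}}/U_i)+\|\mathbf{x}-\mathbf{C}_i\|_{\Sigma^{-1}}^2$, valid \emph{for every} $i$, then splits the right-hand side via $\{U+V\geq\lambda\}\subset\{U\geq\lambda/2\}\cup\{V\geq\lambda/2\}$: one piece is an explicit Poisson void probability (computed as a $d$-dimensional integral giving a double-exponential tail), the other is handled by $\sup_\Sigma U_{i_{\mathbf{x},\Sigma}}\leq\max_i U_i$ and the Fr\'echet law of the latter. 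Your near--far decomposition instead lower-bounds $\inf_\Sigma Y_\Sigma(\mathbf{x})$ by $c_B U^*$ using a single nearby storm, then bounds the expected number of far competitors via the uniform Gaussian upper envelope $\overline\varphi$ and Mecke's formula. Your approach is more geometric and modular---the Gaussian lower and upper bounds on $\varphi_M$ enter separately, and the argument would transfer with little change to other spectral functions with comparable two-sided decay---while the paper's computation yields slightly more explicit constants through the exact evaluation of the Poisson mass.
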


\begin{proof}
First recall that a.s. 
\begin{equation*}
\frac{\partial \log Y_{\Sigma }(\mathbf{x})}{\partial \Sigma }=-\frac{1}{2}%
\left( \Sigma ^{-1}-\Sigma ^{-1}(\mathbf{x}-\mathbf{C}_{i_{\mathbf{x},\Sigma
}})(\mathbf{x}-\mathbf{C}_{i_{\mathbf{x},\Sigma }})^{^{\prime }}\Sigma
^{-1}\right) ,
\end{equation*}%
which gives a.s. 
\begin{equation*}
\left\Vert \frac{\partial \log Y_{\Sigma }(\mathbf{x})}{\partial \Sigma }%
\right\Vert \leq \frac{1}{2}\left( \left\Vert \Sigma ^{-1}\right\Vert
+\left\Vert \Sigma ^{-1}(\mathbf{x}-\mathbf{C}_{i_{\mathbf{x},\Sigma }})(%
\mathbf{x}-\mathbf{C}_{i_{\mathbf{x},\Sigma }})^{^{\prime }}\Sigma
^{-1}\right\Vert \right) .
\end{equation*}%
Consequently, using the well-known fact that, for all $a,b\in \mathbb{R}$
and $q\geq 1$, $|a-b|^{q}\leq 2^{q-1}(|a|^{q}+|b|^{q})$, we obtain 
\begin{equation}
\left\Vert \frac{\partial \log Y_{\Sigma }(\mathbf{x})}{\partial \Sigma }%
\right\Vert ^{q}\leq \frac{1}{2}\left( \left\Vert \Sigma ^{-1}\right\Vert
^{q}+\left\Vert \Sigma ^{-1}(\mathbf{x}-\mathbf{C}_{i_{\mathbf{x},\Sigma }})(%
\mathbf{x}-\mathbf{C}_{i_{\mathbf{x},\Sigma }})^{^{\prime }}\Sigma
^{-1}\right\Vert ^{q}\right) .  \label{Eq_Norm_Powerq_Diff_ZSigma}
\end{equation}

Let $A=\Sigma ^{-1}(\mathbf{x}-\mathbf{C}_{i_{\mathbf{x},\Sigma }})$, such
that $\Sigma ^{-1}(\mathbf{x}-\mathbf{C}_{i_{\mathbf{x},\Sigma }})(\mathbf{x}%
-\mathbf{C}_{i_{\mathbf{x},\Sigma }})^{^{\prime }}\Sigma ^{-1}=AA^{^{\prime
}}$. Note that $AA^{^{\prime }}$ is a non-negative symmetric matrix of rank $%
1$ and thus it only has one positive eigenvalue given by 
\begin{equation*}
\lambda =A^{^{\prime }}A=\Vert \mathbf{x}-\mathbf{C}_{i_{\mathbf{x},\Sigma
}}\Vert _{\Sigma ^{-2}}^{2}
\end{equation*}%
(since we have that $(AA^{^{\prime }})A=A(A^{^{\prime }}A)=(A^{^{\prime
}}A)A $). It follows that 
\begin{equation}
\Vert AA^{^{\prime }}\Vert =\Vert \mathbf{x}-\mathbf{C}_{i_{\mathbf{x}%
,\Sigma }}\Vert _{\Sigma ^{-2}}^{2}.  \label{Eq_Majoration_Norm}
\end{equation}%
For a positive definite symmetric matrix $\Theta $, we denote by $\lambda
_{\max }\left( \Theta \right) $ and $\lambda _{\min }\left( \Theta \right) $
respectively the maximum and the minimum of its positive eigenvalues. We
have that 
\begin{equation*}
\Vert \mathbf{x}-\mathbf{C}_{i_{\mathbf{x},\Sigma }}\Vert _{\Sigma
^{-1}}^{2}=(\mathbf{x}-\mathbf{C}_{i_{\mathbf{x},\Sigma }})^{^{\prime
}}\Sigma ^{-1}(\mathbf{x}-\mathbf{C}_{i_{\mathbf{x},\Sigma }})\geq \lambda
_{\min }(\Sigma ^{-1})\Vert \mathbf{x}-\mathbf{C}_{i_{\mathbf{x},\Sigma
}}\Vert ^{2},
\end{equation*}%
and 
\begin{equation*}
\Vert \mathbf{x}-\mathbf{C}_{i_{\mathbf{x},\Sigma }}\Vert _{\Sigma
^{-2}}^{2}=(\mathbf{x}-\mathbf{C}_{i_{\mathbf{x},\Sigma }})^{^{\prime
}}\Sigma ^{-2}(\mathbf{x}-\mathbf{C}_{i_{\mathbf{x},\Sigma }})\leq \lambda
_{\max }\left( \Sigma ^{-2}\right) \Vert \mathbf{x}-\mathbf{C}_{i_{\mathbf{x}%
,\Sigma }}\Vert ^{2},
\end{equation*}%
which yield 
\begin{equation}
\Vert \mathbf{x}-\mathbf{C}_{i_{\mathbf{x},\Sigma }}\Vert _{\Sigma
^{-2}}^{2}\leq \frac{\left[ \lambda _{\max }(\Sigma ^{-1})\right] ^{2}}{%
\lambda _{\min }(\Sigma ^{-1})}\Vert \mathbf{x}-\mathbf{C}_{i_{\mathbf{x}%
,\Sigma }}\Vert _{\Sigma ^{-1}}^{2}.  \label{Eq_Quadratic_Form}
\end{equation}%
Combining \eqref{Eq_Norm_Powerq_Diff_ZSigma}, \eqref{Eq_Majoration_Norm} and %
\eqref{Eq_Quadratic_Form}, we have, for any $q>1$, that a.s. 
\begin{equation*}
\sup_{\Sigma \in \mathcal{V}_{\Sigma _{0}}}\left\Vert \frac{\partial \log
Y_{\Sigma }(\mathbf{x})}{\partial \Sigma }\right\Vert ^{q}\leq C_{\Sigma
_{0}}(\mathbf{x},q),
\end{equation*}%
where 
\begin{equation}
C_{\Sigma _{0}}(\mathbf{x},q)=\frac{1}{2}\left( \sup_{\Sigma \in \mathcal{V}%
_{\Sigma _{0}}}\left\Vert \Sigma ^{-1}\right\Vert ^{q}+\sup_{\Sigma \in 
\mathcal{V}_{\Sigma _{0}}}\left( \frac{\left[ \lambda _{\max }(\Sigma ^{-1})%
\right] ^{2}}{\lambda _{\min }(\Sigma ^{-1})}\right) ^{q}\sup_{\Sigma \in 
\mathcal{V}_{\Sigma _{0}}}\Vert \mathbf{x}-\mathbf{C}_{i_{\mathbf{x},\Sigma
}}\Vert _{\Sigma ^{-1}}^{2q}\right) .  \label{Eq_Bound_Norm}
\end{equation}

In order to control $C_{\Sigma _{0}}(\mathbf{x},q)$, it is sufficient to
control $\Vert \mathbf{x}-\mathbf{C}_{i_{\mathbf{x},\Sigma }}\Vert _{\Sigma
^{-1}}^{2q}$. As the center of the \textquotedblleft
storm\textquotedblright\ realizing the maximum at point $\mathbf{x}$, $%
\mathbf{C}_{i_{\mathbf{x},\Sigma }}$ is characterized by 
\begin{align}
& \quad \ \ U_{i_{\mathbf{x},\Sigma }}\ \varphi _{M}(\mathbf{x}-\mathbf{C}%
_{i_{\mathbf{x},\Sigma }},\Sigma )\geq U_{i}\ \varphi _{M}(\mathbf{x}-%
\mathbf{C}_{i},\Sigma ),\ \forall i\geq 1  \notag \\
& \Leftrightarrow \log (U_{i_{\mathbf{x},\Sigma }})-\frac{1}{2}(\mathbf{x}-%
\mathbf{C}_{i_{\mathbf{x},\Sigma }})^{^{\prime }}\Sigma ^{-1}(\mathbf{x}-\mathbf{C}_{i_{\mathbf{x},\Sigma }})\geq \log (U_{i})-\frac{1}{2}(\mathbf{x
}-\mathbf{C}_{i})^{^{\prime }}\Sigma ^{-1}(\mathbf{x}-\mathbf{C}_{i}),\
\forall i\geq 1  \notag \\
& \Leftrightarrow \left\Vert \mathbf{x}-\mathbf{C}_{i_{\mathbf{x},\Sigma
}}\right\Vert _{\Sigma ^{-1}}^{2}\leq 2\log (U_{i_{\mathbf{x},\Sigma
}})-2\log (U_{i})+\left\Vert \mathbf{x}-\mathbf{C}_{i}\right\Vert _{\Sigma
^{-1}}^{2},\ \forall i\geq 1,  \label{Eq_Characterization_Center}
\end{align}%
and therefore 
\begin{equation}
\sup_{\Sigma \in \mathcal{V}_{\Sigma _{0}}}\left\Vert \mathbf{x}-\mathbf{C}%
_{i_{\mathbf{x},\Sigma }}\right\Vert _{\Sigma ^{-1}}^{2}\leq 2\sup_{\Sigma
\in \mathcal{V}_{\Sigma _{0}}}\log (U_{i_{\mathbf{x},\Sigma }})+\sup_{\Sigma
\in \mathcal{V}_{\Sigma _{0}}}\left\Vert \mathbf{x}-\mathbf{\ C}%
_{i}\right\Vert _{\Sigma ^{-1}}^{2}-2\log (U_{i}),\ \forall i\geq 1.
\label{Eq_Majoration_Sup_Norm_ixsig}
\end{equation}%
Additionally, we have, for all $i\geq 1$, 
\begin{equation*}
\left\Vert \mathbf{x}-\mathbf{C}_{i}\right\Vert _{\Sigma
^{-1}}^{2}=\left\Vert \mathbf{x}-\mathbf{C}_{i}\right\Vert _{\Sigma
_{0}^{-1}}^{2}+(\mathbf{x}-\mathbf{C}_{i})^{\prime }\left( \Sigma
^{-1}-\Sigma _{0}^{-1}\right) (\mathbf{x}-\mathbf{C}_{i}),
\end{equation*}%
which yields, by equivalence of the norms $\Vert .\Vert $ and $\Vert .\Vert
_{\Sigma ^{-1}}$, that 
\begin{eqnarray}
\sup_{\Sigma \in \mathcal{V}_{\Sigma _{0}}}\left\Vert \mathbf{x}-\mathbf{C}%
_{i}\right\Vert _{\Sigma ^{-1}}^{2} &\leq &\left\Vert \mathbf{x}-\mathbf{C}%
_{i}\right\Vert _{\Sigma _{0}^{-1}}^{2}+\sup_{\Sigma \in \mathcal{V}_{\Sigma
_{0}}}\left\Vert \Sigma ^{-1}-\Sigma _{0}^{-1}\right\Vert \left\Vert \mathbf{x}-\mathbf{C}_{i}\right\Vert ^{2}  \notag \\
&\leq &\left( D_{0}+\sup_{\Sigma \in \mathcal{V}_{\Sigma _{0}}}\left\Vert
\Sigma ^{-1}-\Sigma _{0}^{-1}\right\Vert \right) \left\Vert \mathbf{x}-%
\mathbf{C}_{i}\right\Vert ^{2}  \label{Eq_Majoration_Sup_Norm_i}
\end{eqnarray}%
for some positive constant $D_{0}$. Hence let us now choose $\mathcal{V}%
_{\Sigma _{0}}$ such that 
\begin{equation}
\sup_{\Sigma \in \mathcal{V}_{\Sigma _{0}}}\left\Vert \Sigma ^{-1}-\Sigma
_{0}^{-1}\right\Vert <\infty .  \label{Eq_First_Condition_Nu0}
\end{equation}

Now, for two real-valued random variables $U$ and $V$, $U+V\geq \lambda $
implies that $U\geq \lambda /2$ or $V\geq \lambda /2$, giving $\mathbb{P}%
(U+V\geq \lambda )\leq \mathbb{P}\left( U\geq \lambda /2\right) +\mathbb{P}%
\left( V\geq \lambda /2\right) .$ Thus, using %
\eqref{Eq_Majoration_Sup_Norm_ixsig} and \eqref{Eq_Majoration_Sup_Norm_i},
we obtain 
\begin{align}
& \quad \ \mathbb{P}\left( \sup_{\Sigma \in \mathcal{V}_{\Sigma
_{0}}}\left\Vert \mathbf{x}-\mathbf{C}_{i_{\mathbf{x},\Sigma }}\right\Vert
_{\Sigma ^{-1}}^{2}\geq \lambda \right)  \notag \\
& \leq \mathbb{P}\left( \left( D_{0}+\sup_{\Sigma \in \mathcal{V}_{\Sigma
_{0}}}\left\Vert \Sigma ^{-1}-\Sigma _{0}^{-1}\right\Vert \right) \left\Vert 
\mathbf{x}-\mathbf{C}_{i}\right\Vert ^{2}-2\log (U_{i})+2\sup_{\Sigma \in 
\mathcal{V}_{\Sigma _{0}}}\log (U_{i_{\mathbf{x},\Sigma }})\geq \lambda \
\forall i\geq 1\right)  \notag \\
& \leq \mathbb{P}\left( \left( D_{0}+\sup_{\Sigma \in \mathcal{V}_{\Sigma
_{0}}}\left\Vert \Sigma ^{-1}-\Sigma _{0}^{-1}\right\Vert \right) \left\Vert 
\mathbf{x}-\mathbf{C}_{i}\right\Vert ^{2}-2\log (U_{i})\geq \frac{\lambda }{2%
}\ \forall i\geq 1\right)  \notag \\
& \quad +\mathbb{P}\left( 2\sup_{\Sigma \in \mathcal{V}_{\Sigma _{0}}}\log
(U_{i_{\mathbf{x},\Sigma }})\geq \frac{\lambda }{2}\right) .
\label{Eq_Majoration_Probability}
\end{align}

We first deal with the first term of the right-hand side of %
\eqref{Eq_Majoration_Probability}. Since the $(U_{i},\mathbf{C}_{i})_{i\geq
1}$ are the points of a Poisson process on $(0,\infty )\times \mathbb{R}^{d}$
with intensity function $u^{-2} \mathrm{d}u \times \mathrm{d}
\mathbf{c}$, we have 
\begin{align}
& \quad \ \mathbb{P}\left( \left( D_{0}+\sup_{\Sigma \in \mathcal{V}_{\Sigma
_{0}}}\left\Vert \Sigma ^{-1}-\Sigma _{0}^{-1}\right\Vert \right) \left\Vert 
\mathbf{x}-\mathbf{C}_{i}\right\Vert ^{2}-2\log (U_{i})\geq \frac{\lambda }{2%
}\ \forall i\geq 1\right)  \notag \\
& =\exp \left( -\mu \left\{ (u,\mathbf{c})\in (0,\infty )\times \mathbb{R}%
^{d}:\left( D_{0}+\sup_{\Sigma \in \mathcal{V}_{\Sigma _{0}}}\left\Vert
\Sigma ^{-1}-\Sigma _{0}^{-1}\right\Vert \right) \left\Vert \mathbf{x}-%
\mathbf{c}\right\Vert ^{2}-2\log (u)<\frac{\lambda }{2}\right\} \right) ,
\label{Eq_Exp_First_Probability}
\end{align}%
where 
\begin{align*}
& \quad \ \mu \left\{ (u,\mathbf{c})\in (0,\infty )\times \mathbb{R}%
^{d}:\left( D_{0}+\sup_{\Sigma \in \mathcal{V}_{\Sigma _{0}}}\left\Vert
\Sigma ^{-1}-\Sigma _{0}^{-1}\right\Vert \right) \left\Vert \mathbf{x}-%
\mathbf{c}\right\Vert ^{2}-2\log (u)<\frac{\lambda }{2}\right\} \\
& =\int_{e^{-\frac{\lambda }{4}}}^{\infty }\left( \int_{\left\Vert \mathbf{x}%
-\mathbf{c}\right\Vert ^{2}\leq \left( \frac{\lambda }{2}+2\log (u
)\right) \left( D_{0}+\sup_{\Sigma \in \mathcal{V}_{\Sigma _{0}}}\left\Vert
\Sigma ^{-1}-\Sigma _{0}^{-1}\right\Vert \right) ^{-1}}\ \mathrm{d}%
\mathbf{c}\right) u^{-2} \mathrm{d}u \\
& =\frac{\pi ^{d/2}}{\left( D_{0}+\sup_{\Sigma \in \mathcal{V}_{\Sigma
_{0}}}\left\Vert \Sigma ^{-1}-\Sigma _{0}^{-1}\right\Vert \right)
^{d/2}\Gamma \left( d/2+1\right) }\int_{e^{-\frac{\lambda }{4}}}^{\infty
}\left( \frac{\lambda }{2}+2\log (u)\right) ^{d/2}\ u^{-2} \mathrm{d}u.
\end{align*}

Making the change of variable $v=\lambda /2+2\log (u)\mbox{,
yielding }u =\exp \left( v/2-\lambda /4\right) $ and $\mathrm{d}
u=\exp \left( v/2-\lambda /4\right) \mathrm{d}v/2$, we obtain 
\begin{eqnarray}
&&\mu \left\{ (u,\mathbf{c})\in (0,\infty )\times \mathbb{R}^{d}:\left\Vert 
\mathbf{x}-\mathbf{c}\right\Vert _{\Sigma }^{2}-2\log (u)<\frac{\lambda }{2}%
\right\}  \notag \\
&=&\frac{\pi ^{d/2}\exp \left( \lambda /4\right) }{\left( D_{0}+\sup_{\Sigma
\in \mathcal{V}_{\Sigma _{0}}}\left\Vert \Sigma ^{-1}-\Sigma
_{0}^{-1}\right\Vert \right) ^{d/2}\Gamma \left( d/2+1\right) }%
\int_{0}^{\infty }v^{d/2}\exp \left( -\frac{v}{2}\right) \mathrm{d}v.
\label{Eq_Measure_Eq}
\end{eqnarray}%
The latter integral is finite and it follows from %
\eqref{Eq_Exp_First_Probability} that 
\begin{align}
& \qquad \ \mathbb{P}\left( \left( D_{0}+\sup_{\Sigma \in \mathcal{V}%
_{\Sigma _{0}}}\left\Vert \Sigma ^{-1}-\Sigma _{0}^{-1}\right\Vert \right)
\left\Vert \mathbf{x}-\mathbf{C}_{i}\right\Vert ^{2}-2\log (U_{i})\geq \frac{%
\lambda }{2},\ \forall i\geq 1\right)  \notag \\
& =\exp \left( -\left[ \frac{\pi ^{d/2}}{\left( D_{0}+\sup_{\Sigma \in 
\mathcal{V}_{\Sigma _{0}}}\left\Vert \Sigma ^{-1}-\Sigma
_{0}^{-1}\right\Vert \right) ^{d/2}\Gamma \left( d/2+1\right) }%
\int_{0}^{\infty }v^{d/2}\exp \left( -\frac{v}{2}\right) \mathrm{d}v
\right] \exp \left( \frac{\lambda }{4}\right) \right) .
\label{Eq_First_Term}
\end{align}

Let us now deal with the second term of the right-hand side of %
\eqref{Eq_Majoration_Probability}. Observe that 
\begin{equation*}
\mathbb{P}\left( 2\sup_{\Sigma \in \mathcal{V}_{\Sigma _{0}}}\log (U_{i_{%
\mathbf{x},\Sigma }})\geq \frac{\lambda }{2}\right) =\mathbb{P}\left(
\inf_{\Sigma \in \mathcal{V}_{\Sigma _{0}}}U_{i_{\mathbf{x},\Sigma
}}^{-1}\leq \exp \left( -\frac{\lambda }{4}\right) \right) .
\end{equation*}%
The mapping $u\rightarrow u^{-1}$ applied to the points of the Poisson point
process yields a new Poisson point process with intensity function $\mathrm{d}u$. Furthermore, such a Poisson point process on $(0,\infty )$ is
homogeneous\ and can be represented as the sum of independent standard
exponential random variables. Thus, we can write $\min_{i\geq
1}U_{i}^{-1} {\overset{d}{=}}\mathrm{Exp}(1)$. Moreover, $\inf_{\Sigma \in \mathcal{V}%
_{\Sigma _{0}}}U_{i_{\mathbf{x},\Sigma }}^{-1}\geq \min_{i\geq
1} U_{i}^{-1} $, implying 
\begin{align*}
\mathbb{P}\left( \inf_{\Sigma \in \mathcal{V}_{\Sigma _{0}}}U_{i_{\mathbf{x
},\Sigma }}^{-1}\leq \exp \left( -\frac{\lambda }{4}\right) \right) & \leq 
\mathbb{P}\left( \min_{i\geq 1} U_{i}^{-1} \leq \exp \left( -%
\frac{\lambda }{4}\right) \right) \\
& =1-\exp \left( -\exp \left( -\frac{\lambda }{4}\right) \right) \\
& \underset{\lambda \rightarrow \infty }{\sim }\exp \left( -\frac{\lambda }{4%
}\right) ,
\end{align*}%
which gives for large $\lambda $ 
\begin{equation}
\mathbb{P}\left( \sup_{\Sigma \in \mathcal{V}_{\Sigma _{0}}}2\log (U_{i_{%
\mathbf{x},\Sigma }})\geq \frac{\lambda }{2}\right) \leq \exp \left( -\frac{%
\lambda }{4}\right) .  \label{Eq_Second_Term}
\end{equation}

Now, for any $q>1$, we have 
\begin{equation*}
\mathbb{E}\left[ \sup_{\Sigma \in \mathcal{V}_{\Sigma _{0}}}\Vert \mathbf{x}-%
\mathbf{C}_{i_{\mathbf{x},\Sigma }}\Vert _{\Sigma ^{-1}}^{2q}\right]
=\int_{0}^{\infty }\mathbb{P}\left( \sup_{\Sigma \in \mathcal{\ V}_{\Sigma
_{0}}}\Vert \mathbf{x}-\mathbf{C}_{i_{\mathbf{x},\Sigma }}\Vert _{\Sigma
^{-1}}^{2q}\geq u\right) \mathrm{d}u.
\end{equation*}%
We carry out the change of variable $\lambda =u^{1/q}$, which gives $%
u=\lambda ^{q}$ and hence $\mathrm{d}u=q\lambda ^{q-1}\mathrm{d}\lambda $.
Accordingly, we obtain 
\begin{equation*}
\mathbb{E}\left[ \sup_{\Sigma \in \mathcal{V}_{\Sigma _{0}}}\Vert \mathbf{x}-%
\mathbf{C}_{i_{\mathbf{x},\Sigma }}\Vert _{\Sigma ^{-1}}^{2q}\right]
=q\int_{0}^{\infty }\lambda ^{q-1}\mathbb{P}\left( \sup_{\Sigma \in \mathcal{%
\ V}_{\Sigma _{0}}}\Vert \mathbf{x}-\mathbf{C}_{i_{\mathbf{x},\Sigma }}\Vert
_{\Sigma ^{-1}}^{2}\geq \lambda \right) \mathrm{d}\lambda ,
\end{equation*}%
and therefore, using \eqref{Eq_Majoration_Probability}, \eqref{Eq_First_Term}
and \eqref{Eq_Second_Term}, 
\begin{equation*}
\mathbb{E}\left[ \sup_{\Sigma \in \mathcal{V}_{\Sigma _{0}}}\Vert \mathbf{x}-%
\mathbf{C}_{i_{\mathbf{x},\Sigma }}\Vert _{\Sigma ^{-1}}^{2q}\right] <\infty
.
\end{equation*}

Finally, let us recall that 
\begin{equation*}
C_{\Sigma _{0}}(\mathbf{x},q)=\frac{1}{2}\left( \sup_{\Sigma \in \mathcal{V}%
_{\Sigma _{0}}}\left\Vert \Sigma ^{-1}\right\Vert ^{q}+\sup_{\Sigma \in 
\mathcal{V}_{\Sigma _{0}}}\left( \frac{\left[ \lambda _{\max }(\Sigma ^{-1})%
\right] ^{2}}{\lambda _{\min }(\Sigma ^{-1})}\right) ^{q}\sup_{\Sigma \in 
\mathcal{V}_{\Sigma _{0}}}\Vert \mathbf{x}-\mathbf{C}_{i_{\mathbf{x},\Sigma
}}\Vert _{\Sigma ^{-1}}^{2q}\right) .
\end{equation*}%
Consequently, using \eqref{Eq_First_Condition_Nu0}, we finally deduce that
any neighbourhood of $\Sigma _{0}$, $\mathcal{V}_{\Sigma _{0}}$, satisfying 
\begin{eqnarray*}
\sup_{\Sigma \in \mathcal{V}_{\Sigma _{0}}}\left\Vert \Sigma
^{-1}\right\Vert ^{q} &<&\infty , \\
\sup_{\Sigma \in \mathcal{V}_{\Sigma _{0}}} \frac{\left[ \lambda
_{\max }(\Sigma ^{-1})\right] ^{2}}{\lambda _{\min }(\Sigma ^{-1})}
&<&\infty , \\
\sup_{\Sigma \in \mathcal{V}_{\Sigma _{0}}}\left\Vert \Sigma ^{-1}-\Sigma
_{0}^{-1}\right\Vert &<&\infty ,
\end{eqnarray*}%
leads to 
\begin{equation*}
\mathbb{E}\left[ C_{\Sigma _{0}}(\mathbf{x},q)\right] <\infty. 
\end{equation*}

\end{proof}

We finally provide the proof of Theorem \ref{Th_MainResultIPA}.

\begin{proof}
Let us choose $\mathcal{V}_{\Sigma _{0}}$ as in the proof of Theorem \ref%
{As_M1}. It follows from \eqref{Eq_Chain_Rule} that 
\begin{equation*}
\left\Vert \frac{\partial H\left( \mathbf{Y}_{\Sigma }\right) }{\partial
\Sigma }\right\Vert \leq \sum_{i=1}^{M}\left\vert Y_{\Sigma }(\mathbf{x}_{i})%
\frac{\partial H\left( \mathbf{Y}_{\Sigma }\right) }{\partial y_{i}}%
\right\vert \left\Vert \frac{\partial \log Y_{\Sigma }(\mathbf{x}_{i})}{%
\partial \Sigma }\right\Vert ,
\end{equation*}%
which gives 
\begin{equation*}
\sup_{\Sigma \in \mathcal{V}_{\Sigma _{0}}}\left\Vert \frac{\partial H\left( 
\mathbf{Y}_{\Sigma }\right) }{\partial \Sigma }\right\Vert \leq
\sum_{i=1}^{M}\sup_{\Sigma \in \mathcal{V}_{\Sigma _{0}}}\left\vert
Y_{\Sigma }(\mathbf{x}_{i})\frac{\partial H\left( \mathbf{Y}_{\Sigma
}\right) }{\partial y_{i}}\right\vert \sup_{\Sigma \in \mathcal{V}_{\Sigma
_{0}}}\left\Vert \frac{\partial \log Y_{\Sigma }(\mathbf{x}_{i})}{\partial
\Sigma }\right\Vert .
\end{equation*}%
Hence we choose 
\begin{equation*}
B_{\Sigma _{0}}=\sum_{i=1}^{M}\sup_{\Sigma \in \mathcal{V}_{\Sigma
_{0}}}\left\vert Y_{\Sigma }(\mathbf{x}_{i})\frac{\partial H\left( \mathbf{Y}%
_{\Sigma }\right) }{\partial y_{i}}\right\vert \sup_{\Sigma \in \mathcal{V}%
_{\Sigma _{0}}}\left\Vert \frac{\partial \log Y_{\Sigma }(\mathbf{x}_{i})}{%
\partial \Sigma }\right\Vert .
\end{equation*}%
We have 
\begin{equation*}
\mathbb{E}\left[ B_{\Sigma _{0}}\right] =\sum_{i=1}^{M}\mathbb{E}\left[
\sup_{\Sigma \in \mathcal{V}_{\Sigma _{0}}}\left\vert Y_{\Sigma }(\mathbf{x}%
_{i})\frac{\partial H\left( \mathbf{Y}_{\Sigma }\right) }{\partial y_{i}}%
\right\vert \sup_{\Sigma \in \mathcal{V}_{\Sigma _{0}}}\left\Vert \frac{%
\partial \log Y_{\Sigma }(\mathbf{x}_{i})}{\partial \Sigma }\right\Vert %
\right] .
\end{equation*}%
Let $q>1$ such that $p^{-1}+q^{-1}=1$. By H\"{o}lder inequality, we have 
\begin{eqnarray*}
&&\mathbb{E}\left[ \sup_{\Sigma \in \mathcal{V}_{\Sigma _{0}}}\left\vert
Y_{\Sigma }(\mathbf{x}_{i})\frac{\partial H\left( \mathbf{Y}_{\Sigma
}\right) }{\partial y_{i}}\right\vert \sup_{\Sigma \in \mathcal{V}_{\Sigma
_{0}}}\left\Vert \frac{\partial \log Y_{\Sigma }(\mathbf{x}_{i})}{\partial
\Sigma }\right\Vert \right] \\
&\leq &\mathbb{E}\left[ \sup_{\Sigma \in \mathcal{V}_{\Sigma
_{0}}}\left\vert Y_{\Sigma }(\mathbf{x}_{i})\frac{\partial H\left( \mathbf{Y}%
_{\Sigma }\right) }{\partial y_{i}}\right\vert ^{p}\right] ^{1/p}\mathbb{E}%
\left[ \sup_{\Sigma \in \mathcal{V}_{\Sigma _{0}}}\left\Vert \frac{\partial
\log Y_{\Sigma }(\mathbf{x}_{i})}{\partial \Sigma }\right\Vert ^{q}\right]
^{1/q}.
\end{eqnarray*}%
By Theorem \ref{As_M1} and the assumptions, the result follows. 
\end{proof}

%

\subsection{Proof of Proposition \protect\ref{Prop_Validity_Assumptions}}

\subsubsection{LRM for the Brown--Resnick random field}

Let us first recall that $M=2$ and that $H$ is given by \eqref{Eq_Exp_R_CorrWind_H} with $\beta _{i}\in \mathbb{N}_{\ast }$, $i=1,2$%
. Let us now note that $\left\Vert \mathbf{Y}_{\Bs{\theta}%
_{0}}\right\Vert ^{\alpha }\leq 2^{-\alpha /2}\left( Y_{\Bs{\theta}%
_{0},1}\vee Y_{\Bs{\theta}_{0},2}\right) ^{\alpha }\leq 2^{-\alpha
/2}\left( Y_{\Bs{\theta}_{0},1}^{\alpha }+Y_{\Bs{\theta}%
_{0},2}^{\alpha }\right) $. Therefore there exists a positive constant $C$
such that 
\begin{eqnarray*}
&&|H\left( \mathbf{Y}_{\Bs{\theta}_{0}}\right) |\left( 1+\frac{1}{2}%
\sum_{i=1}^{2}Y_{\Bs{\theta}_{0},i}^{-1}\right) \left( \sum_{i=1}^{2}Y_{%
\Bs{\theta}_{0},i}^{-\alpha }\right) \exp \left( -B_{\mathcal{V}_{%
\Bs{\theta}_{0}}}\sum_{i=1}^{2}Y_{\Bs{\theta}_{0},i}^{-1}\right)
\left\Vert \mathbf{Y}_{\Bs{\theta}_{0}}\right\Vert ^{\alpha } \\
&\leq &C\sum_{\delta _{1}\in \Delta _{1},\delta _{2}\in \Delta _{2}}\left(
Y_{\Bs{\theta}_{0},1}^{\delta _{1}}\exp \left( -B_{\mathcal{V}_{\mathbf{%
\theta }_{0}}}Y_{\Bs{\theta}_{0},1}^{-1}\right) \right) \left( Y_{%
\Bs{\theta}_{0},2}^{\delta _{2}}\exp \left( -B_{\mathcal{V}_{\mathbf{%
\theta }_{0}}}Y_{\Bs{\theta}_{0},2}^{-1}\right) \right) 
\end{eqnarray*}%
where, for $i=1,2$, $\Delta _{i}$ is a set of constants whose the maximal
value is $\xi _{i}\beta _{i}+\alpha $. Moreover,
\begin{eqnarray*}
&&\mathbb{E}\left[ \left( Y_{\Bs{\theta}_{0},1}^{\delta _{1}}\exp
\left( -B_{\mathcal{V}_{\Bs{\theta}_{0}}}Y_{\Bs{\theta}%
_{0},1}^{-1}\right) \right) \left( Y_{\Bs{\theta}_{0},2}^{\delta
_{2}}\exp \left( -B_{\mathcal{V}_{\Bs{\theta}_{0}}}Y_{\Bs{\theta}%
_{0},2}^{-1}\right) \right) \right]  \\
&\leq &\left( \mathbb{E}\left[ \left( Y_{\Bs{\theta}_{0},1}^{2\delta
_{1}}\exp \left( -2B_{\mathcal{V}_{\Bs{\theta}_{0}}}Y_{\Bs{\theta}%
_{0},1}^{-1}\right) \right) \right] \mathbb{E}\left[ \left( Y_{\mathbf{%
\theta }_{0},2}^{2\delta _{2}}\exp \left( -2B_{\mathcal{V}_{\Bs{\theta}%
_{0}}}Y_{\Bs{\theta}_{0},2}^{-1}\right) \right) \right] \right) ^{1/2}.
\end{eqnarray*}%
Since $Y_{\Bs{\theta}_{0},i}$ has a standard Fr\'{e}chet distribution,
the expectation $\mathbb{E}\left[ \left( Y_{\Bs{\theta}_{0},i}^{2\delta
_{i}}\exp \left( -2B_{\mathcal{V}_{\Bs{\theta}_{0}}}Y_{\Bs{\theta}%
_{0},i}^{-1}\right) \right) \right] $ is finite, if $2\left( \xi _{i}\beta
_{i}+\alpha \right) <1$ and $-2B_{\mathcal{V}_{\Bs{\theta}_{0}}}<1$.
But it is possible to choose $\alpha $ and $\mathcal{V}_{\Bs{\theta}%
_{0}}$ to satisfy such constraints since $\xi _{i}\beta _{i}<1/2$ and $B_{%
\mathcal{V}_{\Bs{\theta}_{0}}}$ is defined by \eqref{Eq_B}.

Finally note that the derivatives $\partial \lambda _{\Bs{\theta}}(%
\mathbf{x}_{1},\mathbf{x}_{2})/\partial \psi $ and $\partial \lambda _{%
\Bs{\theta}}(\mathbf{x}_{1},\mathbf{x}_{2})/\partial \kappa $ are
easily obtained and therefore it is straightforward to conclude that the
condition in \eqref{Eq_bound_der_lambda} also holds with such a choice
of neighbourhood $\mathcal{V}_{\Bs{\theta}_{0}}$.

\subsubsection{IPA for the Smith random field}

From \eqref{Eq_Exp_R_CorrWind_H}, there exist positive constants $C_{i}$, $%
i=1,2$, such that%
\begin{equation*}
\left\vert Y_{\Sigma }(\mathbf{x}_{i})\frac{\partial H\left( \mathbf{Y}%
_{\Sigma }\right) }{\partial y_{i}}\right\vert \leq C_{i}\left( Y_{\Sigma }(%
\mathbf{x}_{1})\vee 1\right) ^{\xi _{1}\beta _{1}}\left( Y_{\Sigma }(\mathbf{%
x}_{2})\vee 1\right) ^{\xi _{2}\beta _{2}}.
\end{equation*}

Therefore, we have for $p>1$, 
\begin{equation}
\sup_{\Sigma \in \mathcal{V}_{\Sigma _{0}}}\left\vert Y_{\Sigma }(\mathbf{x}%
_{i})\frac{\partial H\left( \mathbf{Y}_{\Sigma }\right) }{\partial y_{i}}%
\right\vert ^{p}\leq C_{i}\sup_{\Sigma \in \mathcal{V}_{\Sigma _{0}}}\left(
Y_{\Sigma }(\mathbf{x}_{1})\vee 1\right) ^{p\xi _{1}\beta _{1}}\sup_{\Sigma
\in \mathcal{V}_{\Sigma _{0}}}\left( Y_{\Sigma }(\mathbf{x}_{2})\vee
1\right) ^{p\xi _{2}\beta _{2}}.  \label{Eq_Bound_sup_derivative}
\end{equation}%
Next result will allow us to prove that Condition %
\eqref{Eq_Condition_Main_Theorem} holds.

\begin{Prop}
\label{Pr_Uniform_bound}There exists a non-random neighbourhood of $\Sigma
_{0}$, $\mathcal{V}_{\Sigma _{0}}$, such that, for any $\beta <1$ and $%
\mathbf{x\in }\mathbb{R}^{d}$, 
\begin{equation*}
\mathbb{E}\left[ \sup_{\Sigma \in \mathcal{V}_{\Sigma _{0}}}\left( Y_{\Sigma
}(\mathbf{x})\vee 1\right) ^{\beta }\right] <\infty .
\end{equation*}
\end{Prop}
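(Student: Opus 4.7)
The plan is to dominate $\sup_{\Sigma \in \mathcal{V}_{\Sigma_0}} Y_\Sigma(\mathbf{x})$ pathwise by a single max-stable variable whose law is a rescaled Fr\'echet; finiteness of the $\beta$-th moment for $\beta < 1$ then follows immediately. For $\beta \leq 0$ the claim is trivial since $(Y_\Sigma(\mathbf{x}) \vee 1)^\beta \leq 1$, so I focus on $\beta \in (0,1)$. First I would shrink $\mathcal{V}_{\Sigma_0}$ (refining if necessary the neighbourhood coming from Theorem~\ref{As_M1}) to a bounded open neighbourhood of $\Sigma_0$ whose closure lies inside the open cone of positive definite matrices. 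By continuity of the eigenvalues and of the determinant on this compact closure, there exist constants $a > 0$ and $K > 0$ such that, for every $\Sigma \in \mathcal{V}_{\Sigma_0}$ and every $\mathbf{y} \in \mathbb{R}^d$, $\varphi_M(\mathbf{y}, \Sigma) \leq K \exp(-(a/2)\|\mathbf{y}\|^2)$. Applying this uniform envelope termwise in the spectral representation \eqref{Eq_M3_Representation} yields, almost surely, $\sup_{\Sigma \in \mathcal{V}_{\Sigma_0}} Y_\Sigma(\mathbf{x}) \leq \tilde Y(\mathbf{x}) := \bigvee_{i \geq 1} U_i \tilde\varphi(\mathbf{x} - \mathbf{C}_i)$ with $\tilde\varphi(\mathbf{y}) := K \exp(-(a/2)\|\mathbf{y}\|^2)$.

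Next, I would identify the law of $\tilde Y(\mathbf{x})$ by the mapping theorem for Poisson point processes applied to $(u, \mathbf{c}) \mapsto u\,\tilde\varphi(\mathbf{x} - \mathbf{c})$: the points $\{U_i \tilde\varphi(\mathbf{x} - \mathbf{C}_i)\}_{i \geq 1}$ form a Poisson point process on $(0, \infty)$ whose tail intensity at $v$ equals $v^{-1} \int_{\mathbb{R}^d} \tilde\varphi(\mathbf{x} - \mathbf{c}) \, \mathrm{d}\mathbf{c} = \alpha/v$ with $\alpha = K(2\pi/a)^{d/2}$. Hence $\mathbb{P}(\tilde Y(\mathbf{x}) \leq v) = \exp(-\alpha/v)$, so $\tilde Y(\mathbf{x})/\alpha$ is standard Fr\'echet and $\mathbb{E}[\tilde Y(\mathbf{x})^\beta] = \alpha^\beta \Gamma(1 - \beta) < \infty$ for $\beta \in (0,1)$. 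Using the elementary inequality $(y \vee 1)^\beta \leq y^\beta + 1$, I conclude that $\mathbb{E}\bigl[\sup_{\Sigma \in \mathcal{V}_{\Sigma_0}} (Y_\Sigma(\mathbf{x}) \vee 1)^\beta\bigr] \leq \alpha^\beta \Gamma(1 - \beta) + 1 < \infty$.

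The main (and essentially only) substantive obstacle is producing the uniform Gaussian-type envelope with a single pair $(a, K)$ valid over the whole neighbourhood; this is a compactness/continuity argument on the open cone of positive definite matrices and is compatible with the constraints already imposed on $\mathcal{V}_{\Sigma_0}$ in the proof of Theorem~\ref{As_M1}, so a common neighbourhood satisfying all requirements can be chosen. The remaining steps are a direct application of the Poisson mapping theorem and elementary Fr\'echet moment calculations.
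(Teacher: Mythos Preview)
Your argument is correct. Both your proof and the paper's hinge on the same idea: dominate $Y_\Sigma(\mathbf{x})$ pathwise, uniformly over a compact neighbourhood of $\Sigma_0$, by a random variable with a Fr\'echet-type tail, and then use that Fr\'echet variables have finite $\beta$-th moments for $\beta<1$. The difference lies in how coarse the domination is.

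You retain a Gaussian envelope $\tilde\varphi(\mathbf{y})=K\exp(-(a/2)\|\mathbf{y}\|^2)$, build the dominating field $\tilde Y(\mathbf{x})=\bigvee_i U_i\tilde\varphi(\mathbf{x}-\mathbf{C}_i)$ (essentially a rescaled Smith field), and then invoke the Poisson mapping theorem to show $\tilde Y(\mathbf{x})$ is $\alpha$ times standard Fr\'echet. The paper is blunter: it simply bounds the Gaussian density by its peak value, $\varphi_M(\mathbf{y},\Sigma)\le (2\pi)^{-d/2}\det(\Sigma)^{-1/2}$, so that $Y_\Sigma(\mathbf{x})\le (2\pi)^{-d/2}\det(\Sigma)^{-1/2}\bigvee_{i\ge1}U_i$. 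Since $\bigvee_{i\ge1}U_i$ is itself standard Fr\'echet (immediate from the intensity $u^{-2}\,\mathrm{d}u$), one only needs $\sup_{\Sigma\in\mathcal{V}_{\Sigma_0}}\det(\Sigma)^{-1/2}<\infty$, which follows from continuity of the determinant on a compact neighbourhood. This avoids the mapping theorem and the uniform lower eigenvalue bound you need for $a$. Your route is a little more work but equally valid; the paper's shortcut is to throw away the spatial decay of the storm profile entirely, which costs nothing here since only a single site $\mathbf{x}$ is involved.
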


\begin{proof}
We have 
\begin{equation*}
Y_{\Sigma }(\mathbf{x})=\bigvee_{i=1}^{\infty }U_{i}\varphi _{M}(\mathbf{x}-%
\mathbf{C}_{i},\Sigma )\leq \frac{1}{\pi ^{d/2}\det \left( \Sigma \right)
^{1/2}}\bigvee_{i=1}^{\infty }U_{i}
\end{equation*}%
and thus 
\begin{equation*}
\sup_{\Sigma \in \mathcal{V}_{\Sigma _{0}}}\left( Y_{\Sigma }(\mathbf{x}%
)\vee 1\right) ^{\beta }\leq \sup_{\Sigma \in \mathcal{V}_{\Sigma
_{0}}}\left( \frac{1}{\pi ^{d/2}\det \left( \Sigma \right) ^{1/2}}\right)
^{\beta }\left( \bigvee_{i=1}^{\infty }U_{i}\vee \pi ^{d/2}\sup_{\Sigma \in 
\mathcal{V}_{\Sigma _{0}}}\det \left( \Sigma \right) ^{1/2}\right) ^{\beta }.
\end{equation*}%
It is well-know that $\bigvee_{i=1}^{\infty }U_{i}$ has a standard Fr\'{e}%
chet distribution and therefore $\mathbb{E[}(\bigvee_{i=1}^{\infty
}U_{i})^{\beta }]<\infty $ since $\beta <1$. We deduce that it suffices to
choose $\mathcal{V}_{\Sigma _{0}}$ such that $\sup_{\Sigma \in \mathcal{V}%
_{\Sigma _{0}}}\det \left( \Sigma \right) ^{-1/2}<\infty $ and $\sup_{\Sigma
\in \mathcal{V}_{\Sigma _{0}}}\det \left( \Sigma \right) ^{1/2}<\infty $.
This is possible since any symmetric invertible matrix $A$ admits a
neighbourhood $\mathcal{V}$ of matrices such that, for all $\tilde{A}\in 
\mathcal{V}$, $\det (A)-\varepsilon <\det (\tilde{A})<\det (A)+\varepsilon $%
, where $0<\varepsilon <\det (A)$. 
\end{proof}

By Proposition \ref{Pr_Uniform_bound}, $\left( \ref{Eq_Bound_sup_derivative}%
\right) $ and the H\"{o}lder inequality, there exists a non-random
neighbourhood of $\Sigma _{0}$, $\mathcal{V}_{\Sigma _{0}}$, such that for
some $p>1$ satisfying $2p\xi _{1}\beta _{1}<1$ and $2p\xi _{2}\beta _{2}<1$,
we have 
\begin{equation*}
\mathbb{E}\left[ \sup_{\Sigma \in \mathcal{V}_{\Sigma _{0}}}\left\vert
Y_{\Sigma }(\mathbf{x}_{i})\frac{\partial H\left( \mathbf{Y}_{\Sigma
}\right) }{\partial y_{i}}\right\vert ^{p}\right] \leq C_{i}\mathbb{E}\left[
\sup_{\Sigma \in \mathcal{V}_{\Sigma _{0}}}\left( Y_{\Sigma }(\mathbf{x}%
_{1})\vee 1\right) ^{2p\xi _{1}\beta _{1}}\right] ^{1/2}\mathbb{E}\left[
\sup_{\Sigma \in \mathcal{V}_{\Sigma _{0}}}\left( Y_{\Sigma }(\mathbf{x}%
_{2})\vee 1\right) ^{2p\xi _{2}\beta _{2}}\right] ^{1/2}<\infty .
\end{equation*}

\section{Analytical formulas}

\subsection{Expressions for the derivatives of $R(\Bs{\protect\theta })$
when $H$ is given by \eqref{Eq_Exp_R_CorrWind_H}}

\label{Sec_Anal_Deriv}

For $\tilde{\beta _{1}},\tilde{\beta _{2}}<1/2$, we introduce the function $%
g_{\tilde{\beta _{1}},\tilde{\beta _{2}}}$ defined by 
\begin{equation}
g_{\tilde{\beta _{1}},\tilde{\beta _{2}}}(h)=\left\{ 
\begin{array}{ll}
\Gamma (1-\tilde{\beta _{1}}-\tilde{\beta _{2}}), & \mbox{if}\quad h=0, \\ 
\displaystyle\int_{0}^{\infty }t^{\tilde{\beta _{2}}}\Big[C_{2}(t,h)\
C_{1}(t,h)^{\tilde{\beta _{1}}+\tilde{\beta _{2}}-2}\ \Gamma (2-\tilde{\beta
_{1}}-\tilde{\beta _{2}}) &  \\ 
\qquad +C_{3}(t,h)\ C_{1}(t,h)^{\tilde{\beta _{1}}+\tilde{\beta _{2}}-1}\
\Gamma (1-\tilde{\beta _{1}}-\tilde{\beta _{2}})\Big]\ \mathrm{d}t, & %
\mbox{if}\quad h>0,%
\end{array}%
\right.   \label{Eq_Def_g_beta1_beta2}
\end{equation}%
where, for $t,h>0$, 
\begin{align*}
C_{1}(t,h)& =\Phi \left( \frac{h}{2}+\frac{\log (t)}{h}\right) +\frac{1}{t}%
\Phi \left( \frac{h}{2}-\frac{\log (t)}{h}\right) , \\
C_{2}(t,h)& =\left[ \Phi \left( \frac{h}{2}+\frac{\log \left( t\right) }{h}%
\right) +\frac{1}{h}\varphi \left( \frac{h}{2}+\frac{\log (t)}{h}\right) -%
\frac{1}{ht}\varphi \left( \frac{h}{2}-\frac{\log \left( t\right) }{h}%
\right) \right]  \\
& \quad \ \times \left[ \frac{1}{t^{2}}\Phi \left( \frac{h}{2}-\frac{\log (t)%
}{h}\right) +\frac{1}{ht^{2}}\varphi \left( \frac{h}{2}-\frac{\log (t)}{h}%
\right) -\frac{1}{ht}\varphi \left( \frac{h}{2}+\frac{\log (t)}{h}\right) %
\right] , \\
C_{3}(t,h)& =\frac{1}{h^{2}t}\left( \frac{h}{2}-\frac{\log (t)}{h}\right) \
\varphi \left( \frac{h}{2}+\frac{\log (t)}{h}\right) +\frac{1}{h^{2}t^{2}}%
\left( \frac{h}{2}+\frac{\log (t)}{h}\right) \varphi \left( \frac{h}{2}-%
\frac{\log \left( t\right) }{h}\right) ,
\end{align*}%
with $\Phi $ and $\varphi $ denoting the standard univariate Gaussian
distribution and density functions, respectively.

Let $\mathbf{x}_{1},\mathbf{x}_{2}\in \mathbb{R}^{2}$ be our sites of
interest and $Y_{\Bs{\theta}}$ be a Brown--Resnick random field with
dependence parameters $\psi$ and $\kappa$ gathered in $\Bs{\theta}$, i.e., $\Bs{\theta}=\left( \psi ,\kappa
\right) ^{\prime }$. Let $\Bs{\theta}%
_{0}$ be the parameter at which we wish to compute the derivative of $R(%
\Bs{\theta})$. As before, let $\eta _{i},\tau _{i},\xi _{i}$ be the
location, scale and shape parameters of $X_{\Bs{\theta}}(\mathbf{x}%
_{i}) $, $i=1,2$, and $\beta _{i}=\beta (\mathbf{x}_{i})\in \mathbb{N}_{\ast
}$ such that $\beta _{i}\xi _{i}<1/2$. Theorem 2 in \cite%
{koch2018spatialpowers} yields 
\begin{align}
& \quad \ \mbox{Corr}\left( X^{\beta
_{1}}_{\Bs{\theta}}(\mathbf{x}_{1}),X^{\beta _{2}}_{\Bs{\theta}}(\mathbf{x}_{2})\right)  \notag \\
& =\frac{1}{\sqrt{D_{\beta _{1},\eta _{1},\tau _{1},\xi _{1}}D_{\beta
_{2},\eta _{2},\tau _{2},\xi _{2}}}}\Bigg[\sum_{k_{1}=0}^{\beta
_{1}}\sum_{k_{2}=0}^{\beta _{2}}B_{k_{1},\beta _{1},\eta _{1},\tau _{1},\xi
_{1},k_{2},\beta _{2},\eta _{2},\tau _{2},\xi _{2}}\ g_{(\beta
_{1}-k_{1})\xi _{1},(\beta _{2}-k_{2})\xi _{2}}\left( \sqrt{2\gamma _{%
\Bs{\theta}}(\mathbf{x}_{2}-\mathbf{x}_{1})}\right)  \notag \\
& \quad -\sum_{k_{1}=0}^{\beta _{1}}\sum_{k_{2}=0}^{\beta
_{2}}B_{k_{1},\beta _{1},\eta _{1},\tau _{1},\xi _{1},k_{2},\beta _{2},\eta
_{2},\tau _{2},\xi _{2}}\ \Gamma (1-[\beta _{1}-k_{1}]\xi _{1})\Gamma
(1-[\beta _{2}-k_{2}]\xi _{2})\Bigg],  \label{Eq_DetailedExprDepenMeasWind}
\end{align}%
where 
\begin{equation*}
B_{k_{1},\beta _{1},\eta _{1},\tau _{1},\xi _{1},k_{2},\beta _{2},\eta
_{2},\tau _{2},\xi _{2}}={\binom{\beta _{1}}{k_{1}}}\left( \eta _{1}-\frac{%
\tau _{1}}{\xi _{1}}\right) ^{k_{1}}\left( \frac{\tau _{1}}{\xi _{1}}\right)
^{\beta _{1}-k_{1}}{\binom{\beta _{2}}{k_{2}}}\left( \eta _{2}-\frac{\tau
_{2}}{\xi _{2}}\right) ^{k_{2}}\left( \frac{\tau _{2}}{\xi _{2}}\right)
^{\beta _{2}-k_{2}}.
\end{equation*}%
This yields 
\begin{equation}
\frac{\partial \mbox{Corr}\left( X^{\beta
_{1}}_{\Bs{\theta}}(\mathbf{x}_{1}),X^{\beta _{2}}_{\Bs{\theta}}(\mathbf{x}_{2})\right) }{\partial 
\Bs{\theta}}=\frac{\sum_{k_{1}=0}^{\beta _{1}}\sum_{k_{2}=0}^{\beta
_{2}}B_{k_{1},\beta _{1},\eta _{1},\tau _{1},\xi _{1},k_{2},\beta _{2},\eta
_{2},\tau _{2},\xi _{2}}\ \frac{\partial g_{(\beta _{1}-k_{1})\xi
_{1},(\beta _{2}-k_{2})\xi _{2}}\left( \sqrt{2\gamma _{\Bs{\theta}}(%
\mathbf{x}_{2}-\mathbf{x}_{1})}\right) }{\partial \Bs{\theta}}\Big \vert%
_{\Bs{\theta}=\Bs{\theta}_{0}}}{\sqrt{D_{\beta _{1},\eta _{1},\tau
_{1},\xi _{1}}D_{\beta _{2},\eta _{2},\tau _{2},\xi _{2}}}}.
\label{Eq_Deriv_Risk_Measure}
\end{equation}

We obtain from \eqref{Eq_Def_g_beta1_beta2} that 
\begin{align}
& \quad \frac{\partial g_{\tilde{\beta _{1}},\tilde{\beta _{2}}}\left( \sqrt{%
2\gamma _{\Bs{\theta}}(\mathbf{x}_{2}-\mathbf{x}_{1})}\right) }{%
\partial \Bs{\theta}}\Big \vert_{\Bs{\theta}=\Bs{\theta}_{0}}
\notag \\
& =\int_{0}^{\infty }\frac{\partial }{\partial h}\Big (t^{\tilde{\beta _{2}}}%
\Big[C_{2}(t,h)\ C_{1}(t,h)^{\tilde{\beta _{1}}+\tilde{\beta _{2}}-2}\
\Gamma (2-\tilde{\beta _{1}}-\tilde{\beta _{2}})+C_{3}(t,h)\ C_{1}(t,h)^{%
\tilde{\beta _{1}}+\tilde{\beta _{2}}-1}\ \Gamma (1-\tilde{\beta _{1}}-%
\tilde{\beta _{2}})\Big]\Big)\Big \vert_{h=h_{0}}\mathrm{d}t  \notag
\\
& \times \frac{\partial \sqrt{2\gamma _{\Bs{\theta}}(\mathbf{x}_{2}-%
\mathbf{x}_{1})}}{\partial \Bs{\theta}}\Big \vert_{\Bs{\theta}=\mathbf{%
\theta }_{0}},  \label{Eq_Deriv_gs_Numerical_study}
\end{align}%
where $h_{0}=\sqrt{2\gamma _{\Bs{\theta}_{0}}(\mathbf{x}_{2}-\mathbf{x}%
_{1})}$. The term 
\begin{equation*}
\frac{\partial }{\partial h}\Big (t^{\tilde{\beta _{2}}}\Big[C_{2}(t,h)\
C_{1}(t,h)^{\tilde{\beta _{1}}+\tilde{\beta _{2}}-2}\ \Gamma (2-\tilde{\beta
_{1}}-\tilde{\beta _{2}})+C_{3}(t,h)\ C_{1}(t,h)^{\tilde{\beta _{1}}+\tilde{%
\beta _{2}}-1}\ \Gamma (1-\tilde{\beta _{1}}-\tilde{\beta _{2}})\Big]\Big)%
\Big \vert_{h=h_{0}}
\end{equation*}%
has a closed (although complicated) expression (available upon request). The corresponding integral can be computed using numerical
methods such as adaptive quadrature. 
Finally, we obtain the true values of $\partial R(\Bs{\theta})/\partial 
\Bs{\theta}|_{\Bs{\theta}=\Bs{\theta}_{0}}$ by plugging the
appropiate values of \eqref{Eq_Deriv_gs_Numerical_study} in %
\eqref{Eq_Deriv_Risk_Measure}.

Now, if $Y_{\Sigma}$ is the Smith random field with covariance
matrix $\Sigma $ and $\Sigma _{0}$ is a symmetric positive-definite matrix
at which we want to compute the derivative of $R\left( \Bs{\theta}%
\right) $, exactly the same formulas and procedure can be applied upon
replacement of $\Bs{\theta}$ with $\Sigma $, $\Bs{\theta}_{0}$
with $\Sigma _{0}$, and $\gamma _{\Bs{\theta}}(\mathbf{x}_{2}-\mathbf{x}%
_{1})$ with $(\mathbf{x}_{2}-\mathbf{x}_{1})^{\prime }\Sigma ^{-1}(\mathbf{x}%
_{2}-\mathbf{x}_{1})/2$.

\subsection{Proportionality of the components of the score function for the
Brown--Resnick random field}

\label{Sect_Proport_BR}

The bivariate density of the (simple) Brown--Resnick random field (at $\mathbf{%
x}_{1}$ and $\mathbf{x}_{2} \in \Mbb{R}^2$) satisfies, for $y_{1},y_{2}>0$, 
\begin{align}
f_{\Bs{\theta}}(y_{1},y_{2})=\exp \left( -\frac{\Phi (w)}{y_{1}}-\frac{%
\Phi (v)}{y_{2}}\right) \times \bigg[\left( \frac{\Phi (w)}{y_{1}^{2}}+\frac{%
\varphi (w)}{hy_{1}^{2}}-\frac{\varphi (v)}{hy_{1}y_{2}}\right) & \times
\left( \frac{\Phi (v)}{y_{2}^{2}}+\frac{\varphi (v)}{hy_{2}^{2}}-\frac{%
\varphi (w)}{hy_{1}y_{2}}\right)   \notag \\
& +\left( \frac{v\varphi (w)}{h^{2}y_{1}^{2}y_{2}}+\frac{w\varphi (v)}{%
h^{2}y_{1}y_{2}^{2}}\right) \bigg],
\label{Chapter_RiskMeasure_Smith_Bivariate_Density}
\end{align}%
where 
\begin{equation*}
h=\sqrt{2\gamma _{\Bs{\theta}}(\mathbf{x}_{2}-\mathbf{x}_{1})}=\sqrt{2}%
\left( \frac{\Vert \mathbf{x}_{2}-\mathbf{x}_{1}\Vert }{\kappa }\right)
^{\psi /2},\quad w=\frac{h}{2}+\frac{\log \left( y_{2}/y_{1}\right) }{h}%
\quad \mbox{and}\quad v=\frac{h}{2}-\frac{\log \left( y_{2}/y_{1}\right) }{h}%
.
\end{equation*}%
This easily yields 
\begin{equation}
\frac{\partial \log f_{\Bs{\theta}}(y_{1},y_{2})}{\partial \psi }\Big
/\frac{\partial \log f_{\Bs{\theta}}(y_{1},y_{2})}{\partial \kappa }=%
\frac{\partial h}{\partial \psi }\Big/\frac{\partial h}{\partial \kappa }=-%
\frac{\kappa }{\psi }\log \left( \frac{\Vert \mathbf{x}_{2}-\mathbf{x}%
_{1}\Vert }{\kappa }\right) ,  \label{Eq_FacPropor}
\end{equation}%
and it follows from \eqref{Eq_ExpressionDerivativeLRM} that the LRM
estimates of $\partial R(\Bs{\theta})/\partial \psi $ and $\partial R(%
\Bs{\theta})/\partial \kappa $ are proportional by the factor given in
the right-hand side of \eqref{Eq_FacPropor}.

\newpage 

\bibliographystyle{apalike} 
\bibliography{References_Erwan}

\begin{thebibliography}{}

\bibitem[Asadi et~al., 2015]{asadi2015extremes}
Asadi, P., Davison, A.~C., and Engelke, S. (2015).
\newblock Extremes on river networks.
\newblock {\em The Annals of Applied Statistics}, 9(4):2023--2050.
\newblock \newline \url{https://doi.org/10.1214/15-AOAS863}.

\bibitem[Asmussen and Glynn, 2010]{asmussen2010}
Asmussen, S. and Glynn, P.~W. (2010).
\newblock {\em Stochastic Simulation: Algorithms and Analysis}.
\newblock Springer-Verlag New York.

\bibitem[Broadie and Glasserman, 1996]{broadie1996estimating}
Broadie, M. and Glasserman, P. (1996).
\newblock Estimating security price derivatives using simulation.
\newblock {\em Management Science}, 42(2):269--285.

\bibitem[Brown and Resnick, 1977]{brown1977extreme}
Brown, B.~M. and Resnick, S.~I. (1977).
\newblock Extreme values of independent stochastic processes.
\newblock {\em Journal of Applied Probability}, 14(4):732--739.
\newblock \newline \url{https://doi.org/10.2307/3213346}.

\bibitem[Chen and Fu, 2001]{chen2001efficient}
Chen, J. and Fu, M.~C. (2001).
\newblock Efficient sensitivity analysis of mortgage backed securities.
\newblock In {\em 12th Annual Derivatives Securities Conference, New York}.

\bibitem[Coles, 2001]{coles2001introduction}
Coles, S. (2001).
\newblock {\em An {I}ntroduction to {S}tatistical {M}odeling of {E}xtreme
  {V}alues}.
\newblock Springer London.

\bibitem[Davison et~al., 2012]{davison2012statistical}
Davison, A.~C., Padoan, S.~A., and Ribatet, M. (2012).
\newblock Statistical modeling of spatial extremes.
\newblock {\em Statistical Science}, 27(2):161--186.
\newblock \newline \url{https://doi.org/10.1214/11-STS376}.

\bibitem[de~Haan, 1984]{haan1984spectral}
de~Haan, L. (1984).
\newblock A spectral representation for max-stable processes.
\newblock {\em The Annals of Probability}, 12(4):1194--1204.
\newblock \newline \url{https://doi.org/10.1214/aop/1176993148}.

\bibitem[de~Haan and Ferreira, 2006]{de2007extreme}
de~Haan, L. and Ferreira, A. (2006).
\newblock {\em Extreme {V}alue {T}heory: {A}n {I}ntroduction}.
\newblock Springer-Verlag New York.
\newblock \newline \url{https://doi.org/10.1007/0-387-34471-3}.

\bibitem[Dombry et~al., 2016]{dombry2016exact}
Dombry, C., Engelke, S., and Oesting, M. (2016).
\newblock Exact simulation of max-stable processes.
\newblock {\em Biometrika}.

\bibitem[Dombry et~al., 2017]{dombry2017asymp}
Dombry, C., Engelke, S., and Oesting, M. (2017).
\newblock Asymptotic properties of the maximum likelihood estimator for
  multivariate extreme value distributions.
\newblock {\em arXiv preprint arXiv:1612.05178}.

\bibitem[Dombry and Eyi-Minko, 2013]{Dombry2013}
Dombry, C. and Eyi-Minko, F. (2013).
\newblock Regular conditional distributions of continuous max-infinitely
  divisible random fields.
\newblock {\em Electronic Journal of Probability}, 18(0).

\bibitem[Dombry et~al., 2013]{dombry2013conditional}
Dombry, C., Eyi-Minko, F., and Ribatet, M. (2013).
\newblock Conditional simulation of max-stable processes.
\newblock {\em Biometrika}, 100(1):111--124.
\newblock \newline \url{https://doi.org/10.1093/biomet/ass067}.

\bibitem[Dombry and Kabluchko, 2018]{Dombry2018}
Dombry, C. and Kabluchko, Z. (2018).
\newblock Random tessellations associated with max-stable random fields.
\newblock {\em Bernoulli}, 24(1):30--52.

\bibitem[Dwyer, 1967]{dwyer1967some}
Dwyer, P.~S. (1967).
\newblock Some applications of matrix derivatives in multivariate analysis.
\newblock {\em Journal of the American Statistical Association},
  62(318):607--625.

\bibitem[Emanuel, 2005]{emanuel2005increasing}
Emanuel, K. (2005).
\newblock Increasing destructiveness of tropical cyclones over the past 30
  years.
\newblock {\em Nature}, 436(4):686--688.
\newblock \newline \url{https://doi.org/10.1038/nature03906}.

\bibitem[Genton et~al., 2011]{genton2011likelihood}
Genton, M.~G., Ma, Y., and Sang, H. (2011).
\newblock On the likelihood function of {G}aussian max-stable processes.
\newblock {\em Biometrika}, 98(2):481--488.

\bibitem[Glasserman, 2003]{glasserman2013monte}
Glasserman, P. (2003).
\newblock {\em {M}onte {C}arlo {M}ethods in {F}inancial {E}ngineering}.
\newblock Springer-Verlag New York.

\bibitem[Glasserman and Liu, 2010]{glasserman2010sensitivity}
Glasserman, P. and Liu, Z. (2010).
\newblock Sensitivity estimates from characteristic functions.
\newblock {\em Operations Research}, 58(6):1611--1623.

\bibitem[Huser and Davison, 2013]{huser2013composite}
Huser, R. and Davison, A.~C. (2013).
\newblock Composite likelihood estimation for the {B}rown--{R}esnick process.
\newblock {\em Biometrika}, 100(2):511--518.
\newblock \newline \url{https://doi.org/10.1093/biomet/ass089}.

\bibitem[Huser et~al., 2019]{huser2019full}
Huser, R., Dombry, C., Ribatet, M., and Genton, M.~G. (2019).
\newblock Full likelihood inference for max-stable data.
\newblock {\em Stat}, 8(1):e218.
\newblock \newline \url{https://doi.org/10.1002/sta4.218}.

\bibitem[Kabluchko et~al., 2009]{kabluchko2009stationary}
Kabluchko, Z., Schlather, M., and de~Haan, L. (2009).
\newblock Stationary max-stable fields associated to negative definite
  functions.
\newblock {\em The Annals of Probability}, 37(5):2042--2065.
\newblock \newline \url{https://doi.org/10.1214/09-AOP455}.

\bibitem[Kantha, 2008]{kantha2008tropical}
Kantha, L. (2008).
\newblock Tropical cyclone destructive potential by integrated kinetic energy.
\newblock {\em Bulletin of the American Meteorological Society},
  89(2):219--221.

\bibitem[Koch, 2017]{koch2017spatial}
Koch, E. (2017).
\newblock Spatial risk measures and applications to max-stable processes.
\newblock {\em Extremes}, 20(3):635--670.
\newblock \newline \url{https://doi.org/10.1007/s10687-016-0274-0}.

\bibitem[Koch, 2019]{koch2018spatialpowers}
Koch, E. (2019).
\newblock Spatial risk measures induced by powers of max-stable random fields.
\newblock {\em In revision for Extremes. Available at
  \url{https://arxiv.org/pdf/1804.05694v1.pdf}}.

\bibitem[Lamb and Frydendahl, 1991]{lamb1991historic}
Lamb, H. and Frydendahl, K. (1991).
\newblock {\em Historic Storms of the North Sea, British Isles and Northwest
  Europe}.
\newblock Cambridge University Press.

\bibitem[Oesting et~al., 2018]{oesting2018exact}
Oesting, M., Schlather, M., and Zhou, C. (2018).
\newblock Exact and fast simulation of max-stable processes on a compact set
  using the normalized spectral representation.
\newblock {\em Bernoulli}, 24(2):1497--1530.
\newblock \newline \url{https://doi.org/10.3150/16-BEJ905}.

\bibitem[Opitz, 2013]{opitz2013}
Opitz, T. (2013).
\newblock Extremaltprocesses: Elliptical domain of attraction and a spectral
  representation.
\newblock {\em Journal of Multivariate Analysis}, 122:409--413.

\bibitem[Padoan et~al., 2010]{padoan2010likelihood}
Padoan, S.~A., Ribatet, M., and Sisson, S.~A. (2010).
\newblock Likelihood-based inference for max-stable processes.
\newblock {\em Journal of the American Statistical Association},
  105(489):263--277.
\newblock \newline \url{https://doi.org/10.1198/jasa.2009.tm08577}.

\bibitem[Peng et~al., 2018]{peng2018new}
Peng, Y., Fu, M.~C., Hu, J.-Q., and Heidergott, B. (2018).
\newblock A new unbiased stochastic derivative estimator for discontinuous
  sample performances with structural parameters.
\newblock {\em Operations Research}, 66(2):487--499.
\newblock \newline \url{https://doi.org/10.1287/opre.2017.1674}.

\bibitem[Prahl et~al., 2015]{prahl2015comparison}
Prahl, B.~F., Rybski, D., Burghoff, O., and Kropp, J.~P. (2015).
\newblock Comparison of storm damage functions and their performance.
\newblock {\em Natural Hazards and Earth System Sciences}, 15:769--788.
\newblock \newline \url{https://doi.org/10.5194/nhess-15-769-2015}.

\bibitem[Prahl et~al., 2012]{prahl2012applying}
Prahl, B.~F., Rybski, D., Kropp, J.~P., Burghoff, O., and Held, H. (2012).
\newblock Applying stochastic small-scale damage functions to german winter
  storms.
\newblock {\em Geophysical Research Letters}, 39(6).
\newblock \newline \url{https://doi.org/10.1029/2012GL050961}.

\bibitem[Ribatet et~al., 2018]{PackageSpatialExtremes}
Ribatet, M., Singleton, R., and {R Core team} (2018).
\newblock Spatial{E}xtremes: {M}odelling {S}patial {E}xtremes.
\newblock {\em R package version 2.0-7}.

\bibitem[Schlather, 2002]{schlather2002models}
Schlather, M. (2002).
\newblock Models for stationary max-stable random fields.
\newblock {\em Extremes}, 5(1):33--44.

\bibitem[Schlather and Tawn, 2003]{schlather2003dependence}
Schlather, M. and Tawn, J.~A. (2003).
\newblock A dependence measure for multivariate and spatial extreme values:
  Properties and inference.
\newblock {\em Biometrika}, 90(1):139--156.
\newblock \newline \url{https://doi.org/10.1093/biomet/90.1.139}.

\bibitem[Simiu and Scanlan, 1996]{simiu1996wind}
Simiu, E. and Scanlan, R.~H. (1996).
\newblock {\em Wind Effects on Structures: Fundamentals and Applications to
  Design}.
\newblock John Wiley \& Sons, Inc.

\bibitem[Smith, 1990]{smith1990max}
Smith, R.~L. (1990).
\newblock Max-stable processes and spatial extremes.
\newblock {\em Unpublished manuscript}, University of North Carolina.

\bibitem[Varin and Vidoni, 2005]{varin2005note}
Varin, C. and Vidoni, P. (2005).
\newblock A note on composite likelihood inference and model selection.
\newblock {\em Biometrika}, 92(3):519--528.
\newblock \newline \url{https://doi.org/10.1093/biomet/92.3.519}.

\end{thebibliography}

\end{document}